\newcommand{\cC}{\mathcal{C}}
\newcommand{\cA}{\mathcal{A}}
\newcommand{\cP}{\mathcal{P}}
\newcommand{\cB}{\mathcal{B}}
\newcommand{\cS}{\mathcal{S}}
\newcommand{\cT}{\mathcal{T}}
\newcommand{\cM}{\mathcal{M}}
\newcommand{\fM}{\mathfrak{M}}
\newcommand{\N}{\mathbb{N}}
\newcommand{\eword}{\varepsilon}%
\newcommand{\subword}{\preccurlyeq}%
\DeclareDocumentCommand{\Adh}{O{} m}{\mathsf{Adh}_{#1}(#2)}
\DeclareDocumentCommand{\Dclosure}{O{} m}{\mathord{\downarrow}_{#1}#2}
\DeclareDocumentCommand{\Uclosure}{O{} m}{\mathord{\uparrow}_{#1}#2}
\newcommand{\langof}[1]{L(#1)}
\DeclareDocumentCommand{\seqof}{O{} m}{\mathsf{Runs}_{#1}(#2)}
\newcommand{\transof}[1]{T(#1)}
\newcommand{\autstep}[1][YYY]{\ifthenelse{\equal{#1}{YYY}}{\rightarrow}{\rightarrow_{#1}}}
\newcommand{\autsteps}[1][YYY]{\ifthenelse{\equal{#1}{YYY}}{\xrightarrow{*}}{\xrightarrow{*}_{#1}}}
\DeclareDocumentCommand{\MOD}{O{}}{\mathsf{mod}_{#1}}
\DeclareDocumentCommand{\REG}{}{\mathsf{reg}}
\DeclareDocumentCommand{\LTT}{O{}}{\mathsf{LTT}_{#1}}
\DeclareDocumentCommand{\BSS}{O{}}{\ifthenelse{\isempty{#1}}{\mathcal{B}\Sigma_1[\mathord{<}]}{\mathcal{B}\Sigma_1[\mathord{<},#1]}}
\newcommand{\gord}{\preceq}
\newcommand{\ford}[1]{\le_{#1}}
\newcommand{\embedord}{\hookrightarrow}
\newcommand{\autord}[1]{\preceq_{#1}}
\newcommand{\modord}[1]{\preceq_{#1}}
\newcommand{\cord}[1]{\preceq_{#1}}
\newcommand{\morord}[1]{\preceq_{#1}}
\newcommand{\DesImp}[2]{``\labelcref{#1}$\Rightarrow$\labelcref{#2}''}
\newcommand{\Powerset}[1]{\mathcal{P}(#1)}
\theoremstyle{plain}                                                                           
\newtheorem{thm}{Theorem}[section]                                                             
\newtheorem{prop}[thm]{Proposition}                                                            
\newtheorem{lem}[thm]{Lemma}                                                                   
\newtheorem{cor}[thm]{Corollary}                                                               
\newtheorem{obs}[thm]{Observation}                                                               
\newtheorem{prule}{Rule}
\theoremstyle{remark}
\newtheorem{rem}[thm]{Remark}
\crefname{thm}{Theorem}{Theorems}                                                              
\Crefname{thm}{Theorem}{Theorems}                                                              
\crefname{prop}{Proposition}{Propositions}                                                     
\Crefname{prop}{Proposition}{Propositions}                                                     
\crefname{lem}{Lemma}{Lemmas}                                                                  
\Crefname{lem}{Lemma}{Lemmas}                                                                  
\crefname{cor}{Corollary}{Corollaries}                                                         
\Crefname{cor}{Corollary}{Corollaries}  
\crefname{rem}{Remark}{Remarks}                                                         
\Crefname{rem}{Remark}{Remarks}  
\crefname{ex}{Example}{Examples}                                                         
\Crefname{ex}{Example}{Examples}  
\crefname{prule}{Rule}{Rules}                                                         
\Crefname{prule}{Rule}{Rules}  
\crefname{obs}{Observation}{Observations}                                                         
\Crefname{obs}{Observation}{Observations}
\begin{document}

\title{PTL-separability and closures for WQOs on words} 

\author{Georg Zetzsche}
\address{IRIF (Uniersit\'{e} Paris-Diderot, CNRS), France, \texttt{zetzsche@irif.fr}}
\thanks{Supported by a fellowship of the Fondation Sciences Math\'{e}matiques de Paris.}

\begin{abstract}
  We introduce a flexible class of well-quasi-orderings (WQOs) on
  words that generalizes the ordering of (not necessarily contiguous)
  subwords.
  Each such WQO induces a class of piecewise testable languages (PTLs)
  as Boolean combinations of upward closed sets. In this way, a range
  of regular language classes arises as PTLs. Moreover, each of the
  WQOs guarantees regularity of all downward closed sets.
  We consider two problems. First, we study which (perhaps
  non-regular) language classes permit a decision procedure to decide
  whether two given languages are separable by a PTL with respect to a
  given WQO. Second, we want to effectively compute downward closures
  with respect to these WQOs.
  Our first main result that for each of the WQOs, under mild
  assumptions, both problems reduce to the simultaneous unboundedness
  problem (SUP) and are thus solvable for many powerful system
  classes. In the second main result, we apply the framework to show
  decidability of separability of regular languages by
  $\mathcal{B}\Sigma_1[<, \mathsf{mod}]$, a fragment of
  first-order logic with modular predicates.
\end{abstract}

\maketitle

\section{Introduction}
In the verification of infinite-state systems, it is often useful to
construct finite-state abstractions.  This is because finite-state
systems are much more amenable to analysis.  For example, if a
pertinent property of our system is reflected in a finite-state
abstraction, then we can work with the abstraction instead of the
infinite-state system itself. Another example is that the abstraction
acts as a certificate for correctness: A violation free
overapproximation of the set of behaviors of a system certifies
absence of violations in the system itself.
Here, we study two types of such abstractions: \emph{downward
  closures}, which are overapproximations of individual languages and
\emph{separators} as certificates of disjointness.

\subsection*{Downward closures} A particularly appealing abstraction
is the \emph{downward closure}, the set of all (not necessarily
contiguous) subwords of the members of a language. What makes this
abstraction interesting is that since the subword ordering is a
well-quasi-ordering (WQO), the downward closure of \emph{any} language
is regular~\cite{Higman1952,Haines1969}. Recently, there has been
progress on when the downward closure is not only regular but can also
be effectively computed. It is known that downward closures are
computable for \emph{context-free
  languages}~\cite{Courcelle1991,vanLeeuwen1978}, \emph{Petri net
  languages}~\cite{HabermehlMeyerWimmel2010}, and \emph{stacked
  counter automata}~\cite{Zetzsche2015a}. Moreover, recently, a
general sufficient condition for computability
 was presented in
\cite{Zetzsche2015b}. Using the latter, downward closures were then
shown to be computable for \emph{higher-order pushdown
  automata}~\cite{HagueKochemsOng2016} and \emph{higher-order
  recursion
  schemes}~\cite{ClementeParysSalvatiWalukiewicz2016}. Hence, downward
closures are computable for very powerful models.

If we want to use downward closures to prove absence of violations,
then using the downward closure in this way has the disadvantage that
it is not obvious how to refine it, i.e. systematically construct a
more precise overapproximation in case the current one does not
certify absence of violations. Therefore, we wish to find abstractions
that are refinable in a flexible way and still guarantee regularity
and computability.

\subsection*{Separability}
Another type of finite-state abstractions is that of separators.
Since safety properties of multi-threaded programs can often be
formulated as the disjointness of two languages, one
approach to this task is to use regular languages to certify
disjointness~\cite{BouajjaniEsparzaTouili2003,ChakiEtAl2006,LongCalinMajumdarMeyer2012}.
A \emph{separator} of two languages $K$ and $L$ is a set $S$ such that
$K\subseteq S$ and $L\cap S=\emptyset$.  Therefore, especially in
cases where disjointness of languages is undecidable or hard, it would
be useful to have a decision procedure for the \emph{separability
  problem}: Given two languages, it asks whether they are separable by
a language from a particular class of separators.  In particular, if
we want to apply such algorithms to infinite-state systems, it would
be desirable to find large classes of separators (and systems) for
which the separability problem is decidable.

It has long been known that separability of context-free languages are
undecidable already for very simple classes of regular
languages~\cite{SzymanskiWilliams1976,Hunt1982} and this stifled hope
that separability would be decidable for any interesting classes of
infinite-state systems and classes of separators. However, the subword
ordering turned out again to have excellent decidability properties:
It was shown recently that for a wide range of language classes, it is
decidable whether two given languages are separable by a piecewise
testable language (PTL)~\cite{CzerwinskiMartensRooijenZeitoun2015}.  A
PTL is a finite Boolean combination of upward closures (with respect
to the subword ordering) of single words.  In fact, in turned out that
(under mild closure assumptions) separability by PTL is decidable if
and only if downward closures are
computable~\cite{CzerwinskiMartensRooijenZeitounZetzsche2017a}.

However, while this separability result applies to very expressive
models of infinite-state systems, it is still limited in terms of the
separators: The small class of PTL will not always suffice as
disjointness certificates. 

\subsection*{Contribution} This work makes two contributions, a
conceptual one and a technical one.  The conceptual contribution is
the introduction of a fairly flexible class of WQOs on words. These
are refinable and provide generalizations of the subword
ordering. These orders are parameterized by transducers, counter
automata or other objects and can be chosen to reflect various
properties of words. Moreover, the classes of corresponding PTLs are a
surprisingly rich collection of classes of regular languages.

Moreover, it is shown that all these orders have the same pleasant
properties in terms of downward closure computation and decidability
of PTL-separability as the subword ordering. More specifically, it is
shown that (under mild assumptions), decidability of the
abovementioned unboundedness problem again
characterizes \begin{enumerate*} \item those language classes for
  which downward closures are computable and \item those classes where
  separability by PTL is decidable.\end{enumerate*}

In addition, it turns out that this framework can also be used to
obtain decidable separability of regular languages by $\BSS[\MOD]$, a
fragment of first-order logic with modular predicates. This is
technically relatively involved and generalizes the fact that
definability of regular languages in
$\BSS[\MOD]$ is decidable~\cite{ChaubardPinStraubing2006}.
\section{Preliminaries}
If $\Sigma$ is an alphabet, $\Sigma^*$ denotes the set of words over
$\Sigma$.  The empty word is denoted by $\eword\in \Sigma^*$.  A
\emph{quasi-order} is an ordering that is reflexive and transitive.
An ordering $(X,\gord)$ is called a \emph{well-quasi-ordering (WQO)}
if for every sequence $x_1,x_2,\ldots\in X$, there are indices $i<j$
with $x_i\gord x_j$. This is equivalent to requiring that every
sequence $x_1,x_2,\ldots\in X$ contains an infinite subsequence $x'_1,x'_2,\ldots\in X$ that is \emph{ascending}, meaning $x'_i\gord x'_j$ for $i\le j$. For
a subset $L\subseteq X$, we define $\Dclosure[\gord]{L}=\{x\in X \mid
\exists y\in L\colon x\gord y\}$ and $\Uclosure[\gord]{L}=\{x\in X
\mid \exists y\in L\colon y\gord x\}$.  These are called the
\emph{downward closure} and \emph{upward closure} of $L$,
respectively.  A set $L\subseteq X$ is called \emph{downward closed}
(\emph{upward closed}) if $\Dclosure[\gord]{L}=L$
($\Uclosure[\gord]{L}=L$). A (defining) property of
well-quasi-orderings is that for every non-empty upward-closed set
$U$, there are finitely many elements $x_1,\ldots,x_n\in U$ such that
$U=\Uclosure[\gord]{\{x_1,\ldots,x_n\}}$.  See~\cite{Kruskal1972} for
an introduction. An ordering $(\Sigma^*,\gord)$ on words is called
\emph{multiplicative} if $u_1\gord v_1$ and $u_2\gord v_2$ implies
$u_1u_2\gord v_1v_2$.

For words $u,v\in\Sigma^*$, we write $u\subword v$ if $u=u_1\cdots
u_n$ and $v=v_0u_1 v_1\cdots u_nv_n$ for some
$u_1,\ldots,u_n,v_0,\ldots,v_n\in\Sigma^*$. This ordering is called
the \emph{subword ordering} and it is well-known that this is a
well-quasi-ordering~\cite{Higman1952}.  

A well-studied class of regular languages is that of the piecewise
testable languages. Classically, a language $L\subseteq\Sigma^*$ is a
\emph{piecewise testable language (PTL)}~\cite{Simon1975} if it is a
finite Boolean combination of sets of the form
$\Uclosure[\subword]{w}$ for $w\in\Sigma^*$.  However, this notion
makes sense for any WQO $(X,\gord)$~\cite{GoubaultLarrecqSchmitz2016}
and we call a set $L\subseteq X$ \emph{piecewise testable} if it is a
finite Boolean combination of sets $\Uclosure[\gord]{x}$ for $x\in X$.

A \emph{(finite-state)
  transducer} is a finite automaton where every edge reads input and
produces output.  For a transducer $T$ and a language $L$, the
language $T L$ consists of all words output by the transducer while
reading a word from $L$.  A class of languages $\cC$ is called a
\emph{full trio} if it is effectively closed under rational
transductions, i.e. if $TL\in\cC$ for each $L\in\cC$ and each rational
transduction $T$.

\section{Parameterized WQOs and main results}\label{results}
In this \lcnamecref{results}, we introduce the parameterized WQOs on
words, state the main results of this work, and present some
applications. We define the class of parameterized WQOs inductively
using rules
(\cref{prule:subword,prule:transductions,prule:conjunctions}). The
simplest example is Higman's subword ordering.

\begin{prule}\label{prule:subword}
For each $\Sigma$, $(\Sigma^*,\subword)$ is a parameterized WQO.
\end{prule}

\subsection*{Orderings defined by transducers} To make things more
interesting, we have a type of WQOs that are defined by functions.
Suppose $X$ and $Y$ are sets and we have a function $f\colon X\to Y$.
A general way of constructing a WQO on $X$ is to take a WQO
$(Y,\gord)$ and set $x\gord_f x'$ if and only if $f(x)\gord f(x')$. It
is immediate from the definition that then $\gord_f$ is a WQO on
$X$. We apply this idea to transducers.

 A \emph{finite-state transducer} over $\Sigma$ and $\Gamma$ is a
tuple $\cT=(Q,\Sigma,\Gamma,E,I,F)$, where $Q$ is a finite set of states,
$E\subseteq Q\times
(\Sigma\cup\{\eword\})\times(\Gamma\cup\{\eword\})\times Q$ is its set
of \emph{edges}, $I\subseteq Q$ is the set of \emph{initial states}, and
$F\subseteq Q$ is the set of \emph{final states}. Transducers accept sets of
pairs of words. A \emph{run} of $\cT$ is a sequence
\[ (q_0,u_1,v_1,q_1)(q_1,u_2,v_2,q_2)\cdots (q_{n-1},u_n,v_n,q_n) \]
 of
edges such that $q_0\in I$, $q_n\in F$. The pair \emph{read by the run}
is $(u_1\cdots u_n,v_1\cdots v_n)$. Then, $\cT$ \emph{realizes} the relation
\[ \transof{\cT}=\{(u,v)\in\Sigma^*\times\Gamma^* \mid \text{$(u,v)$ is read by a run of $\cT$}\}. \]
Relations of this form are called \emph{rational transductions}.  A
transduction is \emph{functional} if for every $u\in\Sigma^*$,
there is \emph{exactly} one $v\in\Gamma^*$ with $(u,v)\in
\transof{\cT}$. In other words, $\transof{\cT}$ is a function
$\transof{\cT}\colon\Sigma^*\to\Gamma^*$ and we can use it to define a
WQO.

\begin{prule}\label{prule:transductions}
  Let $f\colon \Sigma^*\to\Gamma^*$ be a functional transduction.
  If $(\Gamma^*,\gord)$ is a parameterized WQO, then so is
  $(\Sigma^*,\gord_f)$.
\end{prule}

\subsection*{Conjunctions} Another way to build a WQO on a set is to
combine two existing WQOs. Suppose $(X,\gord_1)$ and $(X,\gord_2)$ are
WQOs. Their \emph{conjunction} is the ordering $(X,\gord)$ with
$x\gord x'$ if and only if $x\gord_1 x'$ and $x\gord_2 x'$. Then
$(X,\gord)$ is a WQO via the characterization using ascending
subsequences.

\begin{prule}\label{prule:conjunctions}
  If $(\Sigma^*,\gord_1)$ and $(\Sigma^*,\gord_2)$ are parameterized
  WQOs, then so is their conjunction $(\Sigma^*,\gord)$.
\end{prule}

\subsection*{Examples}
Using the three building blocks in
\cref{prule:subword,prule:transductions,prule:conjunctions}, we can
construct a wealth of WQOs on words. Let us mention a few examples,
including the accompanying classes of PTL.

\subsection*{Labeling transductions} Our first class of examples
concerns orderings whose PTLs are fragments of first-order logic with
additional predicates. A \emph{labeling transduction} is a functional
transduction $f\colon \Sigma^*\to(\Sigma\times\Lambda)^*$ for some
alphabet $\Lambda$ labels such that for each $w=a_1\cdots
a_n\in\Sigma^*$, $a_1,\ldots,a_n\in\Sigma$, we have
$f(w)=(a_1,\ell_1)\cdots(a_n,\ell_n)$ for some
$\ell_1,\ldots,\ell_n\in\Lambda$.

In this case, we can interpret $\subword_f$-PTLs logically. To each
word $w=a_1\cdots a_n$, $a_1,\ldots,a_n\in\Sigma$, we associate a
finite relational structure $\fM_{w}$ as follows. Its domain is
$D=\{1,\ldots,n\}$ and as predicates, it has the binary $<$, unary
letter predicates $P_a$ for $a\in\Sigma$, and for each $\ell\in
\Lambda$, we have a unary predicate $\pi_\ell$.
While the predicates $<$ and $P_a$ are interpreted as expected, we
have to explain $\pi_\ell$. If $f(w)=(a_1,\ell_1)\cdots (a_n,\ell_n)$,
then $\pi_\ell(i)$ expresses that $\ell_i=\ell$. Hence, the
$\pi_\ell$ give access to the labels produced by $f$.  We denote the
$\cB\Sigma_1$-fragment (Boolean combinations of $\Sigma_1$-formulas)
as $\BSS[f]$.

Suppose $\fM_1$ and $\fM_2$ are relational structures over the same
signature.  An \emph{embedding} of $\fM_1$ in $\fM_2$ is an injective
mapping from the domain of $\fM_1$ to the domain of $\fM_2$ such that
each predicate holds for a tuple in $\fM_1$ if and only the predicate
holds for the image of that tuple.  This defines a quasi-ordering: We
write $\fM_1\embedord\fM_2$ if $\fM_1$ can be embedded into $\fM_2$.
Observe that for $u,v\in\Sigma^*$, we have $u\subword_f v$ if and only
if $\fM_u\embedord\fM_v$.

It was shown in \cite{GoubaultLarrecqSchmitz2016} that if the embedding order
is a WQO on a set of structures, then the
$\cB\Sigma_1$-fragment (i.e. Boolean combinations of $\Sigma_1$ formulas) can
express precisely the PTL with respect to $\embedord$. This implies that the
languages definable in $\BSS[f]$ are precisely the $\subword_f$-PTL.

To illustrate the utility of the fragments $\BSS[f]$, suppose we are
given regular languages $W_i$, $P_i$, $S_i$, for $i\in[1,n]$.  Suppose
we have for each $i\in[1,n]$ a $0$-ary predicate $\mathsf{w}_i$ that
expresses that our whole word belongs to $W_i$. For each $i\in[1,n]$
we also have unary predicates $\mathsf{pre}_i$ and $\mathsf{suf}_i$,
which express that the prefix and suffix, respectively, corresponding
to the current position, belongs to $P_i$ and $S_i$, respectively.
Then the corresponding fragment 
\[ \BSS[(\mathsf{w}_i)_{i\in[1,n]},(\mathsf{pre}_i)_{i\in[1,n]}, (\mathsf{suf}_i)_{i\in[1,n]}]\]
can clearly be realized as $\BSS[f]$.

Of course, we can capture many other predicates by labeling
transducers. For example, it is easy to realize a predicates for ``the
distance to the closest position to the left with an $a$ is congruent
$k$ modulo $d$'' (for some fixed $d$).  Finally, let us observe in passing that
instead of enriching $\BSS$, we could also construct fragments that do
not have access to letters: If $f$ just produces labels (and no input
letters), we obtain a logic where, for example, we can only express
whether ``this position is even \emph{and} carries an $a$''.

\subsection*{Orderings defined by finite automata} Our second
example slightly specializes the first example. The reason we make it
explicit is that we shall present explicit ideal representations that
will be applied to decide separability of regular languages by
$\BSS[\MOD]$. The example still generalizes the subword order. While
in the latter, a smaller word is obtained by deleting arbitrary
infixes, these orders use an automaton to restrict the permitted deletion.

A \emph{finite automaton} is a tuple $\cA=(Q,\Sigma,E,I,F)$, where $Q$ is a
finite set of \emph{states}, $\Sigma$ is the \emph{input alphabet}, $E\subseteq
Q\times \Sigma\times Q$ is the set of \emph{edges}, $I\subseteq Q$ is the set of
\emph{initial states}, and $F\subseteq Q$ is the set of \emph{final states}.
The language $\langof{\cA}$ is defined in the usual way.
Here, we use automata as a means to assign a labeling to an input word. A
labeling is defined by a run. A \emph{run} of $\cA$ on $w=a_1\cdots a_n$,
$a_1,\ldots,a_n\in\Sigma$, is a sequence 
\[ (q_0,a_1,q_1)(q_1,a_2,q_2)\cdots
(q_{n-1},a_n,q_n)\in E^* \]
 with $q_0\in I$ and $q_n\in F$. By $\seqof{\cA}$,
denote the set of runs of $\cA$.  Since we want $\cA$ to label every word from $\Sigma^*$, we
call an automaton $\cA$ a \emph{labeling automaton} if for each word $w\in\langof{\cA}$,
$\cA$ has exactly one run on $w$. In this case, we write $\cA(w)$ for the run of
$\cA$ on $w$. Moreover, we define $\sigma_{\cA}(w)=(p,q)$, where $p$ and $q$
are the first and last state, respectively, visited during $w$'s run.
Hence, such an automaton defines a map $\cA\colon \Sigma^*\to E^*$.
Let $u\autord{\cA} v$ if and only if $v$ is obtained from $u$ by ``inserting
loops of $\cA$''.  In other words, $v$ can be written as $v=u_0v_1u_1\cdots v_nu_n$ with
$u=u_0\cdots u_n$ such that the run of $\cA$ on $v$ occupies the same state
before reading $v_i$ and after reading $v_i$. Equivalently, we have
$u\autord{\cA} v$ if and only if $\sigma_{\cA}(u)=\sigma_{\cA}(v)$ and
$\cA(u)\subword\cA(v)$.  The order $\autord{\cA}$ is a parameterized WQO: 
The order $\gord$ with $u\gord v$ if and only if $\sigma_{\cA}(u)=\sigma_{\cA}(v)$
is parameterized because we can use a functional transduction $f$ that maps 
$u$ to the length-1 word $\sigma_{\cA}(u)$ in $(Q\times Q)^*$. Moreover,
with a functional transduction $g$ that maps a word $w$ to its run $\cA(w)$, 
the ordering $\autord{\cA}$ is the conjunction of $\subword_f$ and $\subword_g$.
\begin{itemize}
\item If $\cA$ consists of just one state and a loop for every $a\in \Sigma$,
then $\autord{\cA}$ is the ordinary subword ordering.
\item Suppose $\cB$ is a complete deterministic automaton accepting a regular language
$L\subseteq\Sigma^*$. Then $L$ is simultaneously upward closed and downward
closed with respect to $\autord{\cA}$, where $\cA$ is obtained from $\cB$ by
making all states final.
In particular, every regular language
can occur as an upward closure and as a downward closure with respect to
some $\autord{\cA}$.
\end{itemize}

As for labeling transducers, we can consider logical fragments where
$\autord{\cA}$ is the embedding order. Again, our signature consists
of $<$, $P_a$ for $a\in\Sigma$. Furthermore, for each $q\in Q$, we
have the $0$-ary predicates $\iota_q$ and $\tau_q$ and unary
predicates $\lambda_q$ and $\rho_q$.  Let $(q_0,a_1,q_1)\cdots
(q_{n-1},a_n,q_n)$ be the run of $\cA$ on $w$.  Then $\lambda_q(i)$ is
true iff $q_{i-1}=q$. Moreover, $\rho_q(i)$ holds iff $q_{i}=q$.
Hence, $\lambda_q$ and $\rho_q$ give access to the state occupied by
$\cA$ to the left and to the right of each position,
respectively. Accordingly, $\iota_q$ and $\tau_q$ concern the first
and the last state: $\iota_q$ is satisfied iff $q_0=q$ and $\tau_q$ is
true iff $q_n=q$.

\label{def-md}
As an example, let $\cM_d$ be the automaton that consists of a single
cycle of length $d$ so that on each input letter, $\cM_d$ moves one
step forward in the cycle. This is equivalent to having a predicate
for each $k\in[1,d]$ that express that the current position is
congruent $k$ modulo $d$. Moreover, we have a predicate for each
$k\in[1,d]$ to express that the length of the word is $k$ modulo
$d$. This is sometimes denoted $\BSS[\MOD[d]]$.  If these predicates
are available for \emph{every} $d$, the resulting class is denoted
$\BSS[\MOD]$~\cite{ChaubardPinStraubing2006} and will be the subject
of \cref{separability:mod:decidable}.

\subsection*{Multiplicative well-partial orders}
Ehrenfeucht~et~al.~\cite{EhrenfeuchtHausslerRozenberg1983} have shown that a
language is regular if and only if it is upward closed with respect to some
multiplicative WQO. For the ``only if'' direction, they provide the syntactic
congruence, which, as a finite-index equivalence, is a WQO%
. Here, we exhibit a natural example of a well-\emph{partial} order 
for which a given regular language is upward closed.
Suppose $M$ is a finite monoid and $\theta\colon \Sigma^*\to M$ is a
morphism that recognizes the language $L\subseteq\Sigma^*$, i.e.
$L=\theta^{-1}(\theta(L))$. Let $f\colon \Sigma^*\to (M^2\times\Sigma\times M^2)^*$
be the functional transduction such that for $w=a_1\cdots a_n$, $a_1,\ldots,a_n\in\Sigma$,
we have $f(w)=(\ell_0,r_0,a_1,\ell_1,r_1)\cdots (\ell_{n-1},r_{n-1},a_n,\ell_n,r_n)$, where 
$\ell_i=\theta(a_1\cdots a_i)$ and $r_i=\theta(a_{i+1}\cdots a_n)$.
Then we have
$u\subword_f v$ if and only if $v$ can be written as $v=u_0v_1u_1\cdots
v_nu_n$ such that $\theta(u_0\cdots u_{i-1}v_i)=\theta(u_0\cdots u_{i-1})$ and
$\theta(v_iu_i\cdots u_n)=\theta(u_i\cdots u_n)$ for $i\in[1,n]$.
In this case, we write $\morord{\theta}$ for $\subword_f$.

Note that $\morord{\theta}$ is multiplicative and $L$ is
$\morord{\theta}$-upward closed.  Thus, the order $\morord{\theta}$ is
a natural example that shows: A language is regular if and only if it
is upward closed with respect to some multiplicative well-partial
order.
\begin{rem}
Another source of WQOs on words is \cite{BucherEhrenfeuchtHaussler1985}, where
Bucher~et~al. have studied a class of
multiplicative orderings on words arising from rewriting systems.  They show
that all WQOs considered there can be represented by finite monoids equipped
with a multiplicative quasi-order.  Given such a monoid $(M,\le)$ and a
morphism $\theta\colon \Sigma^*\to M$, they set $u \sqsubseteq_\theta v$ if
and only if $u=u_1\cdots u_n$, $u_1,\ldots,u_n\in\Sigma$, and $v=v_1\cdots v_n$
such that $\theta(u_i)\le\theta(v_i)$. However, they leave open for which
monoids $(M,\le)$ the order $\sqsubseteq_\theta$ is a WQO.

In the case that $\theta$ above is a morphism into a finite group (whose order
is the equality), the order $\morord{\theta}$ coincides with
$\sqsubseteq_{\theta}$. However, while the orderings considered by
Bucher~et~al. are always multiplicative, this is
not always the case for parameterized WQOs.
\end{rem}

\subsection*{Orderings defined by counter automata}
We can also use automata with counters to produce parameterized WQOs.  A \emph{counter
automaton} is a tuple $\cA=(Q,\Sigma,C,E,I,F)$, where $Q$ is a finite set of
\emph{states}, $\Sigma$ is the input alphabet, $C$ is a set of \emph{counters},
$ E\subseteq Q\times (A\cup \{\eword\}) \times \N^C \times Q$ is the finite set
of \emph{edges}, $I\subseteq Q$ is the set of \emph{initial states}, and
$F\subseteq Q$ is the set of \emph{final states}.  A \emph{configuration of
$\cA$} is a tuple $(q,w,\mu)$, where $q\in Q$, $w\in A^*$, $\mu\in \N^C$.
The step relation is defined as follows. We have $(q,w,\mu)\autstep[\cA]
(q',w',\mu')$ iff there is an edge $(q,v,\nu,q')\in E$ such that $w'=wv$ and
$\mu'=\mu+\nu$. A \emph{run (arriving at $\mu$)} on an input word $w$ is a sequence
$(q_0,w_0,\mu_0),\ldots,(q_n,w_n,\mu_n)$ such that
$(q_{i-1},w_{i-1},\mu_{i-1})\autstep[\cA](q_i,w_i,\mu_i)$ for $i\in[1,n]$,
$q_0\in I$, $w_0=\eword$, $\mu_0=0$, $q_n\in F$, and $w_n=w$. 

We use counter automata not primarily as accepting devices, but rather
to define maps and to specify unboundedness properties.  We call $\cA$
a \emph{counting automaton} if it has exactly one run for every word
$w\in\Sigma^*$.  In this case, it defines a function
$\cA\colon\Sigma^*\to\N^C$: We have $\cA(w)=\mu$ iff $\cA$ has a run
on $w$ arriving at $\mu$.

This gives rise to an ordering: Let $\cA$ be a counting
automaton. Then, given $u,v\in\Sigma^*$, let $u \cord{\cA} v$ if and
only if $\cA(u)\le\cA(v)$.  This is a parameterized WQO for the
following reason. For each $c\in C$, we can build a functional
transduction $f_c\colon \Sigma^*\to\{c\}^*$ that operates like $\cA$,
but instead of incrementing $c$, it outputs a $c$.  Then, $\cord{\cA}$
is the conjunction of all the WQOs $\subword_{f_c}$ for $c\in C$.

Let $k\in\N$ and $C_k=\{a_u, b_u, c_u \mid u\in\Sigma^{\le k}\}$.
We say that a word $u$ \emph{occurs at position $\ell$ in $v$} if $v=xuy$
with $|x|=\ell-1$.
It is easy to construct a counting automaton $\cP_k$ with counter set
$C_k$ that satisfies $\cP_k(w)=\mu$ iff for each $u\in\Sigma^{\le k}$, 
\begin{itemize}[leftmargin=0.5cm]
\item if $u$ is a prefix of $w$, then $\mu(a_u)=1$, otherwise $\mu(a_u)=0$,
\item if $u$ is a suffix of $w$, then $\mu(b_u)=1$, otherwise $\mu(b_u)=0$,
\item $\mu(c_u)$ is the number of positions in $w$ where $u$ occurs.
\end{itemize}
Using this counting automaton, we can realize another class of regular languages. Let
$k\in\N$. A \emph{$k$-locally threshold testable language} is a finite Boolean
combination of sets of the form
\begin{itemize}
\item $u\Sigma^*$ for some $u\in\Sigma^{\le k}$,
\item $\Sigma^*u$ for some $u\in\Sigma^{\le k}$, or
\item $\{w\in\Sigma^* \mid \text{$u$ occurs at $\ge \ell$ positions in $w$}\}$ for some $u\in\Sigma^{\le k}$ and $\ell\in\N$.
\end{itemize}
The class of $k$-locally threshold testable languages is denoted
$\LTT[k]$.  Observe that the $\cord{\cP_k}$-PTL are precisely the
$k$-locally threshold testable languages.  Indeed, each of the basic
building blocks of $k$-locally threshold testable languages is
$\cord{\cP_k}$-upward closed and hence a
$\cord{\cP_k}$-PTL. Conversely, for each $w\in\Sigma^*$, the upward
closure of $w$ with respect to $\cord{\cP_k}$ is clearly in $\LTT[k]$.

\subsection*{Conjunctions} Let us illustrate the utility of
conjunctions.  Let $S$ be a finite collection of WQOs on
$\Sigma^*$. We call a language $L\subseteq\Sigma^*$ an \emph{$S$-PTL}
if it is a finite Boolean combination of sets of the form
$\Uclosure[\gord]{w}$, where $\gord$ belongs to $S$ and
$w\in\Sigma^*$. Our framework also applies to $S$-PTLs for the following reason.
\begin{obs}\label{boolean-conjunctions}
Let $\gord$ be the conjunction of the WQOs in $S$.
Then a language is an $S$-PTL iff it is a $\gord$-PTL.
\end{obs}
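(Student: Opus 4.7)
The plan is to prove both inclusions separately; both are short and use nothing beyond the definitions plus the basic characterization of upward closed sets in a WQO.

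For the direction $\gord$-PTL $\Rightarrow$ $S$-PTL, I would observe that by the definition of conjunction, $u \gord v$ holds iff $u \gord_i v$ for every $\gord_i \in S$. Hence for any $w \in \Sigma^*$,
\[
\Uclosure[\gord]{w} \;=\; \bigcap_{\gord_i \in S} \Uclosure[\gord_i]{w},
\]
and since $S$ is finite this expresses the basic building block $\Uclosure[\gord]{w}$ of a $\gord$-PTL as a finite Boolean combination of basic building blocks of $S$-PTLs. Boolean combinations of $S$-PTLs remain $S$-PTLs, so any $\gord$-PTL is an $S$-PTL.

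For the reverse direction $S$-PTL $\Rightarrow$ $\gord$-PTL, it suffices to show each generator $\Uclosure[\gord_i]{w}$ (with $\gord_i \in S$ and $w \in \Sigma^*$) is a $\gord$-PTL; Boolean combinations then take care of the rest. The key point is that $\gord$ refines every $\gord_i \in S$: if $u \gord v$ then in particular $u \gord_i v$. Using transitivity of $\gord_i$ together with this refinement, $\Uclosure[\gord_i]{w}$ is $\gord$-upward closed, because $w \gord_i u$ and $u \gord v$ imply $u \gord_i v$ and hence $w \gord_i v$. Now $\gord$ is itself a WQO (it is the conjunction of the WQOs in $S$, which is a WQO by the argument preceding \cref{prule:conjunctions}), so every $\gord$-upward closed set is a finite union $\Uclosure[\gord]{\{v_1,\ldots,v_n\}} = \bigcup_{j=1}^n \Uclosure[\gord]{v_j}$ of principal $\gord$-upward closures. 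Applying this to $\Uclosure[\gord_i]{w}$ exhibits it as a $\gord$-PTL.

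There is no real obstacle; the only content is that the conjunction $\gord$ is fine enough to make every $\gord_i$-upward closed set $\gord$-upward closed, and hence (by the WQO property of $\gord$) representable as a finite union of principal $\gord$-upward closures.
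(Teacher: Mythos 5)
Your proof is correct and takes essentially the same route as the paper: the forward direction via $\Uclosure[\gord]{w} = \bigcap_{\gord_i\in S}\Uclosure[\gord_i]{w}$, and the reverse direction by noting that each $\Uclosure[\gord_i]{w}$ is $\gord$-upward closed and therefore, by the finite-basis property of the WQO $\gord$, a finite union of principal $\gord$-upward closures.
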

As an example, suppose we have subsets
$\Sigma_1,\ldots,\Sigma_n\subseteq\Sigma$ and the functional
transductions $\pi_i$, $i\in[1,n]$, such that
$\pi_i\colon\Sigma^*\to\Sigma_i^*$ is the projection onto $\Sigma_i$,
meaning $\pi_i(a)=a$ for $a\in\Sigma_i$ and $\pi_i(a)=\eword$ for
$a\notin\Sigma_i$. If $S$ consists of the $\subword_{\pi_i}$ for
$i\in[1,n]$, then the $S$-PTL are precisely those languages that are
Boolean combinations of sets $\Uclosure[\subword]{w}$ for
$w\in\Sigma_1^*\cup\cdots\cup\Sigma_i^*$. Hence, we obtain a subclass
of the classical PTL. Of course, there are many other examples. One
can, for example, combine WQOs for logical fragments with WQOs defined
by counting automata and thus obtain logics that refer to positions as
well as counter values, etc.

\subsection*{Computing downward closures} 
The first problem we will study is that of computing downward closures.
As in the case of the subword ordering, we will see that for all
parameterized WQOs, every downward closed language is regular.
While mere regularity is often easy to see, it is not obvious how,
given a language $L\subseteq\Sigma^*$, to compute a finite automaton
for $\Dclosure[\gord]{L}$. We are insterested in when this can be done
algorithmically.
If $\gord$ is a WQO on words, we say that \emph{$\gord$-downward closures are
computable} for a language class $\cC$ if there is an algorithm that, given a
language $L\subseteq\Sigma^*$ from $\cC$, computes a finite automaton for
$\Dclosure[\gord]{L}$. This is especially interesting when $\cC$ is a class
of languages of infinite-state systems. 

Until now, downward closure computation has focused mainly on the case
where $\gord$ is the subword ordering. In that case, there is a
charaterization for when downward closures are
computable~\cite{Zetzsche2015b}. For a rational transduction
$T\subseteq\Sigma^*\times\Gamma^*$ and a language
$L\subseteq\Sigma^*$, let $TL=\{v\in\Gamma^*\mid \exists u\in L\colon
(u,v)\in T\}$.  When we talk about \emph{language classes}, we always
assume that there is a way of representing their languages such as by
automata or grammars.  We call a language class $\cC$ a \emph{full
  trio} if it is effectively closed under rational transductions,
i.e. given a representation of $L$ from $\cC$, we can compute a
representation of $TL$ in $\cC$.  The \emph{simultaneous unboundedness
  problem (SUP)} for $\cC$ is the following decision problem.
\begin{description}
\item[Given] A language $L\subseteq a_1^*\cdots a_n^*$ from $\cC$. 
\item[Question] Does $a_1^*\cdots a_n^*\subseteq \Dclosure{L}$ hold?
\end{description}
The aforementioned characterization now states that downward closures
for the subword ordering are computable for a full trio $\cC$ if and
only if the SUP is decidable.  The SUP is decidable for many important
and very powerful infinite-state systems.  It is known to be decidable for
Petri net
languages~\cite{CzerwinskiMartensRooijenZeitounZetzsche2017a,Zetzsche2015b,HabermehlMeyerWimmel2010}
and matrix languages~\cite{Zetzsche2015b}. Moreover, it was shown to be
decidable for indexed languages~\cite{Zetzsche2015b}, which was
generalized to higher-order pushdown
automata~\cite{HagueKochemsOng2016} and then further to higher-order
recursion schemes~\cite{ClementeParysSalvatiWalukiewicz2016}.

An indication for why computing downward closures for parameterized
WQOs might be more difficult than for subwords is that the latter
ordering is a rational relation,
i.e. $\{(u,v)\in\Sigma^*\times\Sigma^*\mid u\subword v\}$ is
rational. This fact was crucial for the method in
\cite{Zetzsche2015b}. However, one can easily construct parameterized
WQOs for which this is not the case.

\subsection*{PTL and separability} We also consider separability
problems. We say that two languages $K\subseteq\Sigma^*$ and
$L\subseteq\Sigma^*$ are \emph{separated} by a language
$R\subseteq\Sigma^*$ if $K\subseteq R$ and $L\cap R=\emptyset$.  If
two languages are separated by a regular language, we can regard this
regular language as a finite-state abstraction of the two
languages. We therefore want to decide when two given languages can be
separated by a language from some class of separators. More precisely,
we say that for a language class $\cC$ and a class of separators
$\cS$, \emph{separability by $\cS$ is decidable} if given language $K$
and $L$ from $\cC$, it is decidable whether there is an $R$ in $\cS$
that separates $K$ and $L$. In the case where $\cS$ is the class
(subword) PTL, it is known when separability is decidable:
In~\cite{CzerwinskiMartensRooijenZeitounZetzsche2017a}, it was shown
that in a full trio, separability by PTL is decidable if and only if
the SUP is decidable (the ``if'' direction had been obtained
in~\cite{CzerwinskiMartensRooijenZeitoun2015}).

\subsection*{Main result}
We are now ready to state the first main result.
\begin{thm}\label{equivalence}
For every full trio $\cC$, the following are equivalent:
\begin{enumerate}
\item\label{eq:sup} The SUP is decidable for $\cC$.
\item\label{eq:dc} For every parameterized WQO $\gord$, $\gord$-downward closures are computable for $\cC$.
\item\label{eq:sep} For every parameterized WQO $\gord$, separability by $\gord$-PTL is decidable for $\cC$.
\end{enumerate}
\end{thm}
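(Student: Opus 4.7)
The plan is to establish the four implications pairing~\labelcref{eq:sup} with~\labelcref{eq:dc} and with~\labelcref{eq:sep}. The two ``easy'' directions \DesImp{eq:dc}{eq:sup} and \DesImp{eq:sep}{eq:sup} are immediate by instantiating $\gord=\subword$---a parameterized WQO by \cref{prule:subword}---and invoking the known subword characterisations: SUP-decidability is equivalent to computability of subword downward closures~\cite{Zetzsche2015b} and to decidability of subword PTL-separability~\cite{CzerwinskiMartensRooijenZeitounZetzsche2017a}.

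For \DesImp{eq:sup}{eq:dc} and \DesImp{eq:sup}{eq:sep} I would argue by induction on the construction of $\gord$ via \cref{prule:subword,prule:transductions,prule:conjunctions}, with the base case being the subword characterisation once more. For \cref{prule:transductions}, where $\gord_f$ arises from $(\Gamma^*,\gord)$ via a functional transduction $f\colon\Sigma^*\to\Gamma^*$, the key observations are $\Dclosure[\gord_f]{L}=f^{-1}(\Dclosure[\gord]{f(L)})$ and $\Uclosure[\gord_f]{u}=f^{-1}(\Uclosure[\gord]{f(u)})$. Since $\cC$ is a full trio we have $f(L),f(K)\in\cC$; by the inductive hypothesis we compute a finite automaton for $\Dclosure[\gord]{f(L)}$, whose $f$-preimage gives one for $\Dclosure[\gord_f]{L}$, and we decide $\gord$-PTL-separability of $f(K)$ and $f(L)$, which settles $\gord_f$-PTL-separability of $K$ and $L$.

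The main obstacle is \cref{prule:conjunctions}, $\gord=\gord_1\wedge\gord_2$ on $\Sigma^*$. This case does \emph{not} reduce to \cref{prule:transductions}: the diagonal $w\mapsto(w,w)$ is not a rational transduction, so there is no single functional transduction $h$ with $\gord=\subword_h$. Instead, I would invoke the inductive hypothesis for both conjuncts simultaneously, exploiting that $\Uclosure[\gord]{w}=\Uclosure[\gord_1]{w}\cap\Uclosure[\gord_2]{w}$ and hence, by \cref{boolean-conjunctions}, that every $\gord$-PTL is a Boolean combination of $\gord_1$- and $\gord_2$-PTLs. For separability, this should reduce---via a Ramsey/factorisation argument in the spirit of~\cite{CzerwinskiMartensRooijenZeitoun2015,CzerwinskiMartensRooijenZeitounZetzsche2017a}---to the absence of sequences $u_n\in K$, $v_n\in L$ simultaneously compatible under both $\gord_1$ and $\gord_2$, which in turn reduces via full-trio closure to inductively solvable separability queries on the individual conjuncts. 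Downward closures then follow by a guess-and-check enumeration of finite automata $A$, where $L\subseteq L(A)$ is handled via SUP, and both the $\gord$-downward-closedness of $L(A)$ and the inclusion $L(A)\subseteq\Dclosure[\gord]{L}$ are reduced to inductive separability queries. The main technical difficulty I anticipate here is giving a constructive description of $\gord$-ideals in terms of $\gord_1$- and $\gord_2$-ideals, since the ideal decomposition does not commute cleanly with conjunction.
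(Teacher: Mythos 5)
Your high-level outline — easy directions via $\gord=\subword$ plus the known subword characterisations, hard directions by induction on the construction via \cref{prule:subword,prule:transductions,prule:conjunctions} — matches the paper's. Your handling of \cref{prule:transductions} is correct and in fact cleaner than the paper's: the identities $\Dclosure[\gord_f]{L}=f^{-1}(\Dclosure[\gord]{f(L)})$ and $\Uclosure[\gord_f]{u}=f^{-1}(\Uclosure[\gord]{f(u)})$ do hold, preimages of regular sets under rational functions are effectively regular, and the equivalence ``$K,L$ separable by $\gord_f$-PTL iff $f(K),f(L)$ separable by $\gord$-PTL'' is sound (the backward direction uses that $f^{-1}(\Uclosure[\gord]{v})$ is $\gord_f$-upward closed and hence a finite union of sets $\Uclosure[\gord_f]{w_i}$). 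The paper instead routes this case, like all cases, through the EWUR machinery (\cref{functions:ideals,functions:adherences,ewur:transducers}), which is heavier here but is chosen because it is the currency that also works for conjunctions.

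The genuine gap is \cref{prule:conjunctions}, and it is not closed by the sketch you give. Concretely, your claim that $\gord_1\wedge\gord_2$-separability ``reduces to inductively solvable separability queries on the individual conjuncts'' is false as stated: separability by $\gord$-PTL neither follows from nor implies separability by $\gord_1$-PTL or $\gord_2$-PTL alone, and no finite Boolean combination of answers to those queries can recover the answer for the conjunction. The right invariant is not a separability fact but an adherence fact: by \cref{separability}, $K,L$ are $\gord$-inseparable iff $\Adh[\gord]{K}\cap\Adh[\gord]{L}\ne\emptyset$, and the paper's actual technical content is to (i) characterise $\gord$-ideals as intersections $\bigcap_s I_s$ of $\gord_s$-ideals satisfying the extra compatibility condition $(I_s)_{s}\in\Adh[S]{I}$ (\cref{conjunction-ideals}) — exactly the difficulty you correctly anticipate at the end — and (ii) decide membership of such a tuple in $\Adh[S]{L}$ via a product \emph{counter automaton} whose unboundedness on $L$ is the decidable predicate (\cref{ewur:product}, via \cref{unboundedness:decidable}). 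That is why the paper defines EWUR with condition (c) (a counter automaton encoding adherence membership as unboundedness) rather than merely requiring decidable adherence membership: the counter-automaton form is what lets the property pass through the product construction for conjunctions. The same issue affects your downward-closure step: $L(A)\subseteq\Dclosure[\gord]{L}$ is handled via \cref{inclusion-vs-adh} and adherence membership, not via separability queries, and again the inductive step for conjunctions needs the product counter automaton. A ``Ramsey/factorisation'' argument would at best re-derive adherence-style witnesses; it does not yield a reduction to conjunct-wise separability.
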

This generalizes the two aforementioned results on downward
closures and PTL separability. In addition, \cref{equivalence} applies
to all the examples of $\gord$-PTL described above.

Recall that for each regular language $R$, there is a labeling
automaton $\cA$ such that $R$ is $\autord{\cA}$-upward closed and thus
a $\autord{\cA}$-PTL.  Thus, for languages $K$ and $L$, the following
are equivalent: (i) There \emph{exists} a labeling automaton $\cA$
such that $K$ and $L$ are separable by a $\autord{\cA}$-PTL and (ii)
$K$ and $L$ are separable by a regular language. Already for
one-counter languages, separability by regular languages is
undecidable~\cite{CzerwinskiLasota2017} (for context-free languages,
this was shown in~\cite{SzymanskiWilliams1976,Hunt1982}). However,
\cref{equivalence} tells us that for each fixed $\cA$, separability by
$\autord{\cA}$-PTL is decidable. We make a few applications explicit.

\begin{cor}
Let $\cC$ be a full trio with decidable SUP. For each $d\in\N$,
separability by $\BSS[\MOD[d]]$ is decidable for $\cC$.
\end{cor}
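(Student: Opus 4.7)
The plan is to reduce the corollary directly to \cref{equivalence} by identifying $\BSS[\MOD[d]]$-definable languages with $\autord{\cM_d}$-PTL, where $\cM_d$ is the cyclic labeling automaton of length $d$ described on \cpageref{def-md}.

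First I would verify that $\cM_d$ is a labeling automaton in the sense of the paper: since $\cM_d$ consists of a single $d$-cycle with one outgoing edge per letter at each state and all states final, every input word has exactly one accepting run. Therefore $\autord{\cM_d}$ is a parameterized WQO (it is constructed from \cref{prule:transductions,prule:conjunctions} exactly as described in the paragraph introducing $\autord{\cA}$ for labeling automata).

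Second, I would invoke the characterization cited from \cite{GoubaultLarrecqSchmitz2016}: whenever the embedding order $\embedord$ on the relational structures $\fM_w$ is a WQO, the $\cB\Sigma_1$-fragment in the corresponding signature defines exactly the $\embedord$-PTL. As noted in the excerpt, for $u,v\in\Sigma^*$ we have $u\autord{\cM_d} v$ iff $\fM_u\embedord\fM_v$ in the signature containing $<$, the letter predicates $P_a$, and the state predicates $\iota_q,\tau_q,\lambda_q,\rho_q$ for the $d$ states of $\cM_d$. The unary predicates $\lambda_q,\rho_q$ express precisely that the current position is congruent to a fixed residue modulo $d$, while $\iota_q,\tau_q$ express the congruence class of the length. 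Hence this signature generates exactly the modular predicates of $\BSS[\MOD[d]]$, so the $\autord{\cM_d}$-PTL coincide with the $\BSS[\MOD[d]]$-definable languages.

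Third, since $\cC$ has decidable SUP, the implication \DesImp{eq:sup}{eq:sep} of \cref{equivalence} applied to the parameterized WQO $\autord{\cM_d}$ yields decidability of separability by $\autord{\cM_d}$-PTL, which is separability by $\BSS[\MOD[d]]$. There is no real obstacle here; the only point requiring minor care is checking that the state predicates of $\cM_d$ really translate to the standard modular predicates, which is immediate from the fact that $\cM_d$ advances exactly one step in its $d$-cycle per input letter.
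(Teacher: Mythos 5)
Your proof is correct and follows the same route the paper intends: identify $\BSS[\MOD[d]]$ with the $\autord{\cM_d}$-PTL via the embedding-order characterization from \cite{GoubaultLarrecqSchmitz2016}, observe that $\autord{\cM_d}$ is a parameterized WQO since $\cM_d$ is a labeling automaton, and conclude with the implication \DesImp{eq:sup}{eq:sep} of \cref{equivalence}. You in fact supply more detail than the paper does (which simply asserts the identification $\BSS[\MOD[d]] = \autord{\cM_d}$-PTL) by spelling out how $\lambda_q$, $\rho_q$, $\tau_q$ correspond to position- and length-modular predicates in both directions.
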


A direct consequence from \cref{equivalence} is that we can decide
whether a regular language is a $\gord$-PTL.  Note that since a
language $L\subseteq\Sigma^*$ is separable from its complement
$\Sigma^*\setminus L$ by some $\gord$-PTL if and only if $L$ is an
$\gord$-PTL itself, \cref{equivalence} implies the following.
\begin{cor}
  Let $\gord$ be a parameterized WQO. Given a regular language $L$, it
  is decidable whether $L$ is an $\gord$-PTL.
\end{cor}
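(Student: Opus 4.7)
My plan is to reduce the membership question to the separability problem already handled by \cref{equivalence}. The key observation, which the preamble to the corollary already hints at, is that for any WQO $\gord$ on $\Sigma^*$ and any language $L\subseteq\Sigma^*$, the language $L$ is itself a $\gord$-PTL if and only if $L$ is separable from its complement $\Sigma^*\setminus L$ by some $\gord$-PTL. One direction is trivial: if $L$ is a $\gord$-PTL, then $L$ itself separates $L$ from $\Sigma^*\setminus L$. For the converse, if some $\gord$-PTL $R$ satisfies $L\subseteq R$ and $R\cap(\Sigma^*\setminus L)=\emptyset$, then $R=L$, so $L$ is a $\gord$-PTL. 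I would record this as a short \emph{Claim} before invoking \cref{equivalence}.

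Next, I would verify the two hypotheses needed to apply \cref{equivalence} to the class of regular languages. First, the regular languages form a full trio: they are effectively closed under rational transductions, since given a finite automaton for $L$ and a transducer $T$, a finite automaton for $TL$ can be constructed by the standard product-and-project construction. Second, the SUP is decidable for regular languages: given a regular $L\subseteq a_1^*\cdots a_n^*$, the inclusion $a_1^*\cdots a_n^*\subseteq\Dclosure{L}$ is an instance of inclusion between two regular languages (since $\Dclosure{L}$ is effectively regular for the subword ordering, by Higman's theorem combined with the effective constructions for regular languages), hence decidable.

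With these two facts in hand, the implication \labelcref{eq:sup}$\Rightarrow$\labelcref{eq:sep} of \cref{equivalence} yields that, for the given parameterized WQO $\gord$, separability of regular languages by $\gord$-PTL is decidable. The algorithm for the corollary then proceeds as follows: given a finite automaton for $L$, construct a finite automaton for $\Sigma^*\setminus L$ (by determinization and complementation), and invoke the separability decision procedure on the pair $(L,\Sigma^*\setminus L)$. By the claim, this returns ``yes'' exactly when $L$ is a $\gord$-PTL.

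There is essentially no obstacle in this argument: all nontrivial work is absorbed into \cref{equivalence}. The only point that requires a moment of care is ensuring that one may indeed apply \cref{equivalence} to the specific full trio of regular languages, and for that the only nontrivial ingredient is decidability of the SUP for regular languages, which is standard.
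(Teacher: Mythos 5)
Your proposal is correct and follows exactly the same route as the paper: reduce membership to separability of $L$ from $\Sigma^*\setminus L$, observe that any separating $\gord$-PTL must in fact equal $L$, and apply \cref{equivalence} to the full trio of regular languages (for which the SUP is trivially decidable). Your explicit verification of the full-trio and SUP hypotheses, and the short argument that a separator of $L$ from its complement must equal $L$, are precisely the details the paper leaves implicit.
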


It was shown by Place~et~al.~\cite{PlaceRooijenZeitoun2013b} that for
context-free languages, separability by $\LTT[k]$ is decidable for
each $k\in\N$. Their algorithm uses semilinearity of context-free
languages and Presburger arithmetic.  Here, we extend this result to
all full trios with a decidable SUP.
\begin{cor}\label{ltt}
  Let $\cC$ be a full trio with decidable SUP. For each $k\in\N$,
  separability by $\LTT[k]$ is decidable for $\cC$.
\end{cor}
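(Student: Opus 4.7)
The plan is to instantiate \cref{equivalence} with the parameterized WQO $\cord{\cP_k}$, where $\cP_k$ is the counting automaton over the counter set $C_k=\{a_u,b_u,c_u\mid u\in\Sigma^{\le k}\}$ constructed earlier. Since $\cP_k$ tracks, for each $u\in\Sigma^{\le k}$, whether $u$ is a prefix or suffix of the input and how many times $u$ occurs, the rule for counting automata already tells us that $\cord{\cP_k}$ is a parameterized WQO (it is the conjunction of the one-counter orderings $\subword_{f_c}$ for $c\in C_k$).

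Next, I would record the identification $\cord{\cP_k}\text{-PTL}=\LTT[k]$, which is stated in the excerpt. One direction is immediate because each basic building block of $\LTT[k]$ (a prefix constraint $u\Sigma^*$, a suffix constraint $\Sigma^*u$, or a threshold constraint on the number of occurrences of some $u\in\Sigma^{\le k}$) is $\cord{\cP_k}$-upward closed, hence a $\cord{\cP_k}$-PTL. Conversely, for each $w\in\Sigma^*$ the $\cord{\cP_k}$-upward closure $\Uclosure[\cord{\cP_k}]{w}$ is the intersection of finitely many such basic sets determined by the value $\cP_k(w)\in\N^{C_k}$, so it lies in $\LTT[k]$; Boolean combinations then match.

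With these two facts in hand, the corollary is an immediate application of the implication \DesImp{eq:sup}{eq:sep} of \cref{equivalence}: assuming $\cC$ is a full trio with decidable SUP, separability by $\cord{\cP_k}$-PTL is decidable for $\cC$, and by the above identification this is exactly separability by $\LTT[k]$. There is no real obstacle here beyond verifying that $\cord{\cP_k}$-PTL and $\LTT[k]$ coincide as Boolean algebras; the heavy lifting is already carried out in \cref{equivalence}, so the statement drops out essentially as a corollary of the framework.
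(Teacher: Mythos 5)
Your proof is correct and follows exactly the intended argument: identify $\LTT[k]$ with the $\cord{\cP_k}$-PTL (as observed in the text), note that $\cord{\cP_k}$ is a parameterized WQO via the conjunction of the single-counter orderings $\subword_{f_c}$, and apply the implication \DesImp{eq:sup}{eq:sep} of \cref{equivalence}. This matches the paper's route.
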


\subsection*{Separability beyond PTLs}

Our framework can also be applied to separators that do not arise as
PTLs for a particular WQO. This is because we can sometimes apply the
developed ideal representations to separator classes that are infinite
unions of invidual classes of PTLs. For example, consider the fragment
$\BSS[\MOD]$ of first-order logic on words with modular predicates. In
terms of expressible languages, it is the union over all fragments
$\BSS[\MOD[d]]$ with $d\in\N$.  Using a non-trivial algebraic proof,
it was shown by Chaubard, Pin, and
Straubing~\cite{ChaubardPinStraubing2006} that it is decidable whether
a regular language is definable in $\BSS[\MOD]$. Here, we show the
following generalization using a purely combinatorial proof.
\begin{thm}\label{separability:mod:decidable}
Given two regular languages, it is decidable whether they are separable by $\BSS[\MOD]$.
\end{thm}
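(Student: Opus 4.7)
The plan is to reduce separability by $\BSS[\MOD]$ to separability by $\BSS[\MOD[d]]$ for a computable modulus $d$, and then apply \cref{equivalence} to the WQO $\autord{\cM_d}$. As classes of languages, $\BSS[\MOD] = \bigcup_{d \ge 1} \BSS[\MOD[d]]$, and each $\BSS[\MOD[d]]$ coincides (by the observations following \labelcref{def-md}) with the class of $\autord{\cM_d}$-PTLs. Hence two languages $K,L$ are $\BSS[\MOD]$-separable iff they are $\autord{\cM_d}$-PTL-separable for \emph{some} $d \in \N$. Since the class of regular languages is a full trio with trivially decidable SUP, \cref{equivalence} already produces an algorithm for each fixed $d$; the only missing piece is an effective bound on $d$.

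The core technical step is to establish a computable threshold $d_0 = d_0(K,L)$ such that $K,L$ are $\BSS[\MOD]$-separable iff they are $\BSS[\MOD[d_0]]$-separable. My plan is to fix a morphism $\alpha\colon \Sigma^* \to N$ to a finite monoid $N$ recognizing both $K$ and $L$ (for instance the syntactic morphism of a suitable combination, obtained from the product DFA) and to take $d_0$ to be a multiple of the idempotent period $\omega(N)$, with an extra factor $|N|$ for slack. The inclusion $\BSS[\MOD[d_0]] \subseteq \BSS[\MOD]$ gives one direction for free. For the converse, suppose $R \in \BSS[\MOD[d]]$ separates $K$ from $L$ for some (possibly enormous) $d$; using the ideal representations for $\autord{\cM_d}$ developed earlier in the paper, decompose $R$ into a Boolean combination of ``atoms'' corresponding to ideals described by products whose iteration counts are constrained modulo $d$, and then ``coarsen'' each such constraint to period $d_0$.

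I expect the main obstacle to be proving that this coarsening preserves separation. The plan is a pumping argument inside $N$: if $R$ actually exploits a modular condition of period $d' \nmid d_0$ to distinguish some $w \in K$ from $L$, then since $d'$ exceeds the relevant period of $N$, one can replace factors of $w$ by iterates of idempotents in $N$ to produce words in $K$ (resp.\ in $L$) with the same $\alpha$-image and the same ``shape'' with respect to $R$ but arbitrary residue modulo $d'$. This forces the $d'$-modular information to be useless for separating $\alpha(K)$ from $\alpha(L)$, so the $d_0$-coarsening of $R$ is still a separator. Carrying this out carefully — including showing that the pumping can be done uniformly across the finite decomposition of $R$ into ideal-atoms — is the combinatorial counterpart of the algebraic identities underlying the $\BSS[\MOD]$-definability result of Chaubard, Pin, and Straubing~\cite{ChaubardPinStraubing2006}, and is where most of the work lies.

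The overall algorithm then runs as follows: from the two regular inputs compute $N$ and the morphism $\alpha$, compute $d_0$ from $N$, and invoke the algorithm from \cref{equivalence} for the fixed WQO $\autord{\cM_{d_0}}$ to decide $\autord{\cM_{d_0}}$-PTL-separability of $K$ and $L$. Correctness of the reduction follows from the threshold claim above, and termination from \cref{equivalence} and the decidability of SUP for regular languages.
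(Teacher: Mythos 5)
Your reduction to finding a computable threshold $d_0$ with the property that $\BSS[\MOD]$-separability is equivalent to $\BSS[\MOD[d_0]]$-separability is the same first move as the paper (\cref{separability:mod:bound}), and invoking \cref{equivalence} for $\autord{\cM_{d_0}}$ is also the same. Where you diverge — and where the gap lies — is in how you establish the threshold.

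The paper works on the \emph{inseparability} side: by \cref{separability}, it suffices to show that a shared ideal $I\in\Adh[\modord{d}]{\langof{\cA_1}}\cap\Adh[\modord{d}]{\langof{\cA_2}}$ can be ``pumped up'' to a shared ideal for modulus $\ell\cdot d$ (\cref{pump-adherence}). A single ideal is a directed downward-closed set with an explicit syntax (loop patterns $u_0v_1^*u_1\cdots v_n^*u_n$), so the pumping argument only has to control one very structured object. Even then it is delicate: Part~I of the proof (\cref{reg:smallperiod:cyclic,reg:smallperiod}) has to first replace the ideal by one whose loops $v_i$ have small period $\pi_d(v_i)\le m^2$, Part~II (\cref{reg:association,reg:association-ext}) has to replace the loop pattern by an \emph{irreducible extended} one in order to pin down the shape of witness words, and only then does Part~III (\cref{reg:pump-word-up,reg:pad-power,reg:pad-border,reg:pump-single-ideal}) carry out the pumping, using the periodicity bound to find pumpable factors whose length is divisible by $\pi_d(v_i)$.

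You instead try to work on the \emph{separator} side: take $R\in\BSS[\MOD[d]]$, ``decompose $R$ into ideal-atoms,'' and ``coarsen'' each modular constraint to period $d_0$. This is where the proposal breaks down concretely. First, the coarsening operation is never defined: there is no canonical map from a $\BSS[\MOD[d]]$-separator to a candidate $\BSS[\MOD[d_0]]$-separator, and your Boolean-combination decomposition is with respect to $\modord{d}$, whose ideal atoms have no meaningful image under $\modord{d_0}$. Second, the claim that pumping idempotents of a recognizing monoid $N$ produces words ``with the same shape with respect to $R$ but arbitrary residue modulo $d'$'' is precisely what fails without additional control: inserting a factor of length not divisible by $d$ shifts all subsequent positions modulo $d$, which can change $w$'s atom in the decomposition of $R$. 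Controlling this effect is exactly what the paper's notions of small period $\pi_d(v)$, irreducible/extended loop patterns, and \emph{associated} patterns are built for; your ``extra factor $|N|$ for slack'' does not substitute for that machinery, nor does $\omega(N)$ obviously bound the relevant periods. Finally, the concluding step ``$d'$-modular information is useless, so the $d_0$-coarsening is still a separator'' is asserted, not shown; even granting the uselessness, you still have to exhibit a separator in $\BSS[\MOD[d_0]]$, which your argument does not produce. In short, the strategy of directly massaging separators is not wrong in spirit, but the paper's route through adherences and ideal syntax is what makes the pumping tractable, and your proposal leaves precisely that part open.
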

 
Of course, this raises the question of whether separability by $\BSS[\MOD]$
reduces to the SUP, as it is the case of separability by
$\BSS[\MOD[d]]$ for fixed $d$. However, this is not the case, as is shown here as well.
\begin{thm}\label{separability:mod:undecidable}
Separability by $\BSS[\MOD]$ is undecidable for second-order pushdown languages.
\end{thm}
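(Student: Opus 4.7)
The plan is to show r.e.-hardness of $\BSS[\MOD]$-separability for second-order pushdown (SOPD) languages, reducing from the halting problem. Note first that $\BSS[\MOD]$-separability is itself r.e.\ on SOPDs: since $\BSS[\MOD]=\bigcup_{d\ge 1}\BSS[\MOD[d]]$ and, by \cref{equivalence} combined with decidability of the SUP for indexed / level-2 pushdown languages, separability by $\BSS[\MOD[d]]$ is decidable for each fixed $d$, one can enumerate $d=1,2,\ldots$ and check. Hence \cref{separability:mod:undecidable} amounts to r.e.-completeness, which we establish by reduction from the halting problem of a two-counter Minsky machine $M$, whose acceptance problem is r.e.-complete and whose computations SOPDs can simulate.

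Given $M$, we construct SOPD languages $K_M$ and $L_M$ over an alphabet $\Sigma$ containing a designated ``clock'' letter, such that $M$ halts iff $K_M$ and $L_M$ are $\BSS[\MOD]$-separable. The encoding represents computations of $M$ as words whose structure reflects the sequence of visited configurations, with the clock letter marking time steps. The power of the level-2 stack is used to enforce the non-regular constraints (counter-value consistency between successive configurations), so that $K_M$ and $L_M$ faithfully capture the relevant computation prefixes of $M$. By design we arrange $K_M \cap L_M=\emptyset$, so the only question is separability.

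The technical heart of the construction is to tie the modulus $d$ of any separator to the halting time of $M$. We aim to show that a $\BSS[\MOD[d]]$-separator for $K_M$ and $L_M$ exists iff $M$ halts within $g(d)$ steps, where $g$ is an increasing function determined by the encoding. The easier direction is constructive: if $M$ halts in $t$ steps, choose $d$ sufficiently large and exhibit an explicit $\BSS[\MOD[d]]$-formula that isolates the finitely many configurations reachable within time $t$. The converse, which will be the main obstacle, requires an Ehrenfeucht--Fra\"iss\'e--style argument: if $M$ never halts, then for every $d$ we can produce pairs $u\in K_M$, $v\in L_M$ satisfying exactly the same $\BSS[\MOD[d]]$-formulas, thereby blocking any separator at modulus $d$.

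The principal difficulty is calibrating the clock positions with enough redundancy that $\BSS[\MOD[d]]$-formulas for any particular $d$ are provably blind to halting events occurring more than $\Theta(d)$ positions into the encoded computation, while still keeping $K_M$ and $L_M$ SOPD-definable. Concretely, we expect to interleave dummy clock letters whose density is chosen so that the residues of ``interesting'' positions modulo $d$ can be made arbitrary by sufficiently long padding; the SOPD structure is responsible for the semantic constraints, whereas the modular padding controls the $\BSS[\MOD[d]]$-type of the word. Assembling these two orthogonal mechanisms into a single encoding such that the if-and-only-if characterization goes through is where all the combinatorial work resides.
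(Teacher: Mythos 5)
Your proposal takes a genuinely different route from the paper, but it stops at the level of a plan and the plan itself has a gap that would need to be repaired before it could be executed.

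The paper does \emph{not} reduce from the halting problem. It uses the technique of Hunt (simplified by Czerwi\'{n}ski and Lasota): it shows that \emph{every decidable} problem $D$ reduces in polynomial time to inseparability, so a decision procedure for separability would put every decidable set into a fixed time bound, contradicting the time hierarchy theorem. Concretely, given $u$, one builds the PCP morphisms $\alpha,\beta$ from the classical Turing-machine-to-PCP reduction applied to a \emph{decider} for $D$ on input $u$, and then compares the level-2 pushdown language $L_{\alpha,\beta}=\{a^{\nu(\alpha(w))}cb^{\nu(\beta(w))}\mid w\in\Sigma^+\}$ (using the 2-adic encoding $\nu$, computable by an indexed grammar because of the exponential power of the level-2 stack) with the context-free $E=\{a^ncb^n\}$. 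If $u\in D$ there is a PCP match, so $L_{\alpha,\beta}\cap E\ne\emptyset$; if $u\notin D$ the decider still halts, so common prefixes of $\alpha(w),\beta(w)$ are bounded by some $k$, and the explicit $\BSS[\MOD]$-definable set $S_k$ separates. Reducing from \emph{decidable} problems rather than from halting is essential: it is what guarantees the bound $k$ in the ``separable'' branch. A direct reduction from halting would have to cope with the accept/reject/loop trichotomy (an \emph{accepting} run also produces a match, hence inseparability, so ``halts'' does not cleanly align with one side of the dichotomy), and your sketch does not address this.

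The bigger issue is that your construction is not actually given, and the one structural claim you do make looks untenable as stated. You say the level-2 stack is used ``to enforce the non-regular constraints (counter-value consistency between successive configurations), so that $K_M$ and $L_M$ faithfully capture the relevant computation prefixes of $M$.'' But indexed languages have decidable emptiness; no single second-order pushdown language can faithfully encode all the consistency constraints of a Minsky-machine run, or you could decide halting by an emptiness check. Any workable encoding has to be \emph{lax}: each of the two languages enforces only a regular-or-weaker fragment of the constraints, and the full constraint emerges only in their interaction (exactly as in the PCP/$E$ pair, where the level-2 power is spent on the 2-adic length computation rather than on run verification). Your proposal acknowledges that ``assembling these two orthogonal mechanisms \ldots is where all the combinatorial work resides,'' which is a fair self-assessment -- that is precisely the part of the proof that is missing. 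Your opening observation that $\BSS[\MOD]$-separability is r.e.\ for full trios with decidable SUP (by enumerating $d$ and applying the paper's \Cref{equivalence}) is correct and is a nice framing not made explicit in the paper, but by itself it does not supply the hardness direction.
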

Since the second-order pushdown languages constitute full
trio~\cite{Maslov1976,aho1968} and have a decidable
SUP~\cite{Zetzsche2015b}, this means separability by $\BSS[\MOD]$ does
not reduce to the SUP.

\section{Computing closures and deciding separability}\label{closure-separability}
In this \lcnamecref{closure-separability}, we present the algorithms
used in \cref{equivalence}. These algorithms work with WQOs on words
under the assumption that these enjoy certain effectiveness
properties. In \cref{ideals}, we will then show that all parameterized WQO
indeed satisfy these properties.
Our algorithms for computing downward closures and deciding
separability rely heavily on the concept of ideals, which have
recently attracted
attention~\cite{LerouxSchmitz2015,FinkelGoubault2009,GoubaultLarrecqSchmitz2016}.
Observe that, in the case of the separability problem, it is always
easy to devise a semi-algorithm for the separability case: We just
enumerate separators--verifying them is possible because we have
decidable emptiness and intersection with regular sets. The difficult
part is to show that inseparability can be witnessed.

These witnesses are always ideals. Let $(X,\gord)$ be a WQO. An
\emph{$\gord$-ascending chain} is a sequence $x_1,x_2,\ldots$ with $x_i\gord
x_{i+1}$ for every $i\in\N$.  
A subset $Y\subseteq X$ is called \emph{($\gord$-)directed} if for any
$x,y\in Y$, there is a $z\in Y$ with $x\gord z$ and $y\gord z$. An
\emph{($\gord$-)ideal} is a non-empty subset $I\subseteq X$ that is
$\gord$-downward closed and $\gord$-directed. Equivalently,
a non-empty subset $I\subseteq X$ is an $\gord$-ideal if $I$ is $\gord$-downward closed and for
any two $\gord$-downward closed sets $Y,Z\subseteq X$ with $I\subseteq
Y\cup Z$, we have $I\subseteq Y$ or $I\subseteq Z$. It is well-known
that every downward closed set can be written as a finite union of
ideals. For more information on ideals,
see~\cite{LerouxSchmitz2015,GoubaultLarrecqSchmitz2016}.

As observed in \cite{GoubaultLarrecqSchmitz2016}, an ideal can witness
inseparability of two languages by belonging to both of their
adherences.  For a set $L\subseteq X$, its \emph{adherence}
$\Adh[\gord]{L}$ is defined as the set of those ideals $I$ of $X$ such
that there exists a directed set $D\subseteq L$ with
$I=\Dclosure[\gord]{D}$.  Equivalently, $I\in\Adh[\gord]{L}$ if and
only if $I\subseteq \Dclosure[\gord]{(L\cap
  I)}$~\cite{LerouxSchmitz2015,GoubaultLarrecqSchmitz2016}.  In this
work, we also use a slightly modified version of adherences in order
to describe ideals of conjunctions of WQOs.  Let $(\gord_s)_{s\in S}$
be a family of well-quasi-orderings on a common set $X$. Moreover, let
$\gord$ denote the conjunction of $(\gord_s)_{s\in S}$. For
$L\subseteq X$, $\Adh[S]{L}$ is the set of all families $(I_s)_{s\in
  S}$ of ideals for which there exists a $\gord$-directed set
$D\subseteq L$ such that $I_s=\Dclosure[\gord_s]{D}$ for each $s\in
S$.

\subsection*{Unboundedness reductions} 
We use counter automata (that are not necessarily count\emph{ing} automata)
to specify unboundedness properties. Let $\cA$ be a counter automaton with
counter set $C$.  Let $\N_\omega=\N\cup\{\omega\}$ and extend $\le$ to
$\N_\omega$ by setting $n<\omega$ for all $n\in\N$.  We define a function
$\bar{\cA}\colon\Sigma^*\to\N_\omega$ by
\[ \bar{\cA}(w)=\sup \left\{\left. \inf_{c\in C} \mu(c) \right| \text{$\cA$ has a run on $w$ arriving at $\mu\in\N^C$}\right\} \]
We say that a counter automaton $\cA$ is \emph{unbounded on
$L\subseteq\Sigma^*$} if for every $k\in\N$, there is a $w\in L$ with
$\bar{\cA}(w)\ge k$.  In other words, iff for every $\nu\in\N^C$, there is a
$w\in L$ such that $\cA$ has a run on $w$ arriving at some  $\mu\ge\nu$.

The following can be shown using a straightforward reduction to the diagonal
problem~\cite{CzerwinskiMartensRooijenZeitounZetzsche2017a,CzerwinskiMartensRooijenZeitoun2015},
which in turn is known to reduce to the SUP~\cite{Zetzsche2015b}.
\begin{lem}\label{unboundedness:decidable}
Let $\cC$ be a full trio with decidable SUP. Then, given a counter automaton
$\cA$ and a language $L$ from $\cC$, it is decidable whether $\cA$ is unbounded
on $L$.
\end{lem}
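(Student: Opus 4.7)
The plan is to reduce unboundedness of $\cA$ on $L$ to the \emph{diagonal problem} for $\cC$, which asks, given a language $M\in\cC$ over an alphabet $\Gamma$, whether $M$ contains, for every $n\in\N$, a word in which each letter of $\Gamma$ occurs at least $n$ times. The diagonal problem is known to reduce to the SUP~\cite{Zetzsche2015b,CzerwinskiMartensRooijenZeitounZetzsche2017a,CzerwinskiMartensRooijenZeitoun2015}, so it suffices to produce effectively from $\cA$ and $L$ such an instance of the diagonal problem in $\cC$.

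Let $C=\{c_1,\ldots,c_k\}$ be the counter set of $\cA$. First I would construct a finite-state transducer $\cT$ that simulates $\cA$: its states coincide with those of $\cA$, its initial and final states are inherited from $\cA$, and each $\cA$-edge $(q,a,\nu,q')\in E$ is replaced by a small chain of transducer edges that together read $a$ and output the fixed word $c_1^{\nu(c_1)}\cdots c_k^{\nu(c_k)}\in C^*$; this chain is finite because $\nu\in\N^C$ is a fixed update vector on each $\cA$-edge. By construction, $(w,u)\in\transof{\cT}$ iff $\cA$ has a run on $w$ ending in some counter valuation $\mu\in\N^C$ with $\mu(c_i)=|u|_{c_i}$ for every $i$. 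Since $\cC$ is a full trio, the language $M:=\transof{\cT}\,L$ is effectively obtainable and lies in $\cC$.

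The key observation is then that $\cA$ is unbounded on $L$ iff for every $n\in\N$ there is $u\in M$ with $|u|_{c_i}\ge n$ for every $i\in[1,k]$, i.e.\ iff $M$ is a positive instance of the diagonal problem over the alphabet $C$. Combining this equivalence with the cited reduction of the diagonal problem to the SUP, and using the assumed decidability of the SUP for $\cC$, yields the claimed decision procedure. The only step requiring any real care is checking that $\cT$ faithfully tracks $\cA$'s nondeterministic choices while producing output of bounded length per $\cA$-edge; this is immediate because each edge of $\cA$ carries a single, fixed vector $\nu$, so the per-edge output is a fixed finite word and $\cT$ is a legitimate finite-state transducer. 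Everything beyond this is bookkeeping and cited machinery.
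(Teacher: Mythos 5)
Your proposal is correct and follows essentially the same route as the paper: build a transducer that replaces each counter update by the corresponding output word over the counter alphabet, observe that unboundedness of $\cA$ on $L$ is equivalent to the diagonal problem for $\transof{\cT}L$ (which is in $\cC$ by the full trio property), and invoke the known reduction of the diagonal problem to the SUP. The only cosmetic difference is that you expand each update vector into a chain of single-letter edges, which is the careful way to fit the paper's transducer definition; the paper elides this detail but the argument is the same.
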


We are now ready to state the effectiveness assumptions on which our
algorithms rely.  Let $\Sigma$ be an alphabet and $(\Sigma^*,\gord)$
be a WQO. We say that $(\Sigma^*,\gord)$ is an \emph{effective WQO
  with an unboundedness reduction (EWUR)} if the following are
satisfied:
\begin{enumerate}
\item[(a)] For each $w\in\Sigma^*$, the set $\Uclosure[\gord]{w}$ is effectively
regular.
\item[(b)] The set of ideals of $(\Sigma^*,\gord)$ is a recursively enumerable set
of regular languages.
\item[(c)] Given an ideal $I\subseteq \Sigma^*$, one can effectively construct a
counter automaton $\cA_I$ such that for every $L\subseteq\Sigma^*$, $\cA_I$ is
unbounded on $L$ if and only if $I$ belongs to $\Adh[\gord]{L}$.
\end{enumerate}
It should be noted that in order to decide separability by $\gord$-PTL
and compute downward closures, it would have sufficed to require
decidability of adherence membership in full trios with decidable
SUP. The reason why we require the stronger condition (c) is that in
order to show that all parameterized WQOs satisfy these conditions,
we want the latter to be passed on to conjunctions and to WQOs $\gord_f$.

The conditions imply that every upward closed language (hence every downward
closed language) is regular: If $U$ is upward closed, then we can write
$U=\Uclosure[\gord]{\{w_1,\ldots,w_n\}}=\bigcup_{i=1}^n\Uclosure[\gord]{\{w_i\}}$,
which is regular because each $\Uclosure[\gord]{\{w_i\}}$ is regular.
Moreover, we may conclude that given a regular language $R\subseteq\Sigma^*$ it
is decidable whether $R$ is an ideal: If $R$ is an ideal, we find it in an
enumeration; if it is not an ideal, we find words that violate directedness or
downward closedness.

According to the definition of EWUR, we can construct a counter automaton $\cA$
such that $I\in\Adh{L}$ if and only if $\cA$ is unbounded on $L$. Hence,
\cref{unboundedness:decidable} implies the following.
\begin{prop}\label{ewur:adh:decidable}
Let $(\Sigma^*,\gord)$ be an EWUR and let $\cC$ be a full trio with decidable
SUP. Then, given an ideal $I\subseteq\Sigma^*$ and $L\in\cC$, it is decidable
whether $I\in \Adh[\gord]{L}$.
\end{prop}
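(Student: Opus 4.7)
The plan is to observe that the proposition is almost a tautological composition of the two ingredients just assembled, namely clause~(c) of the definition of EWUR and \cref{unboundedness:decidable}. Concretely, given an ideal $I\subseteq\Sigma^*$ and a language $L\in\cC$, I would first invoke the EWUR hypothesis: since $(\Sigma^*,\gord)$ is an EWUR, clause~(c) guarantees that from $I$ we can effectively construct a counter automaton $\cA_I$ with the property that, for every $L'\subseteq\Sigma^*$, $\cA_I$ is unbounded on $L'$ if and only if $I\in\Adh[\gord]{L'}$. In particular, this equivalence holds for the given $L$.

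Next I would apply \cref{unboundedness:decidable} to the pair $(\cA_I,L)$. Since $\cC$ is a full trio with decidable SUP and $L\in\cC$, this lemma tells us that unboundedness of $\cA_I$ on $L$ is a decidable predicate. Combined with the equivalence from clause~(c), the resulting algorithm runs as follows: (i) build $\cA_I$ by the effective procedure supplied by the EWUR; (ii) decide whether $\cA_I$ is unbounded on $L$ by the algorithm from \cref{unboundedness:decidable}; (iii) output this bit as the answer to the question ``$I\in\Adh[\gord]{L}$?''.

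There is essentially no technical obstacle in this argument; the conceptual work has already been done in extracting the EWUR conditions and in establishing \cref{unboundedness:decidable}. The only point worth emphasising is that the hypotheses line up exactly: clause~(c) of EWUR produces an object of precisely the type (a counter automaton) to which \cref{unboundedness:decidable} applies, and the full trio assumption on $\cC$ with decidable SUP is exactly what that lemma needs. Thus the proposition follows by direct composition, and the proof sketch can be given in essentially the three lines above.
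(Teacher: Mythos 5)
Your argument matches the paper's own reasoning exactly: the paper remarks, immediately before stating the proposition, that clause~(c) of the EWUR definition yields the counter automaton $\cA_I$ and that \cref{unboundedness:decidable} then decides unboundedness, which together give the result. Your three-step algorithm is precisely this composition, so the proposal is correct and essentially identical in approach.
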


In \cref{ideals}, we develop ideal representations for all
parameterized WQOs and thus show that they are EWUR.

Let us now sketch how to show \cref{equivalence} assuming that every
parameterized WQO is an EWUR.  The implication \DesImp{eq:dc}{eq:sup}
holds because computing downward closures clearly allows deciding the
SUP. This was shown in~\cite{Zetzsche2015b}.  The implication
\DesImp{eq:sep}{eq:sup} follows
from~\cite{CzerwinskiMartensRooijenZeitounZetzsche2017a}, which
presents a reduction of the SUP to separability by PTL. Thus, it
remains to prove that downward closures are computable and
PTL-separability is decidable for EWUR.  We begin with the former. The
following was shown in \cite{LerouxSchmitz2015}.
\begin{lem}\label{inclusion-vs-adh}
Let $(X,\gord)$ be a WQO and $I_1,\ldots,I_n$ be ideals such that $L\subseteq
I_1\cup \cdots\cup I_n$ and $I_i\not\subseteq I_j$ for $i\ne j$. Then
$I_i\subseteq \Dclosure{L}$ if and only if $I_i\in\Adh{L}$.
\end{lem}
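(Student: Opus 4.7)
The plan is to prove both directions, with the ``if'' direction being immediate from the definition and the ``only if'' direction being the substantive one; the key tool for the latter will be the ``prime'' characterisation of ideals already recalled in the excerpt.

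For the ``if'' direction I would simply unfold the definition: if $I_i\in\Adh{L}$, then there is a $\gord$-directed set $D\subseteq L$ with $I_i=\Dclosure{D}$, and monotonicity of $\Dclosure{\cdot}$ immediately gives $I_i=\Dclosure{D}\subseteq\Dclosure{L}$. This step is essentially a one-liner.

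For the ``only if'' direction I would work with the equivalent characterisation $I\in\Adh{L}\iff I\subseteq\Dclosure{L\cap I}$. Using $L\subseteq I_1\cup\cdots\cup I_n$, I would decompose $L=(L\cap I_1)\cup\cdots\cup(L\cap I_n)$ and hence
\[
\Dclosure{L}=\Dclosure{L\cap I_1}\cup\cdots\cup\Dclosure{L\cap I_n}.
\]
Assuming $I_i\subseteq\Dclosure{L}$, I obtain $I_i\subseteq\bigcup_{j}\Dclosure{L\cap I_j}$, a union of finitely many $\gord$-downward closed sets. Now I invoke the prime property of the ideal $I_i$ (iterated from the two-set version stated in the preliminaries): there must exist some $j$ with $I_i\subseteq\Dclosure{L\cap I_j}$. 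Since $I_j$ is downward closed and contains $L\cap I_j$, this forces $\Dclosure{L\cap I_j}\subseteq I_j$, whence $I_i\subseteq I_j$. The incomparability hypothesis $I_i\not\subseteq I_j$ for $i\ne j$ then pins down $j=i$, giving $I_i\subseteq\Dclosure{L\cap I_i}$ and hence $I_i\in\Adh{L}$.

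The main obstacle, if any, is lining up the right equivalent definition of adherence and the right formulation of the prime property; the iteration from two downward closed sets to $n$ of them is routine by induction. Once those are in place, the argument is purely set-theoretic and does not require anything specific to words or to the parameterized WQO setting, which matches the fact that the lemma is cited from \cite{LerouxSchmitz2015} as a general statement about WQOs.
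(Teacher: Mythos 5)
Your proof is correct, and it is cleaner and more direct than the paper's. The paper argues by contradiction: it assumes $I_1\subseteq\Dclosure{L}$ but $I_1\notin\Adh{L}$, picks a witness $x\in I_1\setminus\Dclosure{(L\cap I_1)}$, and then uses $\gord$-directedness of $I_1$ by hand (choosing a common upper bound $z$ of $x$ and an arbitrary $y\in I_1$) to show $I_1\subseteq I_2\cup\cdots\cup I_n$, after which the prime property of $I_1$ yields the contradiction. You instead apply the prime property directly to the $n$-fold decomposition $\Dclosure{L}=\bigcup_j\Dclosure{(L\cap I_j)}$ (each piece being downward closed), deduce $I_i\subseteq\Dclosure{(L\cap I_j)}\subseteq I_j$ for some $j$, and then use the incomparability hypothesis to force $j=i$, which gives $I_i\subseteq\Dclosure{(L\cap I_i)}$, i.e.\ $I_i\in\Adh{L}$, with no detour through contradiction and no explicit directedness computation. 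Both proofs rest on the same underlying fact (the prime characterization of ideals, which itself encodes directedness), but your packaging is more economical and, I think, easier to verify; the only thing to be careful about, which you already note, is that the $n$-ary prime property is a routine induction from the binary one stated in the preliminaries.
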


We can now use an algorithm for downward closure computation
from~\cite{GoubaultLarrecqSchmitz2016}, which reduces the computation
to  adherence membership.
\begin{prop}\label{dc:computable}
Let $\cC$ be a full trio with decidable SUP and let $\gord$ be an EWUR.
Then $\gord$-downward closures of languages in $\cC$ are computable.
\end{prop}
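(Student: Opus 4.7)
The plan is to adapt the standard enumerate-and-verify approach of \cite{GoubaultLarrecqSchmitz2016}: enumerate candidate irredundant ideal decompositions $\Dclosure[\gord]{L}=I_1\cup\cdots\cup I_n$ and verify each of them using \cref{inclusion-vs-adh} and \cref{ewur:adh:decidable}. Since $(\Sigma^*,\gord)$ is a WQO, every downward closed set (in particular $\Dclosure[\gord]{L}$) is a finite union of ideals, and by discarding redundant ones we may assume the decomposition is \emph{irredundant}, i.e. $I_i\not\subseteq I_j$ for $i\ne j$. By EWUR condition~(b), ideals are recursively enumerable as regular languages, so we can systematically enumerate finite tuples of ideals and filter to those satisfying the irredundancy condition (inclusion of regular languages being decidable).

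For each such candidate tuple $(I_1,\ldots,I_n)$ I would check the two inclusions that characterize $\Dclosure[\gord]{L}=I_1\cup\cdots\cup I_n$. The upper inclusion $\Dclosure[\gord]{L}\subseteq I_1\cup\cdots\cup I_n$ is equivalent to $L\subseteq I_1\cup\cdots\cup I_n$ because each $I_j$ is $\gord$-downward closed. Its decidability follows because $\cC$ has decidable emptiness (SUP specializes, via the transduction erasing every letter, to testing $\eword\in\Dclosure{TL}$, equivalently $L\ne\emptyset$), the complement of $I_1\cup\cdots\cup I_n$ is effectively regular by EWUR, and full trios are effectively closed under intersection with regular languages. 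For the lower inclusion, one verifies $I_j\subseteq \Dclosure[\gord]{L}$ for every $j$; by \cref{inclusion-vs-adh} this is equivalent to $I_j\in\Adh[\gord]{L}$, which is decidable by \cref{ewur:adh:decidable}.

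Termination is ensured because the genuine irredundant decomposition of $\Dclosure[\gord]{L}$ exists and will eventually be produced by the enumeration; once it is, both verifications succeed and the algorithm halts, outputting a finite automaton for $I_1\cup\cdots\cup I_n$ (a finite union of regular languages handed to us by EWUR). The main delicate point is really packaging of EWUR: we need EWUR(b) to enumerate candidates and output a final automaton, EWUR(c) together with \cref{unboundedness:decidable} to decide each adherence membership, and the fact that full trios with decidable SUP give decidable emptiness and, hence, decidable inclusion into regular sets. Everything else is bookkeeping once these ingredients are in place.
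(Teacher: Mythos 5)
Your proposal is correct and follows essentially the same route as the paper's proof: enumerate irredundant finite unions of ideals (possible by EWUR condition~(b)), check $L\subseteq I_1\cup\cdots\cup I_n$ via emptiness of $L$ intersected with a regular complement, and check each $I_j\subseteq\Dclosure[\gord]{L}$ by reducing to adherence membership through \cref{inclusion-vs-adh} and \cref{ewur:adh:decidable}. One small point worth making explicit (the paper does so): \cref{inclusion-vs-adh} requires $L\subseteq I_1\cup\cdots\cup I_n$ as a hypothesis, so the upper inclusion must be verified \emph{before} you may invoke it for the lower one; your ordering is correct but the dependency deserves to be stated.
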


We continue with the decidability of separability by $\gord$-PTL for
EWUR $\gord$. We employ the following characterization of separability
in terms of adherences~\cite{GoubaultLarrecqSchmitz2016} for reducing
the separability problem to adherence membership.
\begin{prop}\label{separability}
Let $(X,\gord)$ be a WQO. Then, $K\subseteq X$ and $L\subseteq X$ are separable
by a $\gord$-PTL iff $\Adh[\gord]{K}\cap\Adh[\gord]{L}=\emptyset$.
\end{prop}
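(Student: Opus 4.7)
The plan is to prove the two implications separately, with ($\Rightarrow$) by contradiction using an ``eventual constancy'' lemma for PTLs on ideals, and ($\Leftarrow$) by contraposition using a König/WQO compactness argument on Boolean atoms.

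For ($\Rightarrow$), suppose $R$ is a $\gord$-PTL separating $K$ from $L$, written as a Boolean combination of $\Uclosure[\gord]{x_1},\ldots,\Uclosure[\gord]{x_n}$, and assume for contradiction that some $I\in\Adh[\gord]{K}\cap\Adh[\gord]{L}$. The key step is a stability claim: there exists $z\in I$ such that for every $y\in I$ with $z\gord y$, the index set $\{i : x_i\gord y\}$ equals the fixed set $\Sigma(I)=\{i\mid I\cap\Uclosure[\gord]{x_i}\neq\emptyset\}$, so that $y\in R$ is determined by $\Sigma(I)$. To prove this, for each $i\in\Sigma(I)$ pick a witness $z_i\in I\cap\Uclosure[\gord]{x_i}$ and, by iterated directedness of $I$, obtain $z\in I$ above every $z_i$; the forward inclusion then follows from $x_i\gord z_i\gord z\gord y$, and the reverse from downward-closedness of $I$ (if $x_i\gord y\in I$ then $x_i\in I$). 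Since $I\in\Adh[\gord]{K}$, some directed $D_K\subseteq K$ satisfies $\Dclosure[\gord]{D_K}=I$, so some $x\in D_K\subseteq K\cap I$ lies above $z$; the claim forces $x\in R$ and hence pins the profile of $I$ as ``accept''. Applying the same argument with $I\in\Adh[\gord]{L}$ produces $x'\in L\cap I$ above $z$, forcing $x'\in R\cap L$ and contradicting $R\cap L=\emptyset$.

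For ($\Leftarrow$), I argue contrapositively: assume no $\gord$-PTL separates $K$ from $L$ and construct a common ideal in both adherences. Enumerate $X=\{y_1,y_2,\ldots\}$. At each level $n$, the Boolean algebra generated by $\Uclosure[\gord]{y_1},\ldots,\Uclosure[\gord]{y_n}$ partitions $X$ into finitely many atoms; the union of those atoms meeting $K$ is a $\gord$-PTL containing $K$, so by hypothesis it also meets $L$, yielding an atom $A_n$ that intersects both $K$ and $L$. Since refining from level $n$ to $n+1$ splits each atom in two, König's lemma on this finite-branching tree produces a nested chain $A_1\supseteq A_2\supseteq\cdots$ of such bi-meeting atoms. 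Pick $k_n\in A_n\cap K$ and $\ell_n\in A_n\cap L$, and by Dickson's lemma applied to the product WQO on $X\times X$, extract indices $n_1<n_2<\cdots$ along which both $(k_{n_i})$ and $(\ell_{n_i})$ are ascending. Setting $D_K=\{k_{n_i}\}$, $D_L=\{\ell_{n_i}\}$, $I=\Dclosure[\gord]{D_K}$ and $I'=\Dclosure[\gord]{D_L}$ yields two ideals that are downward closures of directed sets inside $K$ and $L$ respectively. For each $y_j$, the nestedness of the atoms ensures $k_{n_i}$ and $\ell_{n_i}$ agree on membership in $\Uclosure[\gord]{y_j}$ as soon as $n_i\geq j$, and monotonicity of that membership along ascending chains shows $y_j\in I$ iff some $k_{n_i}\in\Uclosure[\gord]{y_j}$, with the analogous statement for $I'$. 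Hence $I=I'$, placing this single ideal in $\Adh[\gord]{K}\cap\Adh[\gord]{L}$.

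The main obstacle will be turning the ``atoms coincide at every finite level'' condition into the honest identity $I=I'$; this requires carefully combining monotonicity of upward-closure membership along ascending chains, the stabilization provided by atom refinement, and the fact that $y_j\in I$ is equivalent to $k_{n_i}\in\Uclosure[\gord]{y_j}$ for some $i$. The Dickson-style simultaneous extraction and the König's-lemma refinement of atoms are standard but need careful bookkeeping; once ideal equality is established, membership of $I$ in both adherences follows directly from directedness of $D_K,D_L$ and their containment in $K,L$.
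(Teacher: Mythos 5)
Your argument is correct and, since the paper itself does not reprove this proposition but cites it from \cite{GoubaultLarrecqSchmitz2016}, you are effectively supplying a self-contained alternative. Both directions hold up to scrutiny. In the forward direction, the ``stability'' step is sound: by directedness you manufacture $z\in I$ dominating one witness $z_i\in I\cap\Uclosure[\gord]{x_i}$ per $i\in\Sigma(I)$, so any $y\in I$ with $z\gord y$ has $\{i: x_i\gord y\}\supseteq\Sigma(I)$; the reverse inclusion is immediate because $y\in I\cap\Uclosure[\gord]{x_i}$ already certifies $i\in\Sigma(I)$. Picking $x\in D_K$ and $x'\in D_L$ above $z$ then forces $x,x'$ to share the same Boolean atom of $\Uclosure[\gord]{x_1},\dots,\Uclosure[\gord]{x_n}$, contradicting $K\subseteq R$, $L\cap R=\emptyset$. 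In the backward direction, the subtree of atoms meeting both $K$ and $L$ is downward closed and has a node at every level (this is exactly where inseparability is used), so K\H{o}nig's lemma yields a nested branch; the simultaneous extraction of an ascending subsequence in $X\times X$, followed by the ``eventual agreement plus monotonicity'' argument, correctly establishes $I=I'$ and hence a common adherence ideal. Compared to the topologically flavoured treatment in \cite{GoubaultLarrecqSchmitz2016}, your proof is more elementary and combinatorial, at the price of being slightly less general.

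Two small remarks. First, your backward direction silently assumes $X$ is countable (you enumerate $X=\{y_1,y_2,\dots\}$); the proposition is stated for an arbitrary WQO. This is harmless for the paper's use ($X=\Sigma^*$), but for full generality you would need to replace the K\H{o}nig-style extraction by, say, a Zorn/ultrafilter argument or restrict the enumeration to a countable subset that determines $I$ and $I'$. Second, the invocation of ``Dickson's lemma'' is really just the standard fact that a finite product of WQOs is a WQO together with the extraction of an ascending subsequence; worth phrasing that way to avoid suggesting $X=\N^k$. Neither point affects the correctness of the argument in the setting of the paper.
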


We can now use the algorithm from~\cite{GoubaultLarrecqSchmitz2016}
for deciding separability of languages $K$ and $L$ in our setting. By
\cref{separability}, we can use two semi-decision procedures. On the
one hand, we enumerate potential separators $S$ and check whether
$K\subseteq S$ and $L\cap S=\emptyset$. On the other hand, we
enumerate $\gord$-ideals $I$ and check if $I$ belongs to
$\Adh[\gord]{K}\cap\Adh[\gord]{L}$.
\begin{prop}\label{separability:decidable}
Let $\cC$ be a full trio with decidable SUP and $\gord$ be an EWUR.
Then separability by $\gord$-PTL is decidable for $\cC$.
\end{prop}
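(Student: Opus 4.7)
The plan is to obtain decidability by running two semi-decision procedures in parallel, as foreshadowed in the paragraph preceding the statement. By \cref{separability}, the languages $K$ and $L$ are $\gord$-PTL separable if and only if $\Adh[\gord]{K}\cap\Adh[\gord]{L}=\emptyset$. Hence it suffices to exhibit one procedure that halts whenever a separator exists and another that halts whenever a common ideal in the adherences exists.

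For the separator side, I would enumerate finite tuples $(w_1,\dots,w_m)$ of words in $\Sigma^*$ together with a propositional formula $\varphi$ on $m$ variables, and form the candidate separator $S=\varphi(\Uclosure[\gord]{w_1},\dots,\Uclosure[\gord]{w_m})$. By clause~(a) of the EWUR definition, each $\Uclosure[\gord]{w_i}$ is effectively regular, so $S$ and its complement $\overline{S}$ are effectively regular. Since $\cC$ is a full trio, it is effectively closed under intersection with regular languages, and decidable SUP entails decidable emptiness in $\cC$ (take $n=0$ in the SUP definition). Thus $K\cap\overline{S}=\emptyset$ and $L\cap S=\emptyset$ can both be tested, so the procedure halts and outputs ``separable'' exactly when some enumerated $S$ separates $K$ from $L$.

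For the inseparability side, I would use clause~(b) of EWUR to enumerate the ideals of $(\Sigma^*,\gord)$; for each candidate $I$, I would apply \cref{ewur:adh:decidable} (which combines clause~(c) and \cref{unboundedness:decidable}) to decide whether $I\in\Adh[\gord]{K}$ and whether $I\in\Adh[\gord]{L}$. If both tests succeed, the procedure halts with ``inseparable''. Completeness of this branch follows from \cref{separability}: whenever the two languages are not separable, some ideal lies in the intersection of their adherences and is thus eventually found by the enumeration.

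The two procedures together decide the problem because, by \cref{separability}, exactly one of them halts on every input. The main conceptual obstacle has already been discharged elsewhere in the paper, namely the reduction of adherence membership to an unboundedness problem over $\cC$ (that is, the existence of the counter automaton $\cA_I$ in clause~(c) of EWUR and the fact that unboundedness reduces to the SUP). Once this is in place, the remaining work in this proposition is the fairly mechanical parallel search just described; the real content lies in subsequently showing, in \cref{ideals}, that every parameterized WQO is indeed an EWUR so that clauses (a)--(c) are available.
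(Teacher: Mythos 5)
Your proposal is correct and follows essentially the same two parallel semi-decision procedures as the paper: one enumerating candidate $\gord$-PTL separators (checkable via effective regularity from clause~(a), closure under intersection with regular languages, and decidable emptiness derived from the SUP), the other enumerating ideals and testing adherence membership via \cref{ewur:adh:decidable}, with \cref{separability} guaranteeing exactly one branch halts. The only difference is that you spell out the mechanics of enumerating PTLs as Boolean combinations and of reducing emptiness to the SUP, which the paper leaves implicit.
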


\section{Ideal representations}\label{ideals}
In this section, we show that every parameterized WQO is an EWUR.
The fact that the subword ordering is an EWUR follows using arguments
from \cite{CzerwinskiMartensRooijenZeitounZetzsche2017a,Zetzsche2015b}.

\begin{prop}\label{ideals:ewur:subword}
The subword ordering $(\Sigma^*,\subword)$ is an EWUR.
\end{prop}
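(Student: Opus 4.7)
The plan is to verify the three conditions (a), (b), (c) of EWUR for the subword ordering, mostly by invoking the classical ideal theorem for $(\Sigma^*,\subword)$ and then giving an explicit construction of the witnessing counter automaton for (c).

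For (a), the upward closure $\Uclosure[\subword]{w}$ for $w=a_1\cdots a_n$ is simply the regular language $\Sigma^*a_1\Sigma^*a_2\Sigma^*\cdots a_n\Sigma^*$, and a finite automaton for it can be written down directly.

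For (b), the standard characterization of $\subword$-ideals in $\Sigma^*$ (see e.g.\ \cite{LerouxSchmitz2015,GoubaultLarrecqSchmitz2016}) states that every ideal has the form
\[ I \;=\; A_0^{\,*}\,b_1\,A_1^{\,*}\,b_2\cdots b_n\,A_n^{\,*} \]
for some $n\ge 0$, $A_0,\ldots,A_n\subseteq\Sigma$, and $b_1,\ldots,b_n\in\Sigma$, and conversely every such expression denotes an ideal. This gives a recursive enumeration of the ideals as regular languages, namely by enumerating all such products of subsets and letters.

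The main step is (c). Given an ideal $I=A_0^{\,*}b_1A_1^{\,*}\cdots b_nA_n^{\,*}$, I would build a counter automaton $\cA_I$ with counter set $C=\{(a,i)\mid i\in[0,n],\ a\in A_i\}$. It first guesses a factorization of the input as $u_0 b_1 u_1 b_2 \cdots b_n u_n$ (moving from ``block $i$'' to ``block $i+1$'' upon reading a $b_{i+1}$), aborting the run if some $u_i$ contains a letter outside $A_i$; while in block $i$, each input letter $a\in A_i$ increments the counter $(a,i)$. Then $\bar{\cA}_I(w)\ge k$ precisely when $w$ admits a factorization witnessing that $w$ lies in the approximation of $I$ where each $A_i^{\,*}$ block contains every letter of $A_i$ at least $k$ times. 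The equivalence ``$\cA_I$ is unbounded on $L$ iff $I\in\Adh[\subword]{L}$'' is then the content of the characterization of adherences to products used in \cite{CzerwinskiMartensRooijenZeitounZetzsche2017a,Zetzsche2015b}: one direction is immediate (a $\subword$-directed set $D\subseteq L$ with $\Dclosure[\subword]{D}=I$ must contain, for each $k$, an element pumping every $A_i$-block $k$ times), and the converse is obtained by taking an ascending sequence of such pumping witnesses and observing that their downward closure equals $I$ while they form a $\subword$-directed subset of $L$.

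The only delicate point is this last equivalence in (c); conditions (a) and (b) are essentially bookkeeping. Since the required unboundedness lemma and the adherence argument have already been worked out in \cite{CzerwinskiMartensRooijenZeitounZetzsche2017a,Zetzsche2015b}, the proof amounts to packaging their construction into the EWUR interface and invoking \cref{unboundedness:decidable}.
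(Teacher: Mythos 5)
Your (a) matches the paper, and (b) is in the spirit of the paper but is slightly off: not every $\subword$-ideal has the form $A_0^*b_1A_1^*\cdots b_nA_n^*$ (the set $\{a,\eword\}$ is an ideal that cannot be written this way). The correct characterization, which the paper uses, is that ideals are products of atoms $\{a,\eword\}$ and $\Gamma^*$, i.e.\ of the form $\{a_0,\eword\}\Gamma_1^*\{a_1,\eword\}\cdots\Gamma_n^*\{a_n,\eword\}$. That is a fixable slip; the real problem is in (c).

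The counter automaton you build is wrong: it only tracks per-block \emph{letter multiplicities}, and that is strictly weaker than adherence membership. Your $\cA_I$ is unbounded on $L$ precisely when, for each $k$, some $w\in L\cap I$ has a factorization in which each block contains every letter of $A_i$ at least $k$ times --- but this does not ensure that the words $a_0 w_1^k a_1\cdots w_n^k a_n$ (which generate $I$) actually embed into witnesses. Concretely, take $\Sigma=\{a,b\}$, $I=\{a,b\}^*$ (an ideal), and $L=\{a^m b^m\mid m\ge 0\}$. Your automaton is unbounded on $L$ (on $a^m b^m$ both counters reach $m$), yet $\Dclosure[\subword]{L}=\{a^i b^j\mid i,j\ge 0\}\subsetneq I$, so $I\not\subseteq\Dclosure[\subword]{(L\cap I)}$ and $I\notin\Adh[\subword]{L}$. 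The equivalence you assert therefore fails. The paper's construction avoids this by fixing, for each $\Gamma_i$, a word $w_i$ listing its letters once and requiring $\bar{\cA_I}(w)\ge k$ iff $w\in I$ and $a_0 w_1^k a_1\cdots w_n^k a_n\subword w$: the automaton cycles through $w_i$'s letters in order and counts completed cycles, which is the ordering constraint your per-letter counters discard. With that automaton the two directions of (c) go through exactly as you sketch; with yours, the ``unboundedness $\Rightarrow$ adherence'' direction is false.
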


The next step is to show that if $(\Gamma^*,\gord)$ is an EWUR and
$f\colon\Sigma^*\to\Gamma^*$ is a functional transduction, then
$(\Sigma^*,\gord_f)$ is an EWUR. We begin with some general
observations about ideals of WQOs of the shape $\gord_f$, where
$f\colon X\to Y$ is an arbitrary function and $(Y,\gord)$ is a
WQO. First, we describe ideals of $(X,\gord_f)$ in terms of ideals of
$(Y,\gord)$. 

It is easy to see that every ideal of $(X,\gord_f)$ is of the form
form $f^{-1}(J)$ for some ideal $J$ of $(Y,\gord)$. However, a set
$f^{-1}(J)$ is not always an ideal of $(X,\gord_f)$. For example,
suppose $f\colon\Sigma^*\to\N\times\N$ has $f(w)=(|w|,0)$ if $|w|$ is
even and $f(w)=(0,|w|)$ if $|w|$ is odd.  Then $f^{-1}(\N\times\N)$
is not upward directed although $\N\times\N$ is an ideal.
\begin{lem}\label{functions:ideals}
  $I\subseteq X$ is an ideal of $(X,\ford{f})$ if and only if
  $I=f^{-1}(J)$ for some ideal $J$ of $(Y,\gord)$ such that
  $\Dclosure{f(f^{-1}(J))}=J$.
\end{lem}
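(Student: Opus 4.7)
The plan is to prove both directions by direct verification, with the key insight that the condition $\Dclosure{f(f^{-1}(J))}=J$ is exactly what is needed for the reverse implication to go through (and is automatic from the construction in the forward direction).

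For the ``only if'' direction, assume $I\subseteq X$ is an ideal of $(X,\ford{f})$. I would define
\[ J := \Dclosure{f(I)}. \]
Then $J$ is downward closed by construction, non-empty because $I$ is. For directedness, given $y_1,y_2\in J$, pick $x_1,x_2\in I$ with $y_i\gord f(x_i)$; by $\ford{f}$-directedness of $I$ there is $x\in I$ with $x_i\ford{f} x$, i.e. $f(x_i)\gord f(x)$, so $f(x)\in J$ dominates both $y_i$. The identity $I=f^{-1}(J)$ is then verified by: $I\subseteq f^{-1}(J)$ is trivial since $f(I)\subseteq J$; and if $x\in f^{-1}(J)$, then $f(x)\gord f(x')$ for some $x'\in I$, so $x\ford{f} x'$, and $\ford{f}$-downward closedness of $I$ gives $x\in I$. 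Substituting $I=f^{-1}(J)$ into $J=\Dclosure{f(I)}$ immediately yields $\Dclosure{f(f^{-1}(J))}=J$.

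For the ``if'' direction, suppose $J$ is an ideal of $(Y,\gord)$ with $\Dclosure{f(f^{-1}(J))}=J$, and set $I:=f^{-1}(J)$. Non-emptiness of $I$ follows because $J$ is non-empty and $J=\Dclosure{f(f^{-1}(J))}$ forces $f^{-1}(J)$ to be non-empty. Downward closedness of $I$ under $\ford{f}$ is immediate: if $x'\ford{f} x\in I$, then $f(x')\gord f(x)\in J$, so $f(x')\in J$ by downward closedness of $J$. For directedness, given $x_1,x_2\in I$, the $\gord$-directedness of $J$ produces some $y\in J$ with $f(x_i)\gord y$; the hypothesis $\Dclosure{f(f^{-1}(J))}=J$ then gives some $x\in f^{-1}(J)=I$ with $y\gord f(x)$, whence $x_i\ford{f} x$ for $i=1,2$.

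There is no real obstacle: the lemma is essentially a bookkeeping statement explaining which preimages $f^{-1}(J)$ fail to be ideals and why. The extra condition $\Dclosure{f(f^{-1}(J))}=J$ is precisely the missing ingredient in the counterexample preceding the statement (where $J=\N\times\N$ is not the downward closure of $f(f^{-1}(J))=\{(n,0)\mid n\text{ even}\}\cup\{(0,n)\mid n\text{ odd}\}$), and it is exactly what the directedness argument in the reverse direction consumes.
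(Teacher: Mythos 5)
Your proof is correct and follows essentially the same route as the paper's: in the forward direction you define $J:=\Dclosure{f(I)}$, verify it is an ideal with $I=f^{-1}(J)$, and deduce $\Dclosure{f(f^{-1}(J))}=J$; in the reverse direction you use the hypothesis $\Dclosure{f(f^{-1}(J))}=J$ exactly where the paper does, to lift a common upper bound in $J$ back into $I$. You are slightly more explicit than the paper about non-emptiness and the directedness computations, but the argument is the same.
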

Note that \cref{functions:ideals} tells us how to represent ideals of
$(X,\gord_f)$ when we have a way of representing ideals of
$(Y,\gord)$. Hence, if the set of ideals of $(\Gamma^*,\gord)$ is
recursively enumerable, then so is the set of ideals of
$(\Sigma^*,\gord_f)$. We will also need to transfer membership in
adherences from $(Y,\gord)$ to $(X,\gord_f)$.
\begin{lem}\label{functions:adherences}
  If $J\subseteq Y$ is an ideal of $(Y,\gord)$ with
  $\Dclosure{f(f^{-1}(J))}=J$, then $f^{-1}(J)\in\Adh{L}$ if and only
  if $J\in\Adh{f(L)}$.
\end{lem}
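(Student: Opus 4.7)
The plan is to invoke the characterization of adherence membership used earlier in the paper: for any WQO $(Z,\sqsubseteq)$ and set $A\subseteq Z$, an ideal $I$ lies in $\Adh[\sqsubseteq]{A}$ iff $I\subseteq\Dclosure[\sqsubseteq]{(A\cap I)}$. Under this reformulation, and given that $f^{-1}(J)$ is an ideal of $(X,\gord_f)$ by \cref{functions:ideals}, the claim reduces to proving
\[
f^{-1}(J)\subseteq\Dclosure[\gord_f]{(L\cap f^{-1}(J))}\iff J\subseteq\Dclosure[\gord]{(f(L)\cap J)},
\]
which I would establish by direct element-chasing through $f$, using only that by definition $x\gord_f x'$ iff $f(x)\gord f(x')$.

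For the implication ``$\Rightarrow$'', I take an arbitrary $y\in J$ and use the hypothesis $\Dclosure[\gord]{f(f^{-1}(J))}=J$ to produce an $x\in f^{-1}(J)$ with $y\gord f(x)$. The assumed inclusion then yields some $x'\in L\cap f^{-1}(J)$ with $x\gord_f x'$, hence $f(x)\gord f(x')$. Since $x'\in L$ gives $f(x')\in f(L)$ and $x'\in f^{-1}(J)$ gives $f(x')\in J$, the element $f(x')$ lies in $f(L)\cap J$ and satisfies $y\gord f(x')$, witnessing $y\in\Dclosure[\gord]{(f(L)\cap J)}$.

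For ``$\Leftarrow$'', I take $x\in f^{-1}(J)$, so $f(x)\in J$, and pick $z\in f(L)\cap J$ with $f(x)\gord z$ from the assumed inclusion. Writing $z=f(x')$ for some $x'\in L$, the condition $z\in J$ forces $x'\in f^{-1}(J)$, so $x'\in L\cap f^{-1}(J)$, and $f(x)\gord f(x')$ translates to $x\gord_f x'$; hence $x\in\Dclosure[\gord_f]{(L\cap f^{-1}(J))}$.

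I do not expect any real obstacle: the result is essentially a translation across $f$ once the adherence characterization is applied. The hypothesis $\Dclosure[\gord]{f(f^{-1}(J))}=J$ enters in exactly one spot---lifting an arbitrary $y\in J$ back to a preimage element in the ``$\Rightarrow$'' direction---and it additionally ensures, via \cref{functions:ideals}, that $f^{-1}(J)$ really is an ideal, so that adherence membership on the left-hand side is well-posed in the first place.
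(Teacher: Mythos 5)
Your proof is correct and follows essentially the same route as the paper's: both directions use the characterization $I\in\Adh{L}$ iff $I\subseteq\Dclosure{(L\cap I)}$ and then push elements back and forth through $f$, with the hypothesis $\Dclosure{f(f^{-1}(J))}=J$ doing its work in exactly the one place you identified. The only addition you make beyond the paper's text is the explicit remark that $f^{-1}(J)$ is an ideal (so that the characterization applies), which the paper leaves implicit but which is indeed guaranteed by \cref{functions:ideals}.
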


Equipped with \cref{functions:ideals,functions:adherences}, it is now
straightforward to show that $(\Sigma^*,\gord_f)$ is an EWUR.
\begin{prop}\label{ewur:transducers}
If $(\Gamma^*,\gord)$ is an EWUR and $f\colon \Sigma^*\to\Gamma^*$ is
a functional transducer, then $(\Sigma^*,\gord_f)$ is an EWUR.
\end{prop}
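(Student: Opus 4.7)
The plan is to verify the three EWUR conditions for $(\Sigma^*, \gord_f)$, transferring each from $(\Gamma^*, \gord)$ through $f$. For condition (a), I would use $\Uclosure[\gord_f]{w} = f^{-1}(\Uclosure[\gord]{f(w)})$: since $f$ is functional I can compute $f(w)$, the EWUR of $\gord$ supplies an automaton for $\Uclosure[\gord]{f(w)}$, and its preimage under the rational transduction $f$ is effectively regular.

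For condition (b), I would invoke \cref{functions:ideals}: the ideals of $(\Sigma^*, \gord_f)$ are precisely the sets $f^{-1}(J)$, where $J$ is an ideal of $(\Gamma^*, \gord)$ satisfying $\Dclosure[\gord]{f(f^{-1}(J))} = J$. Observing $f(f^{-1}(J)) = f(\Sigma^*) \cap J$ and that the inclusion $\Dclosure[\gord]{f(\Sigma^*) \cap J} \subseteq J$ is automatic, the condition reduces to the regular-language inclusion $J \subseteq \Dclosure[\gord]{f(\Sigma^*) \cap J}$. The right-hand side is effectively regular: by \cref{dc:computable}, applied to the full trio of regular languages (whose SUP is trivially decidable), we can compute $\gord$-downward closures of regular sets. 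Enumerating ideals $J$ via the EWUR of $\gord$, computing $f^{-1}(J)$, and retaining only those $J$ passing this test yields an r.e. list of all ideals of $(\Sigma^*, \gord_f)$ as regular languages.

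For condition (c), given $I = f^{-1}(J)$ enumerated together with its witness $J$, I would take the counter automaton $\cA_J$ provided by the EWUR of $\gord$ and build $\cA_I$ as the synchronous product of the transducer for $f$ with $\cA_J$: on input $w \in \Sigma^*$, $\cA_I$ nondeterministically simulates $f$'s transducer to produce $f(w)$ letter by letter and feeds each output letter into $\cA_J$, retaining $\cA_J$'s counters. Since $f$ is functional, every transducer run on $w$ outputs the same word $f(w)$, so the set of arriving counter values of $\cA_I$ on $w$ coincides with that of $\cA_J$ on $f(w)$. Hence $\cA_I$ is unbounded on $L$ iff $\cA_J$ is unbounded on $f(L)$, which by \cref{functions:adherences} is equivalent to $J \in \Adh[\gord]{f(L)}$ and thus to $I \in \Adh[\gord_f]{L}$.

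The bulk of the work lies in (b): \cref{functions:ideals} produces ideals only up to the side condition $\Dclosure[\gord]{f(f^{-1}(J))} = J$, and effectively checking this condition forces us to invoke downward-closure computation (\cref{dc:computable}) from inside the EWUR verification itself. Conditions (a) and (c) are then routine structural transfers through the functional transduction $f$.
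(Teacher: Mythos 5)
Your proposal is correct and matches the paper's own proof essentially step for step: condition~(a) via the preimage identity $\Uclosure[\gord_f]{w}=f^{-1}(\Uclosure[\gord]{f(w)})$, condition~(b) by enumerating $\gord$-ideals $J$ and checking $\Dclosure[\gord]{f(f^{-1}(J))}=J$ using downward-closure computation for the full trio of regular languages, and condition~(c) by a product of $\cA_J$ with the transducer for $f$. Your observation that $f(f^{-1}(J))=f(\Sigma^*)\cap J$ and that one inclusion of the side condition is automatic is a small streamlining of the check in~(b), but not a different route.
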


It remains to be shown that being an EWUR is preserved by taking a
conjunction.  Our first step is to characterize which sets are ideals
of a conjunction. Once the statement is found, the proof is relatively
straightforward.
\begin{prop}\label{conjunction-ideals}
Let $S=(\gord_s)_{s\in S}$ be a finite family of WQOs over $X$ and let $(X,\gord)$ be
the conjunction of $S$.  Then $I\subseteq X$ is an ideal of $(X,\gord)$ iff it
can be written as $I=\bigcap_{s\in S} I_s$, where each $I_s$ is an ideal of
$(X,\gord_s)$ and $(I_s)_{s\in S}$ belongs to $\Adh[S]{I}$. 
\end{prop}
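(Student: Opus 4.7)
The plan is to unfold the definitions directly, exploiting the fact that the conjunction $\gord$ refines each $\gord_s$, so every $\gord$-directed set is in particular $\gord_s$-directed, and transitivity of each $\gord_s$ can be combined with a single $\gord$-upper bound to get simultaneous $\gord_s$-upper bounds.

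For the forward direction, suppose $I$ is a $\gord$-ideal. The natural candidates are $I_s := \Dclosure[\gord_s]{I}$, each of which is $\gord_s$-downward closed by construction and $\gord_s$-directed (since $I$ itself is $\gord$-directed and $\gord \subseteq \gord_s$). Taking $D := I$ immediately witnesses that $(I_s)_{s\in S} \in \Adh[S]{I}$. The only non-obvious point is $I = \bigcap_s I_s$: the inclusion $\subseteq$ is immediate, and for $\supseteq$, given $x \in \bigcap_s I_s$, pick $y_s \in I$ with $x \gord_s y_s$ for each $s$, then use $\gord$-directedness of $I$ to find a common $\gord$-upper bound $y \in I$ of the finitely many $y_s$; transitivity in each $\gord_s$ yields $x \gord_s y$ for all $s$, i.e.\ $x \gord y$, and $\gord$-downward closedness of $I$ gives $x \in I$.

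For the backward direction, suppose $I = \bigcap_s I_s$ with each $I_s$ a $\gord_s$-ideal and $(I_s)_{s\in S} \in \Adh[S]{I}$, witnessed by a $\gord$-directed set $D \subseteq I$ with $I_s = \Dclosure[\gord_s]{D}$ for each $s$. Non-emptiness of $I$ follows from that of $D$, which must be non-empty since otherwise each $I_s$ would be empty, contradicting that they are ideals. Downward closedness of $I$ under $\gord$ is inherited coordinatewise: if $x \gord y \in I$, then $x \gord_s y \in I_s$ and hence $x \in I_s$ for every $s$. For $\gord$-directedness, take $x, y \in I$; for each $s$ choose $d^x_s, d^y_s \in D$ with $x \gord_s d^x_s$ and $y \gord_s d^y_s$, and use $\gord$-directedness of $D$ to obtain a single $d \in D$ that is $\gord$-above all of these finitely many elements simultaneously. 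Then $x \gord_s d$ and $y \gord_s d$ for every $s$, hence $x \gord d$ and $y \gord d$, with $d \in D \subseteq I$.

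The main subtlety lies in this last step: it relies crucially on producing a single element $d \in D$ dominating the finitely many $d^x_s, d^y_s$ in the \emph{joint} order $\gord$, not merely in each coordinate separately. This is precisely the role of the bundled definition of $\Adh[S]{L}$, which insists on a common $\gord$-directed witness $D$ rather than independent $\gord_s$-directed sets per coordinate. Without this common witness one could only amalgamate $\gord_s$-upper bounds separately in each coordinate, and the resulting intersection $\bigcap_s I_s$ could fail to be $\gord$-directed, which is exactly the phenomenon behind the failure noted just before \cref{functions:ideals}.
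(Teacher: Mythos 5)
Your proof is correct and takes essentially the same approach as the paper. The only cosmetic difference is in the backward direction: the paper establishes $I=\Dclosure[\gord]{D}$ and then reads off that $I$ is an ideal as the downward closure of a nonempty $\gord$-directed set, whereas you verify the three ideal properties directly --- but the decisive step, using finiteness of $S$ together with the single shared $\gord$-directed witness $D$ to amalgamate per-coordinate $\gord_s$-upper bounds into one $\gord$-upper bound, is identical in both.
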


The next step describes how to reduce the adherence membership problem for
conjunctions to the adherence membership problem for the participating
orderings. Again, proving the statement is straightforward.
\begin{prop}\label{conjunction-adherence}
Let $S=(\gord_s)_{s\in S}$ be a finite family of WQOs over $X$ and let $(X,\gord)$ be
the conjunction of $S$. Suppose $I_s$ is an $\gord_s$-ideal for each $s\in S$ and $I=\bigcap_{s\in S} I_s$ and that $(I_s)_{s\in S}$ belongs to $\Adh[S]{I}$. Then
$I$ belongs to $\Adh[\gord]{L}$ iff $(I_s)_{s\in S}$ belongs to $\Adh[S]{L}$.
\end{prop}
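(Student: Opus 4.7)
The plan is to show that both directions of the equivalence can be witnessed by the very same $\gord$-directed set $D \subseteq L$. The key combinatorial fact underlying the argument is that for any $\gord$-directed set $D\subseteq X$ and any finite family $(\gord_s)_{s\in S}$ with conjunction $\gord$, one has
\[ \Dclosure[\gord]{D} \;=\; \bigcap_{s\in S} \Dclosure[\gord_s]{D}. \]
The inclusion $\subseteq$ is immediate from $\gord=\bigcap_s \gord_s$. For $\supseteq$, given $x$ with $x\gord_s d_s$ for some $d_s\in D$ for each $s$, one uses finiteness of $S$ together with $\gord$-directedness of $D$ (iterated $|S|{-}1$ times) to obtain a common $\gord$-upper bound $d\in D$ of the $d_s$; then $x\gord_s d_s\gord d$ gives $x\gord_s d$ for every $s$, i.e.\ $x\gord d$.

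For the direction $\Rightarrow$, I would assume $I\in\Adh[\gord]{L}$ and pick a $\gord$-directed $D\subseteq L$ with $\Dclosure[\gord]{D}=I$. I claim that this same $D$ witnesses $(I_s)_{s\in S}\in\Adh[S]{L}$, i.e.\ that $\Dclosure[\gord_s]{D}=I_s$ for every $s$. The inclusion $\subseteq$ follows from $D\subseteq I\subseteq I_s$ and $\gord_s$-downward closedness of $I_s$. For $\supseteq$, this is the one point where the hypothesis $(I_s)_{s\in S}\in\Adh[S]{I}$ is used: it provides a $\gord$-directed $D_0\subseteq I$ with $\Dclosure[\gord_s]{D_0}=I_s$ for every $s$. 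Given $x\in I_s$, choose $d_0\in D_0$ with $x\gord_s d_0$. Since $d_0\in I=\Dclosure[\gord]{D}$, there is $d\in D$ with $d_0\gord d$, hence $d_0\gord_s d$, and by transitivity $x\gord_s d$, so $x\in\Dclosure[\gord_s]{D}$.

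For the direction $\Leftarrow$, assume $(I_s)_{s\in S}\in\Adh[S]{L}$, witnessed by a $\gord$-directed $D\subseteq L$ with $\Dclosure[\gord_s]{D}=I_s$ for each $s$. The displayed key identity applied to this $D$ yields
\[ \Dclosure[\gord]{D} \;=\; \bigcap_{s\in S}\Dclosure[\gord_s]{D} \;=\; \bigcap_{s\in S} I_s \;=\; I, \]
so $D\subseteq L$ also witnesses $I\in\Adh[\gord]{L}$.

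There is no real obstacle here; the only subtlety is recognising precisely where the hypothesis $(I_s)_{s\in S}\in\Adh[S]{I}$ is needed, namely to promote the $\gord_s$-downward closure of $D$ inside $I_s$ to the full set $I_s$ in the forward direction. The backward direction requires nothing beyond the key identity, whose proof is the only place where finiteness of $S$ is used.
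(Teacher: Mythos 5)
Your proof is correct and follows essentially the same approach as the paper: both directions are witnessed by the same $\gord$-directed set (with the hypothesis $(I_s)_{s\in S}\in\Adh[S]{I}$ used only in the forward direction to upgrade $\Dclosure[\gord_s]{D}\subseteq I_s$ to equality), and the backward direction uses finiteness of $S$ plus $\gord$-directedness exactly as the paper does, merely packaged as a standalone identity $\Dclosure[\gord]{D}=\bigcap_s\Dclosure[\gord_s]{D}$.
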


As expected, a product construction allows us to characterize the
adherence membership for conjunction. 
\begin{lem}\label{ewur:product}
Suppose $(\Sigma^*,\gord_i)$ is an EWUR for $i=1,2$.  Given ideals $I_1$ and
$I_2$ for $\gord_1$ and $\gord_2$, respectively, we can construct a counter
automaton $\cA$ such that for every language $L\subseteq\Sigma^*$, $\cA$ is
unbounded on $L$ iff $(I_1,I_2)$ belongs to $\Adh[\gord_1,\gord_2]{L}$. 
\end{lem}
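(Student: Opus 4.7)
The plan is to construct $\cA$ as the synchronous product of three automata: the counter automaton $\cA_{I_1}$ supplied by EWUR condition~(c) for $(\Sigma^*,\gord_1)$ and $I_1$, the counter automaton $\cA_{I_2}$ supplied analogously for $(\Sigma^*,\gord_2)$ and $I_2$, and a DFA $\cB$ for $I_1\cap I_2$, which is effectively constructible because ideals are effectively regular by EWUR conditions~(a) and~(b). Then $\cA$ has counter set $C_1\sqcup C_2$, admits runs only on words in $I_1\cap I_2$, and its runs on $w$ correspond to pairs of runs of $\cA_{I_1}$ and $\cA_{I_2}$ on $w$ with counter vectors combined disjointly. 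Consequently $\overline{\cA}(w)=\min(\overline{\cA_{I_1}}(w),\overline{\cA_{I_2}}(w))$ for $w\in I_1\cap I_2$, so $\cA$ is unbounded on $L$ iff for every $k\in\N$ there is $w\in L\cap I_1\cap I_2$ with $\overline{\cA_{I_s}}(w)\ge k$ for both $s\in\{1,2\}$ simultaneously.

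For the $(\Leftarrow)$ direction, let $D\subseteq L$ be a $\gord$-directed witness of $(I_1,I_2)\in\Adh[\gord_1,\gord_2]{L}$, where $\gord$ denotes the conjunction, with $\Dclosure[\gord_s]{D}=I_s$ for each $s$. Using countability of $\Sigma^*$ and directedness, I extract a $\gord$-ascending cofinal sequence $e_1\gord e_2\gord\cdots$ in $D$; standard cofinality arguments give $\Dclosure[\gord_s]{\{e_i\}}=I_s$, and $\{e_i\}\subseteq D\subseteq I_1\cap I_2$. By EWUR~(c) each $\cA_{I_s}$ is unbounded on $\{e_i\}$. The key claim is that $\overline{\cA_{I_s}}(e_i)\to\omega$: otherwise some $k_0$ bounds $\overline{\cA_{I_s}}(e_i)$ along an infinite index set $J$, but $\{e_i:i\in J\}$ remains $\gord_s$-ascending and cofinal in $\{e_i\}$, so $\Dclosure[\gord_s]{\{e_i:i\in J\}}=I_s$ and EWUR forces $\cA_{I_s}$ unbounded on $\{e_i:i\in J\}$, contradicting the bound. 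Thus some $e_i\in L$ simultaneously satisfies $\overline{\cA_{I_s}}(e_i)\ge k$ for both $s$, making $\cA$ unbounded on $L$.

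For the $(\Rightarrow)$ direction, pick witnesses $w_k\in L\cap I_1\cap I_2$ with $\overline{\cA_{I_s}}(w_k)\ge k$ for both $s$. Since the conjunction $\gord$ is itself a WQO, I extract a $\gord$-ascending subsequence $u_i=w_{k_i}$ with $k_1<k_2<\cdots$; then $\{u_i\}$ is $\gord$-directed and lies in $L\cap I_1\cap I_2$, and $\overline{\cA_{I_s}}(u_i)\ge k_i\ge i$ shows $\cA_{I_s}$ unbounded on $\{u_i\}$. Hence $I_s\in\Adh[\gord_s]{\{u_i\}}$, giving $I_s\subseteq\Dclosure[\gord_s]{\{u_i\}}$; the reverse inclusion follows from $\{u_i\}\subseteq I_s$, so $\Dclosure[\gord_s]{\{u_i\}}=I_s$. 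Thus $\{u_i\}$ witnesses $(I_1,I_2)\in\Adh[\gord_1,\gord_2]{L}$.

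The main obstacle is the forward direction, where EWUR~(c) only provides individual unboundedness of each $\cA_{I_s}$ and no a priori monotonicity of $\overline{\cA_{I_s}}$ along $\gord_s$. The sub-subsequence contradiction above is what converts this individual guarantee into the simultaneous divergence of both counter norms along a single $\gord$-ascending chain, which is precisely what the product counter automaton $\cA$ needs to detect.
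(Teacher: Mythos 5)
Your proposal is correct, and the overall strategy — build a product counter automaton and use WQO/chain extraction to go back and forth between unboundedness and adherence membership — is the same as the paper's. That said, you streamline two points. First, you add a DFA for $I_1\cap I_2$ to the product. The paper omits it and instead invokes the observation that a counter automaton unbounded on $K\cup L$ is unbounded on $K$ or on $L$, to argue that $\cA$ must in fact be unbounded on $L\cap I_1\cap I_2$ (otherwise $\cA_1$ would be unbounded on $L\setminus I_1$, which is impossible since $I_1\notin\Adh[\gord_1]{L\setminus I_1}$). Your restriction builds this in and makes the witness words land in $I_1\cap I_2$ automatically. Second, for the direction from adherence to unboundedness, the paper takes the directed set $D$, extracts one sequence on which $\cA_1$ is unbounded and another on which $\cA_2$ is unbounded, uses directedness to interleave them into a dominating chain $w_1\gord w_2\gord\cdots$, and then passes to nested subsequences to obtain simultaneous growth. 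You instead extract a single $\gord$-ascending cofinal chain $(e_i)$ in $D$ up front and prove $\overline{\cA_{I_s}}(e_i)\to\omega$ directly by a sub-subsequence contradiction: any infinite bounded subset of the chain would still be $\gord_s$-directed and cofinal, hence would have $\gord_s$-downward closure $I_s$, so EWUR~(c) would force $\cA_{I_s}$ unbounded on it. This is a mild reorganization that avoids interleaving; both arguments hinge on the same mechanism (infinite subsets of an ascending chain inherit the downward closure) and require no new ideas. Two small points worth being explicit about: the cofinal chain exists because $\Sigma^*$ (hence $D$) is countable and directed, and the claim $\overline{\cA}(w)=\min(\overline{\cA_{I_1}}(w),\overline{\cA_{I_2}}(w))$ relies on $\sup_{x,y}\min(f(x),g(y))=\min(\sup_x f,\sup_y g)$, which holds here because runs of the two factor automata on a fixed $w$ are chosen independently.
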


The following is now a consequence of the previous steps.
\begin{prop}\label{ewur:conjunction}
If $\gord_1$ and $\gord_2$ are EWUR, then their conjunction is an EWUR as well.
\end{prop}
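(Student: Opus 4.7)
\smallskip

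\noindent\textbf{Proof proposal.} The plan is to verify the three defining conditions (a), (b), (c) of EWUR for the conjunction $\gord$ of $\gord_1$ and $\gord_2$, leveraging \cref{conjunction-ideals,conjunction-adherence,ewur:product} together with the assumed EWUR properties of $\gord_1$ and $\gord_2$.

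For condition (a), observe that $\Uclosure[\gord]{w}=\Uclosure[\gord_1]{w}\cap\Uclosure[\gord_2]{w}$, so effective regularity of the left-hand side follows immediately from effective regularity of each $\Uclosure[\gord_i]{w}$ and effective closure of regular languages under intersection.

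For conditions (b) and (c) we always represent a $\gord$-ideal $I$ by a pair $(I_1,I_2)$, where $I_i$ is a $\gord_i$-ideal and $I=I_1\cap I_2$, as permitted by \cref{conjunction-ideals}. To enumerate ideals, I would enumerate all pairs $(I_1,I_2)$ using the enumerations guaranteed by condition (b) for $\gord_1$ and $\gord_2$; for each pair, I compute the regular language $I_1\cap I_2$ and then test whether it represents a genuine $\gord$-ideal, which by \cref{conjunction-ideals} amounts to checking $(I_1,I_2)\in\Adh[\gord_1,\gord_2]{I_1\cap I_2}$. Applying \cref{ewur:product} to the pair $(I_1,I_2)$ yields a counter automaton $\cA$ such that $\cA$ is unbounded on $L$ iff $(I_1,I_2)\in\Adh[\gord_1,\gord_2]{L}$. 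Taking $L=I_1\cap I_2$, a regular language, the condition to check reduces to unboundedness of $\cA$ on a regular language. Since regular languages form a full trio with decidable SUP, \cref{unboundedness:decidable} renders this check effective, and so we can filter the enumeration to obtain a recursive enumeration of exactly the $\gord$-ideals (represented as regular languages via their pair representation).

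For condition (c), given a $\gord$-ideal $I$ produced by the above enumeration together with its pair $(I_1,I_2)$, I apply \cref{conjunction-adherence} to obtain $I\in\Adh[\gord]{L}$ iff $(I_1,I_2)\in\Adh[\gord_1,\gord_2]{L}$, and then feed $(I_1,I_2)$ into \cref{ewur:product} to produce a counter automaton $\cA_I$ which is unbounded on $L$ iff $(I_1,I_2)\in\Adh[\gord_1,\gord_2]{L}$. This $\cA_I$ is the required unboundedness reduction.

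The only slightly delicate step is the enumeration in (b): we must be sure that the membership test $(I_1,I_2)\in\Adh[\gord_1,\gord_2]{I_1\cap I_2}$ is effective on regular inputs, and here the key ingredient is that regular languages enjoy decidable SUP, which combined with \cref{ewur:product,unboundedness:decidable} does the job. Once this is in place, (a), (b), (c) all follow essentially by unpacking definitions, so there is no further obstacle.
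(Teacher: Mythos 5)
Your proposal is correct and follows essentially the same route as the paper: condition (a) via the intersection identity for upward closures, condition (b) via enumerating pairs $(I_1,I_2)$ and using \cref{ewur:product} together with \cref{unboundedness:decidable} to decide $(I_1,I_2)\in\Adh[\gord_1,\gord_2]{I_1\cap I_2}$, and condition (c) via \cref{conjunction-adherence} and \cref{ewur:product}. You are in fact slightly more explicit than the paper, which spells out (a) and (b) but leaves (c) as an implicit consequence of the preceding lemmas.
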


\subsection*{Orderings defined by labeling automata}\label{ideals:auto}
The preceding results already show that every parameterized WQO is an EWUR.
However, since we will study separability by $\BSS[\MOD]$, it will be crucial to
have an explicit, i.e. syntactic representation of ideals of a particular type
of parameterized WQOs, namely those defined by labeling automata. Here, we develop
such a syntax.

Let $\cA$ be a labeling automaton over $\Sigma^*$,
$u_0,\ldots,u_n\in\Sigma^*$, and $v_1,\ldots,v_n\in\Sigma^*$. The word
$w=u_0v_1u_1\cdots v_nu_n$ (more precisely: this particular
decomposition) is a \emph{loop pattern (for $\cA$)} if the run of
$\cA$ on $w$ loops at each $v_i$, $i\in[1,n]$.  In other words, $\cA$
is in the same state before and after reading $v_i$.

\begin{thm}\label{ideals:labeling}
Let $\cA$ be a labeling automaton. The $\autord{\cA}$-ideals are precisely
the sets of the form $\Dclosure[\autord{\cA}]{u_0v_1^*u_1\cdots v_n^*u_n}$,
where $u_0v_1^*u_1\cdots v_n^*u_n$ is a loop pattern for $\cA$.
\end{thm}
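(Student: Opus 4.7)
The plan is to prove both directions of the characterization: every set of the stated form is an $\autord{\cA}$-ideal, and every $\autord{\cA}$-ideal arises from some loop pattern.

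\textbf{Every such set is an ideal.} For any loop pattern $u_0 v_1^* u_1 \cdots v_n^* u_n$, the set $\Dclosure[\autord{\cA}]{u_0 v_1^* u_1 \cdots v_n^* u_n}$ is nonempty and $\autord{\cA}$-downward closed by construction; the only thing to verify is directedness. The observation is that $u_0 v_1^{k_1} u_1 \cdots v_n^{k_n} u_n \autord{\cA} u_0 v_1^{k_1'} u_1 \cdots v_n^{k_n'} u_n$ whenever $k_i \le k_i'$ componentwise: both sides share the same $\sigma_{\cA}$-value (determined by the loop pattern), and the smaller word's run embeds as a subword of the larger's by keeping only $k_i$ of the $k_i'$ loop iterations at each position. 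Componentwise maxima thus furnish upper bounds, yielding directedness.

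\textbf{Every ideal has this form.} Let $I$ be an $\autord{\cA}$-ideal. Recall that $\autord{\cA}$ is the conjunction of $\subword_f$ (for $f(w) = \sigma_{\cA}(w)$, viewed as a length-$1$ word over $Q\times Q$) and $\subword_g$ (for $g(w) = \cA(w) \in E^*$). By \cref{conjunction-ideals} together with \cref{functions:ideals}, $I$ decomposes as $I = I_f \cap I_g$, where $I_f = \{w : \sigma_{\cA}(w) = (p_0, q_{\text{end}})\}$ for some fixed state pair, and $I_g = g^{-1}(J)$ for a $\subword$-ideal $J \subseteq E^*$ satisfying the compatibility $\Dclosure[\subword]{g(g^{-1}(J))} = J$. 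By the standard atom decomposition of subword ideals, $J = A_1 A_2 \cdots A_m$ where each $A_i$ is either a singleton atom $(e+\eword)$ for some edge $e \in E$, or a star atom $B^*$ for some $B \subseteq E$.

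The main obstacle is to show that this atom decomposition is run-compatible with $\cA$: there exist backbone states $p_0 = q_0, q_1, \ldots, q_m = q_{\text{end}}$ such that every singleton atom $A_i = (e+\eword)$ has $e$ going from $q_{i-1}$ to $q_i$, and each star atom $A_i = B^*$ satisfies $q_{i-1} = q_i$ with every $b \in B$ lying on a cycle through $q_{i-1}$ consisting only of $B$-edges. This follows from the compatibility condition: the ``skeleton'' obtained by taking $\eword$ in each star atom and the single edge in each singleton atom must be a valid run from $p_0$ to $q_{\text{end}}$ (approximable by elements of $g(g^{-1}(J))$), which pins down the states $q_i$; and because $B^*$ forces arbitrarily many occurrences of each $b \in B$ in the corresponding slot across valid runs (a consequence of directedness of $I_g$), every such $b$ must lie on a $B$-cycle through $q_{i-1}$.

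With the backbone in hand, I construct a loop pattern as follows: for each singleton atom $A_i = (e+\eword)$ contribute the input letter of $e$ to the adjacent $u$-piece, and for each star atom $A_i = B^*$ pick a word $v_i \in \Sigma^*$ labeling a closed walk at $q_{i-1}$ that uses every edge of $B$ at least once (such a word exists by the cycle condition). Concatenating these in the order dictated by $A_1 \cdots A_m$ yields a word $u_0 v_1^* u_1 \cdots v_n^* u_n$ whose unique $\cA$-run loops at each $v_i$ as required. Finally, to confirm $\Dclosure[\autord{\cA}]{u_0 v_1^* u_1 \cdots v_n^* u_n} = I$: all words in the pattern have $\sigma_{\cA}$-value $(p_0, q_{\text{end}})$, matching $I_f$; and the $\subword$-downward closure of their $\cA$-runs equals $J$, via the product formula $\Dclosure[\subword]{\prod_i X_i} = \prod_i \Dclosure[\subword]{X_i}$ together with $\Dclosure[\subword]{\cA(v_i)^*} = B^*$ (since $\cA(v_i)$ uses exactly the edges of $B$), which recovers $I_g$ and thus $I$.
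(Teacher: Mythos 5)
Your first direction (every $\Dclosure[\autord{\cA}]{u_0v_1^*u_1\cdots v_n^*u_n}$ is an ideal) is fine and matches the paper in spirit, if slightly more general in phrasing (componentwise maxima rather than a single chain).

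The converse direction, however, contains a genuine gap at its central step. You claim the atom decomposition $J = A_1\cdots A_m$ of the $\subword$-ideal $J\subseteq E^*$ is ``run-compatible'': that there are backbone states $q_0,\ldots,q_m$ with each star atom $A_i=B^*$ satisfying $q_{i-1}=q_i$ and every $b\in B$ lying on a $B$-cycle through $q_{i-1}$. This is false in general, and the compatibility condition $\Dclosure{g(g^{-1}(J))}=J$ does not force it. Concretely, take $\cA$ over $\{a\}$ with states $\{1,2\}$, $1$ initial, both final, and edges $e_1=(1,a,2)$, $e_2=(2,a,1)$. The set $I=\{a^{2k+1}\colon k\ge 0\}$ is an $\autord{\cA}$-ideal (it is directed and downward closed), with $I_f=\{w\colon \sigma_{\cA}(w)=(1,2)\}$ and $I_g=a^*$, so $J=\Dclosure{\cA(a^*)}=\{e_1,e_2\}^*$. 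Here $J$ is a single star atom, yet $p_0=1\ne 2=q_{\text{end}}$; the star cannot ``loop at a state,'' and your construction of $v_1$ as a closed walk at $q_0$ that uses every edge of $B$ is impossible (any such walk has even length while every word of $I$ has odd length). The correct loop pattern for this $I$ is $a\,(aa)^*$, in which the atom $\{e_1,e_2\}^*$ of $J$ has been split into a constant part $a$ and a loop $aa$. In general, a star atom of $J$ may need to be split into several pieces of the loop pattern, and establishing that this splitting can always be done consistently is precisely the nontrivial content that your sketch does not supply.

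For comparison, the paper avoids this alignment problem entirely by going through a finite-union decomposition rather than the conjunction/function ideal calculus: it shows (\cref{autord:funion}, via a strongly-connected-component decomposition of subautomata in \cref{autorder:funion:single}) that every $\autord{\cA}$-downward-closed set is a finite union of sets $\Dclosure[\autord{\cA}]{u_0v_1^*u_1\cdots v_n^*u_n}$, and then uses irreducibility of ideals to conclude that every ideal equals one of the summands. The SCC-based decomposition produces the loop pattern directly from the automaton's cycle structure, so there is no separate subword-ideal that has to be reconciled with the automaton after the fact. If you want to salvage your approach, you would need a lemma showing that a subword ideal $J$ over $E^*$ satisfying $\Dclosure{g(g^{-1}(J))}=J$ (together with a fixed state pair from $I_f$ and the adherence condition of \cref{conjunction-ideals}) admits a refinement of its atom decomposition into a loop pattern; that lemma is essentially the hard part of the theorem and needs a real proof, not an appeal to directedness.
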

By standards arguments about ideals, it is enough to show that those
sets are ideals and that every downward closed set is a finite union
of such sets.

\section{Separability by $\BSS[\MOD]$}\label{sepmod}
In this section, we prove \cref{separability:mod:decidable} and
\cref{separability:mod:undecidable}.  The latter will be shown in
\cref{sec-undecidability} and the former is an immediate consequence
of the following.
\begin{prop}\label{separability:mod:bound}
  Let $\cA_1,\cA_2$ be finite automata with $\le m$ states. 
  $\langof{\cA_1}$ and $\langof{\cA_2}$ are separable by $\BSS[\MOD]$
  if and only if they are separable by $\BSS[\MOD[d]]$, where
  $d=2m^3!$.
\end{prop}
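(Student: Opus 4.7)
The $\Leftarrow$ direction is immediate from $\BSS[\MOD[d]]\subseteq\BSS[\MOD]$. I prove $\Rightarrow$ by contraposition: assume $\langof{\cA_1}$ and $\langof{\cA_2}$ are \emph{not} separable by $\BSS[\MOD[d]]$ with $d=2m^3!$; I will show they are not separable by $\BSS[\MOD[d']]$ for any $d'\in\N$, which forces non-separability by $\BSS[\MOD]=\bigcup_{d'\in\N}\BSS[\MOD[d']]$. Combining \cref{separability} applied to $\autord{\cM_d}$ with \cref{ideals:labeling}, the assumption yields a loop pattern $u_0v_1^*u_1\cdots v_n^*u_n$ for $\cM_d$ (in particular $d\mid|v_i|$ for each $i$) whose $\autord{\cM_d}$-downward closure $I$ lies in $\Adh[\autord{\cM_d}]{\langof{\cA_j}}$ for both $j\in\{1,2\}$.

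Given an arbitrary $d'$, the plan is to exhibit a $\cM_{d'}$-loop pattern $u_0(v_1')^*u_1\cdots(v_n')^*u_n$, with $v_i'=v_i^{\beta_i}$ for carefully chosen $\beta_i$, whose $\autord{\cM_{d'}}$-downward closure lies in $\Adh[\autord{\cM_{d'}}]{\langof{\cA_j}}$ for $j\in\{1,2\}$; \cref{separability} will then yield non-separability by $\BSS[\MOD[d']]$. The $\beta_i$'s must satisfy (i)~$d'\mid\beta_i|v_i|$, so that the new pattern is genuinely a $\cM_{d'}$-loop pattern (and hence defines an $\autord{\cM_{d'}}$-ideal by \cref{ideals:labeling}), and (ii)~the witnesses in $\langof{\cA_j}$ realizing $I\in\Adh[\autord{\cM_d}]{\langof{\cA_j}}$ can be re-indexed, via the sub-progression of exponents of step $\beta_i$, to witness the new ideal in the new adherence. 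Requirement~(i) is trivially met by taking $\beta_i$ divisible by $d'/\gcd(d',|v_i|)$.

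The main obstacle is arranging~(ii) compatibly with~(i). The witnesses $u_0v_1^{K_1}u_1\cdots v_n^{K_n}u_n\in\langof{\cA_j}$ coming from the original adherence (up to $\autord{\cM_d}$-domination) have $K_i$'s ranging over an ultimately periodic subset of $\N^n$: since each $\cA_j$ has at most $m$ states, the transformation monoid of $\cA_1\times\cA_2$ has exponent dividing $\mathrm{lcm}(1,\ldots,m^2)$, which divides $(m^2)!$, and the constraint that the assembled run closes up in $\cM_d$ together with the interaction of the $n$ loop exponents through the $u_i$-transitions contributes one more factor of $m$, giving effective periods dividing $m^3!$. The choice $d=2m^3!$ is precisely what ensures that $|v_i|$ already contains $m^3!$ as a factor, so any common multiple $\beta_i$ of $m^3!$ and $d'/\gcd(d',|v_i|)$ meets both~(i) and~(ii) simultaneously: the original witnesses, restricted to $K_i\in\beta_i\N$, remain in $\langof{\cA_j}$ and realize the new ideal's membership in both $\autord{\cM_{d'}}$-adherences, completing the argument.
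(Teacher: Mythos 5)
Your high-level strategy (contraposition, then use \cref{separability} and \cref{ideals:labeling} to extract a common loop pattern and transfer its adherence membership to a larger modulus) matches the paper's in spirit, but the core step is wrong: you assume that the witness words realizing $I\in\Adh[\modord{d}]{\langof{\cA_j}}$ have the shape $u_0v_1^{K_1}u_1\cdots v_n^{K_n}u_n$, and this is false in general. The paper explicitly gives a counterexample: for $d=2$ the ideal $I=\Dclosure[\modord{2}]{a(abba)^*}$ (an irreducible loop pattern) belongs to $\Adh[\modord{2}]{b\{a,b\}^*}$ via the witness words $b(abba)^{k+1}$, which are not of the form $a(abba)^K$. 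Adherence membership only demands witnesses in $L\cap I$ dominating arbitrarily high powers in the $\modord{d}$ order, and $\modord{d}$-domination allows arbitrary letters to be inserted between embedded positions, so the witness can drift away from the template structure. Hence your ``re-indexing via a sub-progression of exponents'' has nothing concrete to re-index, and the argument collapses.

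Two further problems. First, even if the witnesses did decompose as you claim, a witness for the modulus-$d$ adherence is a member of $I$ (a $\modord{d}$-downward closed set), not of the target ideal $I'=\Dclosure[\modord{d'}]{u_0(v_1^{\beta_1})^*u_1\cdots(v_n^{\beta_n})^*u_n}$; membership in $I'$ imposes modular length constraints mod $d'$ on $w$ and on the gaps between embedded positions that the original witness satisfies only mod $d$. When $d'\nmid d$ (which your general $d'$ allows), there is no reason these hold, so you must control the internal structure of the witness, not just its exponents. Second, the explanation you give for the constant $2m^3!$ (transformation-monoid exponent of $\cA_1\times\cA_2$ plus ``one more factor of $m$'') does not supply what the proof actually needs: the paper first must reduce to a loop pattern whose loops $v_i$ have small $\kappa_d$-period, $\pi_d(v_i)\le m^2$ (\cref{reg:smallperiod}), then introduce extended loop patterns with an irreducibility notion tailored so that adherence membership yields \emph{structured} witnesses (\cref{reg:association-ext}); only with both in hand can the factor-pumping \cref{reg:pump-word-up,reg:pad-border,reg:pump-single-ideal} go through. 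The factor $m^3!$ ensures that a run of length $\ge m\cdot\pi_d(v_i)$ has a pumpable cycle whose length is a multiple of $\pi_d(v_i)$ and divides $d$, and the additional factor $2$ is there so that in the boundary case (\cref{reg:pad-border}) at least one of the two boundary pieces has length $\ge d/2\ge m^3!$. None of this is recoverable from the monoid-exponent argument you sketch.

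Finally, a structural remark: the paper only proves the transfer from $d$ to multiples $\ell\cdot d$ (\cref{pump-adherence}), which suffices because any $\BSS[\MOD[\ell']]$-separator is also a $\BSS[\MOD[\ell'd]]$-separator; restricting to multiples of $d$ makes the divisibility bookkeeping trivial. Your attempt to handle arbitrary $d'$ via $\beta_i$ divisible by $d'/\gcd(d',|v_i|)$ is an unnecessary complication that, on top of the missing structural lemmas, creates incompatible mod-$d'$ constraints across the segments.
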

Recall that $\BSS[\MOD[d]]$ are the $\autord{\cM_d}$-PTL, where
$\cM_d$ is the labeling automaton defined on \cpageref{def-md}. From
now on, we write $\modord{d}$ for $\autord{\cM_d}$. \cref{separability:mod:bound} follows from:
\begin{prop}\label{pump-adherence}
Let $\cA_i$ be a finite automaton for $i=1,2$ with $\le m$ states and
let $d$ be a multiple of $2m^3!$.
If 
\[ \Adh[\modord{d}]{\langof{\cA_1}}\cap \Adh[\modord{d}]{\langof{\cA_2}}\ne\emptyset, \]
then 
\[ \Adh[\modord{\ell\cdot d}]{\langof{\cA_1}}\cap \Adh[\modord{\ell\cdot d}]{\langof{\cA_2}}\ne\emptyset \]
for every $\ell\ge 1$.
\end{prop}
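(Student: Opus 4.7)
The plan is to construct a common $\modord{\ell d}$-adherence ideal by inflating the loop factors of a common $\modord{d}$-adherence ideal $I$ to powers that simultaneously (a) stay inside loops of $\cA_1$ and $\cA_2$ on their accepting runs, and (b) have lengths divisible by $\ell d$. By \cref{ideals:labeling} applied to $\cA=\cM_d$, any $I$ in the intersection has the form
\[
I = \Dclosure[\modord{d}]{u_0 v_1^* u_1 \cdots v_n^* u_n},
\]
where $u_0 v_1^* u_1 \cdots v_n^* u_n$ is a loop pattern for $\cM_d$; since $\cM_d$ is a cycle of length $d$, this is equivalent to $d \mid |v_i|$ for every $i \in [1,n]$.

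First I would extract, on the automaton side, periods $p_{i,j} \le m$ for each $i \in [1,n]$ and $j \in \{1,2\}$ such that $v_i^{p_{i,j}}$ acts as a loop in some accepting run of $\cA_j$. The adherence hypothesis $I \subseteq \Dclosure[\modord{d}]{\langof{\cA_j} \cap I}$ forces, for arbitrarily large $K$, the existence of a word $w \in \langof{\cA_j}$ with $u_0 v_1^K u_1 \cdots v_n^K u_n \modord{d} w$; pigeonholing over the $\le m$ states of $\cA_j$ at the positions at which $w$ embeds the starts of the $K$ copies of $v_i$ then yields periods $p_{i,j} \le m$ with $v_i^{p_{i,j}}$ a loop of $\cA_j$. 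The delicate point, and the main obstacle, is that the $\cM_d$-loops inserted when passing from $u_0 v_1^K u_1 \cdots v_n^K u_n$ to $w$ may separate consecutive copies of $v_i$ inside $w$, so the pigeonhole must be performed on the product automaton $\cA_j \otimes \cM_d$. Applying \cref{ideals:labeling} to this product describes $\Adh[\modord{d}]{\langof{\cA_j}}$ via loop patterns of $\cA_j \otimes \cM_d$ whose loop factors already have length divisible by $d$, and matching this description against the generators of $I$ produces the required $p_{i,j}$.

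With the periods in hand, I would set $P_i = \operatorname{lcm}(p_{i,1}, p_{i,2})$ and $M_i = \ell P_i$. Since each $p_{i,j} \le m$, we have $P_i \mid m!$. By construction, (a)~$v_i^{M_i}$ loops in the selected accepting runs of both $\cA_1$ and $\cA_2$, so pumping any number of copies of $v_i^{M_i}$ keeps the runs accepting; (b)~$|v_i^{M_i}| = \ell P_i |v_i|$ is divisible by $\ell d$, since $d \mid |v_i|$ and $\ell \mid M_i$. Property (b) makes $u_0 (v_1^{M_1})^* u_1 \cdots (v_n^{M_n})^* u_n$ a loop pattern for $\cM_{\ell d}$, so by \cref{ideals:labeling} its $\modord{\ell d}$-downward closure $I'$ is a $\modord{\ell d}$-ideal. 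Using (a), for every $(k_1, \ldots, k_n) \in \N^n$ the word $u_0 v_1^{k_1 M_1} u_1 \cdots v_n^{k_n M_n} u_n$ lies in $\langof{\cA_j} \cap I'$; the family of these words, ordered componentwise, is $\modord{\ell d}$-directed by (b) and has $\modord{\ell d}$-downward closure $I'$, establishing $I' \in \Adh[\modord{\ell d}]{\langof{\cA_j}}$ for $j=1,2$. The assumption $2m^3! \mid d$ supplies the divisibility slack needed in the pigeonhole of Step 1 when aligning $\cM_d$-loops with $\cA_j$-loops inside the product automaton, simultaneously for both $j=1$ and $j=2$.
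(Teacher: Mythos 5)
Your proposal founders on the final claim that for every $(k_1,\ldots,k_n)$ the word $u_0 v_1^{k_1 M_1} u_1 \cdots v_n^{k_n M_n} u_n$ lies in $\langof{\cA_j}$. The adherence hypothesis does not supply witness words of this shape; it only gives, for each $k$, some $w \in \langof{\cA_j} \cap I$ with $u_0 v_1^k u_1 \cdots v_n^k u_n \modord{d} w$. The witness $w$ may carry arbitrary $\cM_d$-loops interleaved before, after, and between the embedded copies of the $v_i$, so the factors of $w$ on which $\cA_j$ loops are words in $\Dclosure[\modord{d}] v_i^*$, not literal powers $v_i^p$. The paper's own example makes this concrete: $I = \Dclosure[\modord{2}] a(abba)^*$ lies in $\Adh[\modord{2}]{b\{a,b\}^*}$ via witnesses $b(abba)^{k+1}$, yet no word of the form $a(abba)^{Mk}$ belongs to $b\{a,b\}^*$. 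Consequently you cannot in general extract periods $p_{i,j}$ such that $v_i^{p_{i,j}}$ itself is a loop of $\cA_j$, and the words $u_0 v_1^{k_1 M_1}\cdots u_n$ need not be accepted at all.

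You do flag this as ``the delicate point'' and propose pigeonholing on a product $\cA_j \otimes \cM_d$, but this sketch does not repair the argument. \cref{ideals:labeling} applies to labeling automata, whereas $\cA_j$ is an arbitrary NFA; even setting that aside, the $\autord{\cA_j\otimes\cM_d}$-ideals are subsets of $\langof{\cA_j}$, while $\Adh[\modord{d}]{\langof{\cA_j}}$ consists of $\modord{d}$-ideals of all of $\Sigma^*$, so the two ideal theories describe different orderings and ``matching'' them does not by itself produce the $p_{i,j}$. More importantly, the notion of period you pigeonhole for (how many copies of $v_i$ close a cycle in $\cA_j$) is not the invariant that controls whether pumping stays inside the ideal. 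The paper's invariant is $\pi_d(v_i)$, the period of the letter-occupancy profile $\kappa_d(v_i)$; the proof first replaces $I$ by one whose loops satisfy $\pi_d(v_i)\le m^2$ (Lemma~\ref{reg:smallperiod}), then reshapes witness words using irreducible \emph{extended} loop patterns to control the factors $\bar u_i$ and $\bar v_i$ (Lemma~\ref{reg:association-ext}), and only then pumps subfactors of the witness word whose length is divisible by $\pi_d(v_i)$ (Lemmas~\ref{reg:pump-word-up}--\ref{reg:pump-single-ideal}), which is what guarantees the pumped word stays in the target $\modord{\ell d}$-ideal. Both the membership-in-$I'$ issue and the small-period machinery are missing from your argument, and these are precisely what the paper identifies as the hard part.
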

The ``if'' direction of \cref{separability:mod:bound} is
trivial and the ``only if'' follows from \cref{pump-adherence}: If
$\langof{\cA_1}$ and $\langof{\cA_2}$ are separable by
$\BSS[\MOD[\ell]]$ for some $\ell\in\N$, then this separator is also
expressible in $\BSS[\MOD[\ell\cdot d]]$. Moreover, together with
\cref{separability}, \cref{pump-adherence} tells us that separability
by $\BSS[\MOD[\ell\cdot d]]$ implies separability by $\BSS[\MOD[d]]$.

The rest of this section outlines the proof of \cref{separability:mod:bound}. Note that according to
\cref{ideals:labeling}, the ideals for $\modord{d}$ are the sets of
the form $I=\Dclosure[\modord{d}]{u_0v_1^*u_1\cdots v_n^*u_n}$ where
$v_i\in(\Sigma^*)^d$. The ideal $I$ belongs to $\Adh[\modord{d}]{L}$
if for each $k\in\N$, there is a word $w\in L$ such that
$u_0v_1^ku_1\cdots v_n^ku_n\modord{d} w$ and $w\in I$. We call such words
$w$ \emph{witness words}.

It is tempting to think that \cref{pump-adherence} just requires a
simple pumping argument: Suppose the ideal $\Dclosure[\modord{d}]
u_1v_1^*u_1\cdots v_n^*u_n$ belongs to the adherence of some
language. Then, we pump the gaps in between embedded letters from the
witness word $u_0v_1^{\ell\cdot k}u_1\cdots v_n^{\ell\cdot k}u_n$.
These gaps, after all, always have length divisible by $d$. For a $d$
with sufficiently many divisors, we would be able to pump the gaps up
to a length divisible by $\ell\cdot d$ so that we can embed
$u_0(v_1^\ell)^ku_1\cdots (v_n^\ell)^ku_n$ via $\modord{\ell\cdot
  d}$. However, in order to show that the $\modord{\ell\cdot d}$-ideal
$I'=\Dclosure[\modord{\ell\cdot d}] u_0(v_1^\ell)^*u_1\cdots
(v_n^\ell)^*u_n$ is contained in the $\modord{\ell\cdot d}$-adherence,
we also have to make sure that resulting witness words are
\emph{members of $I'$}. This makes the proof challenging.

\subsection*{Part I: Small periods}
Our proof of \cref{pump-adherence} consists of three parts.
In the first part, we show that if two regular languages share
an ideal in their adherences, then there exists one in which all
loops (the words $v_i$) are in a certain sense, highly periodic.
Let $\Powerset{\Sigma}$ denote the power set of $\Sigma$ and let
$\Powerset{\Sigma}^{[1,d]}$ denote the set of mappings $\mu\colon
[1,d]\to\Powerset{\Sigma}$. For each word $w\in\Sigma^*$ and $d\in\N$,
let $\kappa_d(w)\in\Powerset{\Sigma}^{[1,d]}$ be defined as follows.
For $i\in[1,d]$, we set
\[ \kappa_d(w)(i)=\{a\in\Sigma \mid \text{$a$ occurs in $w$ at a position $p$}~\text{with $p\equiv i\bmod{d}$}\}. \]
For each word $w\in\Sigma^*$, let $\rho(w)$ be obtained from rotating $w$ by one position to the right.
Hence, for $v\in\Sigma^*$ and $a\in\Sigma$ we have $\rho(va)=av$, and $\rho(\varepsilon)=\varepsilon$. Let $\lambda$ be the inverse map of $\rho$, i.e. rotation to the left. 
For $v\in \Sigma^*$ and $d\in\N$, let $\pi_d(v)\in[1,d]$ be the
smallest $t\in [1,d]$ that divides $d$ such that
$\kappa_d(v)(i+t)=\kappa_d(v)(i)$ for all $i\in[1,d-t]$. Thus, $t$ can
be thought of as a period of $\kappa_d(v)$.
An automaton $\cA=(Q,\Sigma,E,I,F)$ is \emph{cyclic} if $I=F$ and
$|I|=1$.  The first step towards ideals with high periodicity is to
achieve high periodicity in single-loop ideals in cyclic automata:
\begin{lem}\label{reg:smallperiod:cyclic}
Let $\cA_i$ be a cyclic automaton with $\le m$ states for each
$i=1,2$ and let $d$ be a multiple of $m^2!$.  If $\Dclosure[\modord{d}]{v^*}$
belongs to $\Adh[\modord{d}]{\langof{\cA_i}}$ for $i=1,2$, then there is a
$w\in(\Sigma^d)^*$ such that 
(i)~$\Dclosure[\modord{d}]{v^*}\subseteq\Dclosure[\modord{d}]{w^*}$,
(ii)~$\Dclosure[\modord{d}]{w^*}$ also belongs to $\Adh[\modord{d}]{\langof{\cA_i}}$ for $i=1,2$, and
(iii)~$\pi_d(w)\le m^2$.
\end{lem}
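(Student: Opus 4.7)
The plan is to construct $w$ by concatenating $v$ with two short, highly periodic loops $\tilde\sigma_1,\tilde\sigma_2$ drawn from $\cA_1$ and $\cA_2$ respectively. Since each $\cA_i$ is cyclic with a unique initial-final state $q_0^{(i)}$, every word of $\langof{\cA_i}$ decomposes into first-return loops at $q_0^{(i)}$ of length at most $m$; and since $d$ is a multiple of $m^2!\ge m!$, every length $\ell\le m$ divides $d$. Hence for any first-return loop $\sigma$ of length $\ell$, the word $\sigma^{d/\ell}$ has length exactly $d$ and its $\kappa_d$-profile is periodic with period $\le\ell\le m$.

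\textbf{Step 1 (extract pumpable loops).} Use the adherence hypothesis to obtain, for each $i$ and each large $k$, a witness $y^{(i)}_k\in\langof{\cA_i}$ with $v^k\modord{d}y^{(i)}_k\modord{d}v^{k_i}$. The double containment forces $\kappa_d(y^{(i)}_k)=\kappa_d(v)$ pointwise. Decomposing $y^{(i)}_k$ into first-return loops at $q_0^{(i)}$ and applying pigeonhole in both the loop type and the starting position modulo $d$, I select a first-return loop $\sigma_i$ of length $\ell_i\le m$ such that inserting additional copies of $\sigma_i$ inside $y^{(i)}_k$ preserves membership in $\langof{\cA_i}\cap\Dclosure[\modord{d}]{v^*}$ as well as the $\modord{d}$-embedding of $v^k$. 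Set $\tilde\sigma_i:=\sigma_i^{d/\ell_i}$, a word of length exactly $d$.

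\textbf{Step 2 (define $w$ and verify).} Let $w:=v\cdot\tilde\sigma_1\cdot\tilde\sigma_2\in(\Sigma^d)^*$. Property (i) is immediate since $v\modord{d}w$. For (ii), given any $k$, a witness $y\in\langof{\cA_i}$ for $w^k$ is constructed by pumping sufficiently many copies of $\sigma_i$ into $y^{(i)}_{k'}$ with $k'\gg k$: the pumped word lies in $\langof{\cA_i}$ by the loop structure, $\modord{d}$-embeds $w^k$ by a greedy letter-matching, and is contained in $\Dclosure[\modord{d}]{w^*}$ because its $\kappa_d$-profile is a subset of $\kappa_d(v)\cup\kappa_d(\tilde\sigma_i)\subseteq\kappa_d(w)$. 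Property (iii) demands $\kappa_d(w)=\kappa_d(v)\cup\kappa_d(\tilde\sigma_1)\cup\kappa_d(\tilde\sigma_2)$ be periodic of period dividing $\mathrm{lcm}(\ell_1,\ell_2)\le m^2$, which holds provided $\kappa_d(v)$ is absorbed into the union $\kappa_d(\tilde\sigma_1)\cup\kappa_d(\tilde\sigma_2)$.

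\textbf{Main obstacle.} The heart of the proof is establishing the absorption condition: choosing $\sigma_1,\sigma_2$ in Step 1 so that for every residue $r\in[1,d]$ and every letter $a\in\kappa_d(v)(r)$, we have $a\in\kappa_d(\tilde\sigma_1)(r)\cup\kappa_d(\tilde\sigma_2)(r)$. Because each $\tilde\sigma_i$ contributes only one letter per residue (determined by $\sigma_i$ and its starting position modulo $d$), this coverage requires a delicate nested pigeonhole combining the two witness decompositions with a carefully chosen alignment of the starting residues of $\sigma_1,\sigma_2$. The assumption that $d$ is a multiple of $m^2!$ (rather than merely $m!$) is precisely what provides the flexibility to synchronize these starting residues and to iterate loops whose joint period is at most $m^2$.
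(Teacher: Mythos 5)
There is a genuine gap, and in fact the central idea of your construction cannot work as stated. You take $w = v\cdot\tilde\sigma_1\cdot\tilde\sigma_2$ where each $\tilde\sigma_i = \sigma_i^{d/\ell_i}$ comes from a single short loop $\sigma_i$ of $\cA_i$. But then $\kappa_d(\tilde\sigma_1)(r)\cup\kappa_d(\tilde\sigma_2)(r)$ has at most two letters for each residue $r$, whereas $\kappa_d(v)(r)$ can be an arbitrary subset of $\Sigma$. So the ``absorption condition'' $\kappa_d(v)\subseteq\kappa_d(\tilde\sigma_1)\cup\kappa_d(\tilde\sigma_2)$ that your (iii) hinges on is generally false; the ``nested pigeonhole'' you defer to cannot fix this because the profile of two short loops is simply too small. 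Relatedly, your Step~2 witness argument breaks: a witness from $\cA_1$ must $\modord{d}$-embed $w^k=(v\tilde\sigma_1\tilde\sigma_2)^k$ and hence must carry the letters of $\tilde\sigma_2$ at the correct residues, but those letters need not appear at those residues anywhere in $\langof{\cA_1}\cap\Dclosure[\modord{d}]{v^*}$. Also note that ``first-return loops at $q_0^{(i)}$'' are not bounded by $m$ in length (only simple cycles are), so your Step~1 claim needs a different decomposition.

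The paper avoids both problems by (a) pigeonholing jointly over the \emph{pair} of states $(q^{(1)}_j, q^{(2)}_j)$ occupied after each block $u_j^{(1)},u_j^{(2)}$ in witnesses $v^{(i)}=u_0^{(i)}v_1u_1^{(i)}\cdots v_nu_n^{(i)}$, which produces a single loop length $t< m^2$ valid simultaneously for both automata, and (b) building $w_i = \prod_{\ell=0}^{r-1} x_iy_iy_i^{\ell}z_ix_iy_iy_i^{r-\ell}z_i$ (with $r=d/t$) so that $\kappa_d(w_i) = \bigcup_{\ell=0}^{r-1}\rho^{\ell t}(\kappa_d(v))$, a union of rotated copies of the \emph{full} profile $\kappa_d(v)$. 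This union trivially contains $\kappa_d(v)$ (giving (i)), is invariant under rotation by $t$ (giving $\pi_d\le t\le m^2$), and depends only on $\kappa_d(v)$ and $t$, so $\kappa_d(w_1)=\kappa_d(w_2)$ and hence $\Dclosure[\modord{d}]{w_1^*}=\Dclosure[\modord{d}]{w_2^*}$ — this last equality is what lets the single word $w=w_1$ serve both automata in (ii). Your proposal lacks exactly these two devices: the simultaneous pigeonhole and the rotation-union construction that spreads the whole of $\kappa_d(v)$ rather than two single loops.
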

The idea is to find in witness words a factor $f$ such that left and
right of $f$, we can pump factors of suitable length. By pumping both
of these factors up by multiplicities that sum up to a constant, we
can essentially move $f$ back and forth and obtain a computation in
which the occurrences of letters in $f$ are spread over all residue
classes modulo some small number $\le m^2$.  

\subparagraph{Associated patterns} In order to extend this to general
ideals and automata, we need more guarantees on how words
$u_0v_1^ku_1\cdots v_n^ku_n$ embed into witness words.

Let $u_0v_1u_1\cdots v_nu_n$ be a loop pattern for $\cM_d$ and let
$L\subseteq\Sigma^*$.  We say that the loop pattern is \emph{associated to $L$}
if for every $k\ge 0$, there is a word $\bar{u}_0\bar{v}_1\bar{u}_1\cdots
\bar{v}_n\bar{u}_n\in L$ such that $v_i^k\modord{d}
\bar{v}_i\in\Dclosure[\modord{d}]{v_i^*}$ for every $i\in[1,n]$ and
$u_i\modord{d}\bar{u}_i\in\Dclosure[\modord{d}]{v_{i}^*u_iv_{i+1}^*}$ for
$i\in[1,n-1]$ and $u_0\modord{d}\bar{u}_0\in\Dclosure[\modord{d}]{u_0v_1^*}$
and $u_n\modord{d}\bar{u}_n\in\Dclosure[\modord{d}]{v_n^*u_n}$.

Of course, if the pattern $u_0v_1u_1\cdots v_nu_n$ is associated to $L$, then
the ideal $I=\Dclosure[\modord{d}]{u_0v_1^*u_1\cdots v_n^*u_n}$ belongs to
$\Adh[\modord{d}]{L}$.  However, the converse is not true. Consider, for
example, the case $d=2$ and the loop pattern $\eword\cdot(aa)\cdot\eword\cdot(abba)\cdot\eword$, where $aa$ and
$abba$ are cycles and the constant parts are all empty.  The resulting ideal
$\Dclosure[\modord{2}]{(aa)^*(abba)^*}$ belongs to
$\Adh[\modord{2}]{(abba)^*}$, just because
$\Dclosure[\modord{2}]{(aa)^*(abba)^*}=\Dclosure[\modord{2}]{(abba)^*}$: Both
sets contain precisely the words in $\{a,b\}^*$ of even length.  Note that the
pattern $\eword\cdot(aa)\cdot\eword\cdot(abba)\cdot\eword$ is not associated to $(abba)^*$, because no word in the
latter contains $(aa)^2$ as an infix, let alone arbitrarily high powers of
$aa$.

However, we will see that every ideal admits a representation by a loop
pattern so that membership in the adherence implies association of the
loop pattern.  A loop pattern $u_0v_1u_1\cdots v_nu_n$ for $\cM_d$ is 
\emph{irreducible} if removing any loop would induce a strictly
smaller ideal. This means, for each $i\in[1,n]$, the loop pattern
$u_0(v_1)u_1\cdots (v_{i-1})u_{i-1}u_i\cdots (v_n)u_n$
induces a strictly smaller ideal than $u_0v_1u_1\cdots v_nu_n$.
Note that every ideal is induced by some irreducible loop pattern:
Just pick one with a minimal number of loops.

\begin{lem}\label{reg:association}
Let $u_0v_1u_1\cdots v_nu_n$ be an irreducible loop pattern for $\cM_d$.  Then
$\Dclosure[\modord{d}]{u_0v_1^*u_1\cdots v_n^*u_n}$ belongs to
$\Adh[\modord{d}]{L}$ if and only if $u_0v_1u_1\cdots v_nu_n$ is associated to
$L$.
\end{lem}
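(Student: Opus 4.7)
The ``if'' direction follows directly from the multiplicativity of $\modord{d}$ (for the labeling automaton $\cM_d$, insertions of loop-length factors concatenate cleanly because every $|v_i|$ is divisible by $d$). Given a decomposition $\bar{w} = \bar{u}_0\bar{v}_1\bar{u}_1\cdots\bar{v}_n\bar{u}_n\in L$ satisfying the association conditions, concatenating the lower bounds $u_i\modord{d}\bar{u}_i$ and $v_i^k\modord{d}\bar{v}_i$ yields $u_0 v_1^k u_1\cdots v_n^k u_n\modord{d}\bar{w}$, while concatenating the upper bounds $\bar{v}_i\modord{d} v_i^{m_i}$ and $\bar{u}_i\modord{d} v_i^{m'_i}u_iv_{i+1}^{m''_i}$ shows $\bar{w}\in\Dclosure[\modord{d}]{u_0v_1^*u_1\cdots v_n^*u_n}$. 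Hence the ideal belongs to $\Adh[\modord{d}]{L}$, and irreducibility is not used in this direction.

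For the ``only if'' direction, fix $k\ge 0$. Since the ideal is in $\Adh[\modord{d}]{L}$ I pick a witness $w\in L$ with $u_0v_1^ku_1\cdots v_n^ku_n\modord{d} w\modord{d} u_0v_1^{k_1}u_1\cdots v_n^{k_n}u_n$ for some exponents $k_i$. Fix one inner $\modord{d}$-embedding $\iota$ of $y_{\mathrm{in}}:=u_0v_1^ku_1\cdots v_n^ku_n$ into $w$ and one outer $\modord{d}$-embedding $\omega$ of $w$ into $y_{\mathrm{out}}:=u_0v_1^{k_1}u_1\cdots v_n^{k_n}u_n$, and cut $w$ at the $\iota$-images of the first and last letters of each $u_i$. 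The resulting decomposition $w=\bar{u}_0\bar{v}_1\cdots\bar{v}_n\bar{u}_n$ automatically satisfies the two lower bounds $u_i\modord{d}\bar{u}_i$ and $v_i^k\modord{d}\bar{v}_i$, since each $\bar{u}_i$ (respectively $\bar{v}_i$) is a factor of $w$ obtained from $u_i$ (respectively $v_i^k$) by inserting the $\iota$-gaps, which are of length divisible by $d$.

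The upper bounds $\bar{v}_i\in\Dclosure[\modord{d}]{v_i^*}$ and $\bar{u}_i\in\Dclosure[\modord{d}]{v_i^* u_i v_{i+1}^*}$ reduce to an alignment claim about the composed $\modord{d}$-embedding $\omega\circ\iota\colon y_{\mathrm{in}}\to y_{\mathrm{out}}$: namely, that for an appropriate choice of $w$, $\iota$ and $\omega$, each $u_i$-block of $y_{\mathrm{in}}$ maps into the $u_i$-block of $y_{\mathrm{out}}$, and consequently each $v_i^k$-block of $y_{\mathrm{in}}$ maps into the $v_i^{k_i}$-block of $y_{\mathrm{out}}$. Once alignment is established the outer image of each $\bar{u}_i$ lies in a factor $v_i^{k_i}u_iv_{i+1}^{k_{i+1}}$ of $y_{\mathrm{out}}$ and the outer image of each $\bar{v}_i$ lies in $v_i^{k_i}$, giving the required upper bounds. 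Irreducibility is precisely what enforces this alignment: because every $|v_j|$ is divisible by $d$ the composed embedding is residue-preserving modulo $d$, so the ``escape routes'' for a letter of $u_i$ into a $v_j$-block of $y_{\mathrm{out}}$ are tightly constrained, and there are only finitely many possible ``rerouting schemes'' (assignments of $u_i$-letters to offsets inside various $v_j$-blocks). If alignment cannot be achieved for any $w$, by pigeonhole some rerouting scheme is realized for unboundedly many $k$; pumping that scheme then produces, for every exponent, a word in $\Dclosure[\modord{d}]{u_0v_1^*\cdots\widehat{v_j^*}\cdots v_n^*u_n}$ (with $v_j$ omitted) that also lies in the original ideal, so removing $v_j$ leaves the ideal unchanged, contradicting irreducibility of the loop pattern.

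\textbf{Main obstacle.} The delicate part is converting a purely local failure of alignment (a single $u_i$-letter being sent into a $v_j$-block) into the global statement that the whole loop $v_j$ is redundant. This requires carefully exploiting the rigidity of $\modord{d}$ -- fixed residues mod $d$ and the fact that the cyclic labeling automaton $\cM_d$ has a unique run on every input -- so that a rerouted letter, its surrounding $\iota$-gaps, and the portion of $y_{\mathrm{out}}$ it now occupies really can be absorbed uniformly by the remaining loops $v_{j'}$ ($j'\ne j$). Once this reduction is carried out, the alignment claim, and hence the lemma, follows.
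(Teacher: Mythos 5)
Your high-level strategy (easy ``if'' direction, then use irreducibility to control the composed embedding $\omega\circ\iota$ and cut the witness word accordingly) is the right one and matches the paper, but the specific alignment claim you rely on is false, and this is not a technicality one can patch by choosing $w$, $\iota$, $\omega$ cleverly.

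You claim that for an appropriate witness, \emph{each $u_i$-block of $y_{\mathrm{in}}$ maps under $\omega\circ\iota$ into the $u_i$-block of $y_{\mathrm{out}}$}, and you cut $w$ at the $\iota$-images of the endpoints of each $u_i$. Take $d=2$, the irreducible loop pattern $a\cdot(abba)\cdot\eword$ (so $u_0=a$, $v_1=abba$, $u_1=\eword$), and $L=b\{a,b\}^*$. This example appears in the paper for exactly this reason. Every $w\in L$ begins with $b$, so the $a$ at position $1$ of $y_{\mathrm{in}}=a(abba)^k$ must be sent by $\iota$ to some odd position $\ge 3$ of $w$; in turn $\omega$ must send that position strictly to the right, and one checks that for \emph{every} choice of $w\in L$, $\iota$, and $\omega$, the composed image $\gamma(1)$ lands inside the $v_1$-block of $y_{\mathrm{out}}$, never at position $1$. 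So the $u_0$-block never maps into the $u_0$-block. Yet the pattern \emph{is} associated to $L$ (take $\bar{u}_0=babba$, $\bar{v}_1=(abba)^k$, $\bar{u}_1=\eword$; one verifies $babba\in\Dclosure[\modord{2}]{a(abba)^*}$). So the alignment property you need is simply unattainable, and consequently the ``if alignment cannot be achieved, some loop is redundant'' reduction breaks down: alignment fails here, but no loop is redundant. A smaller issue is that the cut ``at the $\iota$-images of the first and last letters of $u_i$'' is undefined when $u_i=\eword$, which does occur (as in the same example).

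The paper avoids this trap by proving a strictly weaker normality property (\cref{reg:normalembeddings}): for any $k$ there is $\ell$ such that any $d$-embedding of $u_0v_1^{x_1}u_1\cdots v_n^{x_n}u_n$ into $u_0v_1^{y_1}u_1\cdots v_n^{y_n}u_n$ with all $x_i\ge\ell$ sends at least $k$ copies of each $v_i$ into $v_i^{y_i}$ — it says nothing about where the $u_i$-blocks land. Irreducibility is used exactly once, to rule out that almost all $v_j$-copies slide entirely to one side of $v_j^{y_j}$, which would let one drop $v_j$ from the pattern without shrinking the ideal. The witness word is then cut around the $\alpha$-images of those $k$ controlled copies of $v_i$ (making $|\bar v_i|$ divisible by $d$), not around the $u_i$-letters, and the upper bounds $\bar v_i\in\Dclosure[\modord{d}]{v_i^*}$ and $\bar u_i\in\Dclosure[\modord{d}]{v_i^*u_iv_{i+1}^*}$ follow directly because $\beta(\bar v_i)\subseteq v_i^{y_i}$ traps $\beta(\bar u_i)$ between consecutive $v$-blocks. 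You should recast your argument around this weaker normality invariant and this $v$-centric decomposition.
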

\cref{reg:association} is obtained by first proving that if the loop
pattern is irreducible, then for each $k\in\N$, any embedding of
$u_0v_1^{x_1}u_1\cdots v_n^{x_n}u_n$ into $u_0v_1^{y_1}u_1\cdots
v_n^{y_n}u_n$ for sufficiently large $x_i$ forces at least $k$ copies
of each $v_i$ to be embedded into $v_i^{y_i}$.

Using \cref{reg:association}, we can complete the first proof part:
\begin{lem}\label{reg:smallperiod}
Let $\cA_i$ be a finite automaton with $\le m$ states for each $i=1,2$ and let
$d$ be a multiple of $m^2!$.  If
$\Adh[\modord{d}]{\langof{\cA_1}}\cap\Adh[\modord{d}]{\langof{\cA_2}}\ne\emptyset$,
then there is a loop pattern $u_0v_1u_1\cdots v_nu_n$ for $\cM_d$ such that
$\Dclosure[\modord{d}]{u_0v_1^*u_1\cdots v_n^*u_n}$ belongs to
$\Adh[\modord{d}]{\langof{\cA_i}}$ for $i=1,2$ and $\pi_d(v_i)\le m^2$.
\end{lem}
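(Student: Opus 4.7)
The plan is to reduce the multi-loop case to repeated applications of the single-loop cyclic lemma (\cref{reg:smallperiod:cyclic}), using \cref{reg:association} as the bridge. Starting from some $I\in\Adh[\modord{d}]{\langof{\cA_1}}\cap\Adh[\modord{d}]{\langof{\cA_2}}$, I would pick an irreducible loop pattern $p=u_0v_1u_1\cdots v_nu_n$ representing $I$ (which exists by minimality of the number of loops). By \cref{reg:association}, $p$ is then associated to both $\langof{\cA_1}$ and $\langof{\cA_2}$, so for every $k$ there are witness words $\bar{u}_0\bar{v}_1\bar{u}_1\cdots\bar{v}_n\bar{u}_n\in\langof{\cA_j}$ ($j=1,2$) in which each $\bar{v}_i$ $\modord{d}$-embeds $v_i^k$ and still lies in $\Dclosure[\modord{d}]{v_i^*}$.

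For each loop index $i$, I would now isolate the behaviour inside $\bar{v}_i$ as a cyclic pattern in each $\cA_j$. Let $\alpha_i^{(j)}, \beta_i^{(j)}$ denote the states visited by the $\cA_j$-run just before and just after $\bar{v}_i$; these pairs range over a set of size at most $m^2$. By pigeonhole, passing to an infinite subsequence of $k$'s, for each $i$ and $j$ the pair $(\alpha_i^{(j)}, \beta_i^{(j)})$ is constant, and moreover for $k$ large enough an internal pumping/pigeonhole step inside $\bar{v}_i$ lets me insist that the run passes through a state $q_i^{(j)}$ both at the beginning and at the end of the portion responsible for pumping $v_i$. Define $\cA^{(j)}_i$ to be the cyclic automaton obtained from $\cA_j$ by taking $I=F=\{q_i^{(j)}\}$. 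The witnesses just constructed show that $\Dclosure[\modord{d}]{v_i^*}\in\Adh[\modord{d}]{\langof{\cA^{(j)}_i}}$ for $j=1,2$. Applying \cref{reg:smallperiod:cyclic} to $\cA^{(1)}_i,\cA^{(2)}_i$ (each has $\le m$ states and $d$ is a multiple of $m^2!$), I obtain a word $w_i\in(\Sigma^d)^*$ satisfying $\Dclosure[\modord{d}]{v_i^*}\subseteq\Dclosure[\modord{d}]{w_i^*}$, $\pi_d(w_i)\le m^2$, and $\Dclosure[\modord{d}]{w_i^*}\in\Adh[\modord{d}]{\langof{\cA^{(j)}_i}}$ for both $j$.

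Now form the new loop pattern $p'=u_0 w_1 u_1\cdots w_n u_n$. Since each $|w_i|$ is a multiple of $d$ and each $w_i$ is a loop of $\cM_d$, $p'$ is a loop pattern for $\cM_d$. To show $\Dclosure[\modord{d}]{u_0 w_1^* u_1\cdots w_n^* u_n}\in\Adh[\modord{d}]{\langof{\cA_j}}$, I build a witness for parameter $k$ by splicing. Take the original $p$-witness in $\langof{\cA_j}$ (obtained from association) for some very large $K\gg k$, which is of the form $\bar{u}_0\bar{v}_1\cdots\bar{v}_n\bar{u}_n$ and whose run goes through $q_i^{(j)}$ at the two endpoints of the pumped portion of $\bar{v}_i$. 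For each $i$, the adherence property of $w_i$ for $\cA_i^{(j)}$ yields a word $\tilde{w}_i$ which is a loop of $\cA_j$ at $q_i^{(j)}$, satisfies $w_i^k\modord{d}\tilde{w}_i$, and lies in $\Dclosure[\modord{d}]{w_i^*}$. Replacing inside the original witness the pumped portion of each $\bar{v}_i$ by $\tilde{w}_i$ produces a word still accepted by $\cA_j$ (state compatibility is exactly what the cyclicity at $q_i^{(j)}$ ensures) and, because $\Dclosure[\modord{d}]{v_i^*}\subseteq\Dclosure[\modord{d}]{w_i^*}$, the surrounding leftover pieces of $\bar{v}_i$, $\bar{u}_i$ still $\modord{d}$-embed the $u_i$'s and factors of $w_i^*$ appropriately; thus the result $\modord{d}$-embeds $u_0 w_1^k u_1\cdots w_n^k u_n$ and lies in $\Dclosure[\modord{d}]{u_0 w_1^* u_1\cdots w_n^* u_n}$.

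The main obstacle is step two: making the local cyclic automata construction simultaneous across $j=1,2$ and across all indices $i$, and tying it to the global run so that the splicing in step three actually yields a word in $\langof{\cA_j}$. Concretely, one has to commit, uniformly over infinitely many $k$, to the states $(\alpha_i^{(j)},\beta_i^{(j)})$ around every $\bar{v}_i$ and to the internal cycle state $q_i^{(j)}$, then verify that replacing each pumped $v_i$-portion by a $w_i$-based loop at $q_i^{(j)}$ preserves acceptance while simultaneously keeping the resulting word in the new $\modord{d}$-ideal. Once this compatibility is in place, the bound $\pi_d(w_i)\le m^2$ transfers from \cref{reg:smallperiod:cyclic} without further work.
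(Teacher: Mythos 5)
Your plan follows essentially the same route as the paper's proof: take an irreducible loop pattern for the shared ideal, invoke \cref{reg:association} to get association with both $\langof{\cA_1}$ and $\langof{\cA_2}$, pigeonhole inside a single witness word to isolate a repeated state for each loop and each automaton, form the cyclic subautomata, apply \cref{reg:smallperiod:cyclic} to each loop index to get the small-period $w_i$, and then reassemble by splicing. The "main obstacle" you flag at the end is actually not an obstacle in the paper's version, and I think the source of your worry is a small but avoidable complication in your own setup.

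Specifically: you do not need an infinite subsequence of $k$'s, nor to re-derive an original witness for a ``very large $K\gg k$'' each time. The clean version fixes one witness per automaton, namely the association witness for the single value $k=m$. Because $v_j^{\,m}\modord{d}\bar{v}_{i,j}\in\Dclosure[\modord{d}]v_j^*$, the factor $\bar{v}_{i,j}$ decomposes into $m$ consecutive pieces each embedding $v_j$, and pigeonhole on the $m+1$ boundary states of the $\cA_i$-run inside $\bar{v}_{i,j}$ yields a repeated state $q_{i,j}$, giving a factorization $\bar{v}_{i,j}=x_{i,j}\,y_{i,j}\,z_{i,j}$ with $y_{i,j}$ read on a cycle at $q_{i,j}$ and $v_j\modord{d}y_{i,j}\in\Dclosure[\modord{d}]v_j^*$. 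That one fixed decomposition is then reused forever: the outer pieces $x_{i,j}$, $z_{i,j}$, $\bar{u}_{i,j}$ never change, and only the middle $y_{i,j}$ is swapped for a new word $w'_{i,j}\in\langof{\cA_{i,j}}$ with $w_j^k\modord{d}w'_{i,j}\in\Dclosure[\modord{d}]w_j^*$, supplied by the cyclic-automaton adherence from \cref{reg:smallperiod:cyclic}. Acceptance by $\cA_i$ is automatic because $w'_{i,j}$ is a $q_{i,j}$-loop, the embedding of $u_0w_1^k u_1\cdots w_n^k u_n$ comes from $w_j^k\modord{d}w'_{i,j}$, and membership in the new ideal comes from $x_{i,j},z_{i,j}\in\Dclosure[\modord{d}]v_j^*\subseteq\Dclosure[\modord{d}]w_j^*$ together with the conditions on $\bar{u}_{i,j}$ guaranteed by association. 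With that adjustment, the ``simultaneity across $j=1,2$ and across all $i$'' is handled for free: the two cyclic automata for a given loop index are fed together into \cref{reg:smallperiod:cyclic}, which produces one $w_j$ serving both, and different loop indices are treated entirely independently.
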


\subsection*{Part II: Restricting witness words}
In the second part, we place further restrictions on the
structure of ideals that witness inseparability. In return, we get
stronger guarantees on the shape of witness words.  Using \cref{reg:smallperiod}, proving
\cref{pump-adherence} would not be difficult if we could guarantee witness
words of the shape $u_0\bar{v}_1u_1\cdots \bar{v}_nu_n$ with
$\bar{v}_i\in\Dclosure[\modord{d}] v_i^*$ for a pattern
$u_0v_1u_1\cdots v_nu_n$. This is not the case for irreducible loop
patterns: Consider the ideal $I=\Dclosure[\modord{2}] a(abba)^*$.  The
loop pattern $a(abba)$ (with the loop $abba$) is clearly
irreducible. Also, $I$ is a member of $\Adh[\modord{2}]{b\{a,b\}^*}$: For
$k\in\N$, the word $b(abba)^{k+1}\in L$ satisfies $a(abba)^k\modord{2}
b(abba)^{k+1}\modord{2} a(abba)^{k+2}$, which proves
$I\subseteq\Dclosure[\modord{2}]{(L\cap I)}$.  Here, the witness
words $b(abba)^{k+1}$ do not have the above shape.  However, with an extended
syntax for patterns and an adapted irreducibility  notion, we
can guarantee almost that shape.

An \emph{extended loop pattern (for $\cM_d$)} is an expression of the form
$u_0v_1^{[r_1]}u_1\cdots v_n^{[r_n]}u_n$ 
such that
$u_0v_1u_1\cdots v_nu_n$ is a loop pattern for $\cM_d$ (i.e. $v_i\in
(\Sigma^d)^*$ for $i\in[1,n]$) and $r_1,\ldots,r_n\in[0,d-1]$.  The
ideal generated by the pattern is
$\Dclosure[\modord{d}]{u_0v_1^*w_1u_1\cdots v_n^*w_nu_n}$, where $w_i$
is the length-$r_i$ prefix of $v_i$ for $i\in[1,n]$.  Slightly abusing
notation, we use
$\Dclosure[\modord{d}] u_0v_1^{[r_1]}u_1\cdots v_n^{[r_n]}u_n$ to denote the generated ideal.
When we use such an expression 
 with $r_i>d$,  this stands for
$u_1v_1^{[s_1]}u_1\cdots v_n^{[s_n]}u_n$, where $s_i\in[0,d-1]$ and $s_i\equiv r_i\pmod{d}$.

Consider an extended loop pattern $u_0v_1^{[r_1]}u_1\cdots
v_n^{[r_n]}u_n$ for $\cM_d$ and let $w_i$ be the length-$r_i$ prefix
of $v_i$ for $i\in[1,n]$. The pattern is said to be \emph{associated}
to a language $L$ if for every $k\in\N$, there is a word
$\bar{u}_0\bar{v}_1\bar{u}_1\cdots \bar{v}_n\bar{u}_n\in L$ so that
for every $i\in[1,n]$, we have $v_i^kw_i\modord{d} \bar{v}_i$ and
$\bar{v}_i\in\Dclosure[\modord{d}] v_i^{[r_i]}$. Moreover,
$\bar{u}_0=u_0$, $\bar{u}_n=u_n$, and for each $i\in[1,n-1]$: (i)~if
$u_i$ is not empty, then $\bar{u}_i=u_i$ and (ii)~if $u_i$ is empty,
then $\bar{u}_i\in\Dclosure[\modord{d}]
\lambda^{r_i}(v_i)^*v_{i+1}^*$. As in \cref{reg:association}, we have a
notion of irreducible loop patterns, and we show that each ideal is represented by such a
pattern and then obtain:
\begin{lem}\label{reg:association-ext}
The ideal generated by an irreducible extended loop pattern $p$ for $\cM_d$
belongs to $\Adh[\modord{d}]{L}$ if and only if $p$ is associated to $L$.
\end{lem}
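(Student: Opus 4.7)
The plan is to adapt the argument underlying \cref{reg:association} to the richer syntax of extended loop patterns. Write $p = u_0 v_1^{[r_1]} u_1 \cdots v_n^{[r_n]} u_n$, let $w_i$ be the length-$r_i$ prefix of $v_i$, and let $I = \Dclosure[\modord{d}]{u_0 v_1^* w_1 u_1 \cdots v_n^* w_n u_n}$ be the ideal it generates. The ``if'' direction is a routine verification: if $p$ is associated to $L$, then for every $k$ the word $\bar{u}_0 \bar{v}_1 \bar{u}_1 \cdots \bar{v}_n \bar{u}_n \in L$ guaranteed by association lies in $I$, since each $\bar{v}_i \in \Dclosure[\modord{d}] v_i^{[r_i]}$ and the constraints on $\bar{u}_i$ keep it inside $I$; simultaneously it dominates $u_0 v_1^k w_1 u_1 \cdots v_n^k w_n u_n$ under $\modord{d}$, so $I \subseteq \Dclosure[\modord{d}]{L\cap I}$.

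For the ``only if'' direction, the key technical step is an extended irreducibility lemma in the spirit of the one used to prove \cref{reg:association}: if $p$ is irreducible, then for every $k$ there is a $K$ such that every embedding $u_0 v_1^K w_1 u_1 \cdots v_n^K w_n u_n \modord{d} u_0 v_1^{y_1} w_1 u_1 \cdots v_n^{y_n} w_n u_n$ must send at least $k$ whole copies of each $v_i$ on the left into the block $v_i^{y_i}$ on the right. The proof mirrors the original: if some $v_i$ could always be pushed outside its block, one could delete $v_i^{[r_i]}$ from $p$ without shrinking $I$, contradicting irreducibility. Now fix $k$ and choose such a $K$. Since $I \in \Adh[\modord{d}]{L}$, pick $\bar{w} \in L \cap I$ with $u_0 v_1^K w_1 u_1 \cdots v_n^K w_n u_n \modord{d} \bar{w}$; because $\bar{w} \in I$, also $\bar{w} \modord{d} u_0 v_1^{y_1} w_1 u_1 \cdots v_n^{y_n} w_n u_n$ for some $y_i$. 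Composing embeddings and applying the extraction step, inside $\bar{w}$ we locate the images of the ``skeleton'' letters coming from $u_0,\dots,u_n$ together with at least $k$ copies of each $v_i$; these positions induce a decomposition $\bar{w} = \bar{u}_0 \bar{v}_1 \bar{u}_1 \cdots \bar{v}_n \bar{u}_n$.

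It then remains to verify the association conditions. The properties $v_i^k w_i \modord{d} \bar{v}_i$ and $\bar{v}_i \in \Dclosure[\modord{d}] v_i^{[r_i]}$ follow from the two embeddings, using $v_i \in (\Sigma^d)^*$ and the cyclic structure of $\cM_d$ to track the state of $\cM_d$ before and after each block (so that the accumulated ``phase'' equals $r_i$ modulo $d$). When $u_i$ is non-empty, the constant letters of $u_i$ dictate a unique split in $\bar{w}$, giving $\bar{u}_i = u_i$. The main subtlety is the case $u_i = \varepsilon$: here there are no constant letters to pin down the split, and our chosen split may lie strictly inside a factor that $\bar{w}$ absorbs from both adjacent blocks; this is exactly why the association condition for empty $u_i$ allows $\bar{u}_i \in \Dclosure[\modord{d}] \lambda^{r_i}(v_i)^* v_{i+1}^*$ rather than $\Dclosure[\modord{d}] v_i^* v_{i+1}^*$, since shifting the split by $r_i$ positions rotates the residual $v_i$-factor to the left by $r_i$.

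The principal obstacle will be proving the extended irreducibility step cleanly: one must simultaneously track, for each loop $v_i^{y_i} w_i$, how far the $w_i$-prefix deforms the phase alignment of $\cM_d$, and rule out the possibility that an embedding evades the intended block by ``borrowing'' letters across an empty $u_i$ boundary. Managing this interplay between the phase offsets $r_i$, the rotations $\lambda^{r_i}$ needed for empty separators, and the minimality condition on $p$ is where most of the work lies; once it is in place, the rest of the argument closely parallels \cref{reg:association}.
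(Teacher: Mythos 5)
Your proposal goes wrong at exactly the point you flag as ``a routine verification'' in the non-empty $u_i$ case. You write that ``the constant letters of $u_i$ dictate a unique split in $\bar w$, giving $\bar u_i = u_i$''—but this is false in general and is precisely what requires the two extra conditions in the paper's definition of an \emph{irreducible extended loop pattern}: besides requiring that the underlying loop pattern $u_0(v_1)w_1u_1\cdots(v_n)w_nu_n$ be irreducible, the paper also requires that (for non-empty $u_i$) the last letter of $u_i$ not lie in $\kappa_d(v_{i+1})(d)$ and the first letter of $u_i$ not lie in $\kappa_d(v_i)(r_i+1)$. Without these, a $d$-embedding of $u_i$ into $\lambda^{r_i}(v_i)^\ell u_i v_{i+1}^\ell$ can indeed ``borrow'' the boundary letters of $u_i$ from the adjacent periodic blocks, so the split is not unique and $\bar u_i = u_i$ need not hold. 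Your sketch of irreducibility (``one could delete $v_i^{[r_i]}$ without shrinking $I$'') captures only the first of the three conditions, so your extraction step does not actually go through. This is not a presentational issue: the paper's statement genuinely depends on the full (three-part) notion of irreducibility, and \cref{reg:make-extended-irreducible} is there specifically to guarantee it can always be achieved.

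There is also a strategic divergence worth noting. You propose to re-prove an extended analogue of \cref{reg:normalembeddings}, whereas the paper avoids this entirely: it observes that condition (1) of extended irreducibility means \cref{reg:association} already applies to the non-extended pattern $u_0(v_1)w_1u_1\cdots(v_n)w_nu_n$, obtains a decomposition $\tilde u_0\tilde v_1\cdots\tilde v_n\tilde u_n$ from it, then performs a boundary shift (moving the trailing $d-r_i$ letters of each $\tilde v_i$ into $\tilde u_i$) and only then invokes the boundary conditions (2) and (3) to pin down how $u_i$ sits inside the shifted $\hat u_i$. This reduction spares one from re-running the pigeonhole/contradiction argument of \cref{reg:normalembeddings} for extended patterns, and cleanly isolates where the new irreducibility conditions are used. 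If you keep your direct approach, you must (a) state the extra irreducibility conditions, (b) prove the analogue of \cref{reg:normalembeddings} for the extended syntax, and (c) use (2) and (3) to get the unique split for non-empty $u_i$; as written none of these are present.
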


We can indeed not guarantee $\bar{u}_i=u_i$ if
$u_i=\varepsilon$ but have to allow for the case
$\bar{u}_i\in\Dclosure[\modord{d}] \lambda^{r_i}(v_i)^*v_{i+1}^*$: The
extended loop pattern $(ab)^{[0]}(cd)^{[0]}$ is irreducible and its
ideal $I=\Dclosure[\modord{2}] (ab)^*(cd)^*$ belongs to
$\Adh[\modord{2}]{(ab)^*ad(cd)^*}$, but in the witness words
$(ab)^kad(cd)^k\in I$, we always have a factor
$ad\in\Dclosure[\modord{2}]{(ab)^*(cd)^*}$.

\subsection*{Part III: Pumping up}
The final part of the proof of \cref{pump-adherence} is to construct
$\modord{\ell\cdot d}$-ideals using pumping.  Here, the strong
guarantees of associated extended loop patterns allow us to focus on
two types of factors in which we must pump: factors $\bar{v}_i$ and
factors $\bar{u}_i$ for empty $u_i$. One can show that repeating
subfactors thereof whose length is divisible by a particular
$\pi_d(v_i)$ will not lead out of the $\modord{\ell\cdot
  d}$-ideal. Moreover, since we established in the first part that
each period $\pi_d(v_i)$ is small ($\le m^2$), we can always find a
factor $f$ of length divisible by $\pi_d(v_i)$ that is pumpable.

\subsection{Undecidability}\label{sec-undecidability}
In this section, we prove \cref{separability:mod:undecidable}.
Second-order pushdown languages are those accepted by second-order
pushdown automata~\cite{Maslov1976} or, equivalently, indexed
grammars~\cite{aho1968}.

In order to prove that separability of
second-order pushdown languages by the fragment $\BSS[\MOD]$ is undecidable, we do not
need a detailed definition of second-order pushdown automata. All we
need is that their languages form a full trio~\cite{aho1968} and that
we can construct automata for two particular types of languages.
Let us describe these languages. For a word $w\in\{1,2\}^*$, let 
$\nu(w)$ be
the number obtained by interpreting the word as a reverse $2$-adic representation. Thus,
for $w\in\{1,2\}^*$, let
$\nu(\varepsilon)=0$, 
$\nu(1w)=2\cdot \nu(w)+1$, 
and   
$\nu(2w)=2\cdot \nu(w)+2$.  
Note that $\nu\colon\{1,2\}^*\to\N$ is a bijection.  
In the full
version of \cite{Zetzsche2015b}, it was shown\footnote{To be precise, this was shown for the unreversed $2$-adic representation, but the reversed case follows by just reversing the images of the morphisms.} that given two
morphisms $\alpha,\beta\colon\Sigma^*\to\{1,2\}^*$, one can
construct in polynomial time an indexed grammar generating
$\{a^{\nu(\alpha(w))}b^{\nu(\beta(w))} \mid w\in\Sigma^+ \}$. Applying a simple transduction
yields the language
\[ L_{\alpha,\beta}=\{a^{\nu(\alpha(w))}cb^{\nu(\beta(w))} \mid w\in\Sigma^+ \} \]
and hence an indexed grammar for $L_{\alpha,\beta}$. Furthermore, the
context-free language $E=\{a^ncb^n \mid n\in\N\}$ is also a second-order
pushdown language.
We apply a technique introduced by Hunt~\cite{Hunt1982}
and simplified by Czerwi\'{n}ski and Lasota~\cite{CzerwinskiLasota2017}.
The idea is to show that \emph{every} decidable problem can be reduced
in polynomial time to our problem:
\begin{prop}\label{undecidability:reduction}
  For each decidable $D\subseteq\Gamma^*$, there is a polynomial-time
  algorithm that, given $u\in\Gamma^*$, computes morphisms
  $\alpha,\beta$ such that $L_{\alpha,\beta}$ is inseparable from $E$ by
  $\BSS[\MOD]$ if and only if $u\in D$.
\end{prop}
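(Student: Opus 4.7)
We follow the Hunt-style reduction, simplified by Czerwi\'{n}ski and Lasota. Fix a deterministic Turing machine $M$ deciding $D$. Given $u \in \Gamma^*$, the goal is to produce, in polynomial time in $|u|$, morphisms $\alpha,\beta\colon \Sigma^* \to \{1,2\}^*$ for a suitable $\Sigma$ so that
\[
\alpha(w) = \beta(w) \iff w \in \Sigma^+ \text{ encodes a valid accepting computation of } M \text{ on } u.
\]
Since $\nu$ is a bijection, $\alpha(w) = \beta(w)$ is equivalent to $a^{\nu(\alpha(w))} c b^{\nu(\beta(w))} \in E$, so $L_{\alpha,\beta} \cap E \ne \emptyset$ iff $u \in D$. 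Thus when $u \in D$ the languages $L_{\alpha,\beta}$ and $E$ already intersect, and no separator (in $\BSS[\MOD]$ or otherwise) can exist.

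The encoding is a standard tile-based computation check. Letters of $\Sigma$ represent pairs of adjacent cells in two consecutive configurations of $M$; $\alpha,\beta$ are defined on each letter to output, respectively, the top and bottom halves of the tile under a fixed injective block code over $\{1,2\}$. Local tile compatibility captures one step of $M$, acceptance is enforced by restricting the permitted first and last letters of $w$, and the initial configuration on $u$ is hard-wired through linear-size prefixes in $\alpha,\beta$. Since $M$ is fixed and $u$ enters only through those prefixes, the construction is polynomial in $|u|$; combined with the indexed-grammar construction of \cite{Zetzsche2015b} and a simple transduction inserting $c$, it yields the required description of $L_{\alpha,\beta}$ as a second-order pushdown language.

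The main obstacle is the converse direction: if $u \notin D$, we must actually exhibit a $\BSS[\MOD]$-separator, because disjointness $L_{\alpha,\beta} \cap E = \emptyset$ alone is too weak (disjoint languages can be $\BSS[\MOD]$-inseparable). The strategy is to augment $\alpha$ and $\beta$ by short "checksum" suffixes over $\{1,2\}$ designed so that whenever the tile check inside $w$ fails anywhere, $\nu(\alpha(w))$ and $\nu(\beta(w))$ land in distinct residue classes modulo some fixed $d$; the recursion $\nu(1w) = 2\nu(w)+1$, $\nu(2w) = 2\nu(w)+2$ lets one route any local mismatch into controlled low-order bits of the difference $\nu(\alpha(w)) - \nu(\beta(w))$. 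Then $\{a^m c b^k \mid m \not\equiv k \pmod{d}\}$ contains $L_{\alpha,\beta}$ and is disjoint from $E$, yielding a $\BSS[\MOD[d]]$-separator. The delicate point is to design the checksums so that they detect \emph{every} kind of invalid $w$ simultaneously; everything else follows the Hunt template, and \cref{separability:mod:undecidable} then follows by the standard time-hierarchy argument, since no decidable problem can absorb polynomial-time reductions from all decidable problems.
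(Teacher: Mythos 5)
Your reduction scaffolding is right (fix a terminating DTM $M$ for $D$, apply a PCP-style construction, observe that $u\in D$ implies $L_{\alpha,\beta}\cap E\ne\emptyset$, hence inseparability), and you correctly identify that the hard part is exhibiting an actual $\BSS[\MOD]$-separator when $u\notin D$. However, the ``checksum suffix'' idea you propose for that direction does not work and is not what the paper does. Since $\nu$ treats the \emph{first} symbol as least significant, $\nu(xw)=\nu(x)+2^{|x|}\nu(w)$: a suffix appended to $\alpha(w)$ and $\beta(w)$ only perturbs high-order bits, so it cannot force a residue mismatch modulo a fixed $d$ when $\alpha(w)$ and $\beta(w)$ already agree on a long common prefix. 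In that regime the difference $\nu(\alpha(w))-\nu(\beta(w))$ is a multiple of a large power of two, hence $\equiv 0 \pmod d$ for any fixed $d$, and no bounded amount of post-hoc information can undo this. Your appeal to the recursion ``routing local mismatches into low-order bits'' has the direction of significance backwards, and your claimed inclusion $L_{\alpha,\beta}\subseteq\{a^mcb^k\mid m\not\equiv k\pmod d\}$ is unjustified.

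The paper avoids this issue entirely by using a structural property of the classical Sipser-style PCP reduction: any common prefix of $\alpha(w)$ and $\beta(w)$ encodes a prefix of a computation history of $M$ on $u$, and because $M$ is \emph{terminating}, the length of the longest such history is bounded by some $k$ (computable from $M$ and $u$). Thus property (ii) in the paper: for all $w$, $\alpha(w)$ and $\beta(w)$ have no common prefix longer than $k$ (when $u\notin D$). This directly yields a separator, because for $x,y\in\{1,2\}^*$ with $|x|,|y|>k$ one has a common prefix of length $>k$ iff $\nu(x)\equiv\nu(y)\pmod{2^{k+1}}$, and $|x|\le k$ iff $\nu(x)<2^{k+1}-1$. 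So $L_{\alpha,\beta}$ is contained in
\[
S_k=\{a^rcb^s\mid r\not\equiv s\bmod 2^{k+1}\}\cup\{a^rcb^s\mid \min(r,s)<2^{k+1}-1,\ r\ne s\},
\]
which is $\BSS[\MOD]$-definable and disjoint from $E$. The extra disjunct handles exactly the short-word case that your single modular condition would miss. So: no checksums are needed, and the bound $k$ comes for free from termination. You should replace your Step for the converse direction with this observation; as written, that step has a genuine gap.
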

Thus, decidability of separability by $\BSS[\MOD]$ would violate the
time hierarchy theorem (see, e.g. \cite[Thm
9.10]{Sipser2012introduction}).  In the proof of
\cref{undecidability:reduction}, we apply the classical reduction from
the halting problem to the PCP. Applied to a terminating TM, this
yields morphisms $\alpha,\beta$, with a bound on the maximal common
prefix of $\alpha(w)$ and $\beta(w)$ for $w\in\Sigma^*$. This implies that in case the input
machine does not accept, $L_{\alpha,\beta}$ and $E$ are separable by $\BSS[\MOD]$.

\subsection*{Future work} The author is confident that the procedure
for separability by $\BSS[\MOD]$ easily extends to separability by
other (albeit less natural) fragments of first-order logic (FO) with
numerical predicates. For example, one could add unary predicates
$\iota$ and $\tau$, where $\iota(x)$ ($\tau(x)$) expresses that $x$ is
the first (last) position. This connects to results of Place and
Zeitoun~\cite{PlaceZeitoun2015}, who developed methods for
transferring decidable separability by a fragment of FO to the
fragment enriched by the successor relation $+1$.  If these methods
could be applied here, this would imply decidable separability by
$\BSS[\MOD,\iota,\tau, +1]$, which is expressively equivalent to the
logic $\BSS[\REG]$. Here, $\REG$ denotes regular predicates of
arbitrary
arity~\cite{ChaubardPinStraubing2006,MacielPeladeauTherien2000}.
 
\subsection*{Acknowledgements} The author is very grateful to
Wojciech Czerwi{\'{n}}ski, Sylvain Schmitz, and Marc Zeitoun
for discussions that yielded important insights.

\bibliographystyle{plain}
\bibliography{bibliography}

\appendix

\section{Proof of \cref{boolean-conjunctions}}
Suppose $S$ consists of the WQOs $\gord_i$ for $i\in[1,n]$.
Every
$\gord$-PTL is an $S$-PTL, because the set $\Uclosure[\gord]{\{w\}}$ can be
written as $\bigcap_{i\in[1,n]}\Uclosure[\gord_i]{\{w\}}$. On the other hand,
every $S$-PTL is a Boolean combination of sets of the form
$\Uclosure[\gord_i]{w}$ with $w\in\Sigma^*$.  Clearly, $\Uclosure[\gord_i]{w}$
is upward closed also with respect to $\gord$ and can thus be written as
$\Uclosure[\gord]{\{w_1,\ldots,w_m\}}$ for some $w_1,\ldots,w_m\in\Sigma^*$,
which is a $\gord$-PTL.

\section{Proof of \cref{unboundedness:decidable}}
\begin{proof}
Let $\cA=(Q,\Sigma,C,E,q_0,F)$. We regard $C$ as an alphabet. Consider the
transducer $T=(Q,\Sigma,C,E',q_0,F)$, where $E'$ is obtained by adding, for
each edge $(q,x,\mu,q')\in E$, an edge $(q,x,u,q')$, where $u\in C^*$ is a word
with $|u|_c=\mu(c)$ for each $c\in C$. Then by definition, $\cA$ is unbounded
on $L$ if and only if for each $n\in\N$, there is a $w\in TL$ with $|w|_c\ge n$
for each $c\in C$.  The latter is an instance of the \emph{diagonal
problem}~\cite{CzerwinskiMartensRooijenZeitoun2015,CzerwinskiMartensRooijenZeitounZetzsche2017a},
which, given a language $K\subseteq\Sigma^*$, asks whether for every $n\in\N$,
there is a $w\in K$ with $|w|_a\ge n$ for all $a\in\Sigma$.  As mentioned
in~\cite{Zetzsche2015b}, for full trios, decidability of the SUP implies
decidability of the diagonal problem, because the former implies computability
of downward closures (with respect to the subword ordering).
\end{proof}

\section{Proof of \cref{inclusion-vs-adh}}
\begin{proof}
Clearly, $I_i\in\Adh{L}$ implies $I_i\subseteq\Dclosure{L}$. Conversely,
suppose $I_1\subseteq\Dclosure{L}$ and $I_1\notin\Adh{L}$. Then there is an
$x\in I_1$ with $x\notin\Dclosure{(L\cap I_1)}$, which means $x\in I_2\cup
\cdots \cup I_n$. We claim that then $I_1\subseteq I_2\cup\cdots\cup I_n$. Let $y\in
I_1$.  There is a $z\in I_1$ with $x\gord z$ and $y\gord z$. Since $x\gord z$,
we have $z\notin \Dclosure{(L\cap I_1)}$ and hence $z\in L_2\cup \cdots\cup
L_n$, which implies $y\in L_2\cup \cdots\cup L_n$. This means $I_1\subseteq
I_2\cup\cdots\cup I_n$ and since $I_1,\ldots,I_n$ are ideals, we have
$I_1\subseteq I_j$ for some $j\in[2,n]$, contrary to our assumption.
\end{proof}

\section{Proof of \cref{dc:computable}}
\begin{proof}
  Given $L$ in $\cC$, we enumerate $\gord$-downward closed
  languages. Since every downward closed set is a finite union of
  ideals, we enumerate finite unions $I_1\cup\cdots\cup I_n$ of
  $\gord$-ideals $I_1,\ldots,I_n$, which is possible because the set
  of ideals is a recursively enumerable set of regular
  languages. Clearly, we only need to enumerate unions where for any
  $i,j\in [1,n]$ with $i\ne j$, we have $I_i\not\subseteq I_j$.

  It remains to check whether $\Dclosure[\gord]{L}=I_1\cup\cdots\cup
  I_n$. Note that $\Dclosure[\gord]{L}\subseteq I_1\cup\cdots\cup I_n$
  if and only if $L\subseteq I_1\cup\cdots \cup I_n$, so that we can
  check whether $L\cap (\Sigma^*\setminus (I_1\cup\cdots\cup
  I_n))=\emptyset$. The latter is decidable because the decidability
  of the SUP implies the decidability of the emptiness problem and
  $\cC$ is effectively closed under intersection with regular
  languages.

  The other inclusion is more interesting. Suppose we have already
  established $\Dclosure[\gord]{L}\subseteq I_1\cup\cdots\cup I_n$.
  Then, according to \cref{inclusion-vs-adh}, we have
  $I_i\subseteq\Dclosure[\gord]{L}$ if and only if
  $I_i\in\Adh[\gord]{L}$.  We can therefore apply
  \cref{ewur:adh:decidable} to check whether the latter holds.
\end{proof}

\section{Proof of \cref{separability:decidable}}
\begin{proof}
  Suppose we are given languages $K$ and $L$.  We decide separability
  by combining two semi-algorithms. One enumerates $\gord$-PTL and for
  each such language $R$, decides whether $K\subseteq R$ and $L\cap
  R=\emptyset$.  If such an $R$ is found, the languages are reported
  separable. The other semi-algorithm enumerates ideals $I$ of
  $(\Sigma^*,\gord)$ and then, via \cref{ewur:adh:decidable}, decides
  whether $I\in\Adh[\gord]{K}$ and $I\in\Adh[\gord]{L}$. If such an
  ideal $I$ is found, the languages are reported inseparable. The
  correctness and termination of this algorithm is guaranteed by
  \cref{separability}.
\end{proof}

\section{Proof of \cref{ideals:ewur:subword}}
\begin{proof}
  Of course, for every $w\in\Sigma^*$, $\Uclosure[\subword]{w}$ is
  effectively regular.  Moreover, it is well-known that the ideals of
  $(\Sigma^*,\subword)$ are exactly the languages of the form
  $\{a_0,\eword\}\Gamma_1^*\{a_1,\eword\}\cdots
  \Gamma_n^*\{a_n,\eword\}$, where $a_0,\ldots,a_n\in\Sigma$ and
  $\Gamma_1,\ldots,\Gamma_n\subseteq\Sigma$~\cite{Jullien1969}.
  Lastly, if $I=\{a_0,\eword\}\Gamma_1^*\{a_1,\eword\}\cdots
  \Gamma_n^*\{a_n,\eword\}$, we build $\cA_I$ as follows.  For each
  $i\in[1,n]$, choose a word $w_i\in\Gamma_i^*$ that contains each
  letter of $\Gamma_i$ exactly once. Then, it is easy to construct
  $\cA_I$ so that $\bar{\cA_I}(w)\ge k$ if and only if $w\in I$ and
  $a_0w_1^ka_1\cdots w_n^ka_n\subword w$. Then clearly $\cA_I$ is
  unbounded on $L$ if and only if we have $I\subseteq\Dclosure[\subword]{(L\cap
    I)}$. The latter is equivalent to $I\in\Adh[\subword]{L}$.
\end{proof}

\section{Proof of \cref{functions:ideals}}
\begin{proof}
  If $I\subseteq X$ is an ideal, then the set $J:=\Dclosure{f(I)}$ is
  downward closed by definition and upward directed because $I$ is.
  Hence, $J$ is an ideal. Moreover, $I=f^{-1}(J)$, because $I\subseteq
  f^{-1}(J)$ is immediate and $f^{-1}(J)\subseteq I$ holds because $I$
  is downward closed. This also implies $\Dclosure{f(f^{-1}(J))}=\Dclosure{f(I)}=J$.

  Conversely, suppose $I=f^{-1}(J)$ for an ideal $J\subseteq Y$ with
  $\Dclosure{f(f^{-1}(J))}=J$.  First, $I=f^{-1}(J)$ is downward
  closed because $J$ is. Moreover, we have $\Dclosure{f(I)}=J$, which
  means given $x,y\in I$, we can find a common upper bound $z\in J$ for
  $f(x)\in J$ and $f(y)\in J$ and then a $z'\in f(I)$ with
  $z\gord z'$. Then $z'=f(w)$ for some $w\in I$ and hence $x\gord_f w$
  and $y\gord_f w$. Thus $I$ is upward directed.
\end{proof}

\section{Proof of \cref{functions:adherences}}
\begin{proof}
  Suppose $f^{-1}(J)\in\Adh{L}$, equivalently, $f^{-1}(J)\subseteq
  \Dclosure{(L\cap f^{-1}(J))}$. We show that $J\subseteq
  \Dclosure{(f(L)\cap J)}$.  For $y\in J$, we can find $y'\in
  f(f^{-1}(J))$ with $y\gord y'$. Say $y'=f(x')$ with $x'\in
  f^{-1}(J)$.  Thus, there is $x''\in L\cap f^{-1}(J)$ with $x'\gord_f
  x''$. Since $y\gord y'=f(x')\gord f(x'')\in f(L)\cap J$, we have
  shown $J\subseteq\Dclosure{(f(L)\cap J)}$.

  Conversely, suppose $J\in\Adh{f(L)}$, hence
  $J\subseteq\Dclosure{(f(L)\cap J)}$.  This means, for $x\in
  f^{-1}(J)$, we can find $x'\in L$ with $f(x)\gord f(x')$ and
  $f(x')\in J$.  Thus, $f^{-1}(J)\subseteq\Dclosure{(L\cap f^{-1}(J))}$
  and hence $f^{-1}(J)\in\Adh{L}$.
\end{proof}

\section{Proof of \cref{ewur:transducers}}
\begin{proof}
  First, for every $w\in\Sigma^*$, we have
  $\Uclosure[\gord_f]{w}=f^{-1}(\Uclosure[\gord]{f(w)})$, which is
  effectively regular because $\Uclosure[\gord]{f(w)}$ is.

  Second, \cref{functions:ideals} tells us that the ideals of
  $(\Sigma^*,\gord_f)$ are precisely the sets of the form $f^{-1}(I)$
  where $I\subseteq\Gamma^*$ is an ideal of $(\Gamma^*,\gord)$ and for
  which $\Dclosure[\gord]{f(f^{-1}(I))}=I$. Therefore, the set of
  ideals of $(\Sigma^*,\gord_f)$ is recursively enumerable: Enumerate
  the ideals $I$ of $(\Gamma^*,\gord)$ and check whether
  $\Dclosure[\gord]{f(f^{-1}(I))}=I$. The latter is possible because
  $f(f^{-1}(I))\subseteq\Gamma^*$ is effectively regular (regular
  languages are closed under rational transductions) and because for
  the EWUR $(\Gamma^*,\gord)$, we can effectively compute a finite
  automaton for the downward closure $\Dclosure[\gord]{f(f^{-1}(I))}$:
  The regular languages constitute a full trio with decidable SUP.
  Thus, we can compare the regular languages
  $\Dclosure[\gord]{f(f^{-1}(I))}$ and $I$.

  Third, given an ideal $J\subseteq\Sigma^*$ (represented as a finite
  automaton), we can find an ideal $I\subseteq\Gamma^*$ with
  $J=f^{-1}(I)$. Since $(\Gamma^*,\gord)$ is an EWUR, we can compute a
  counter automaton $\cA_I$ such that $\cA_I$ is unbounded on a
  language $L\subseteq\Gamma^*$ if and only if $I\in\Adh[\gord]{L}$.
  According to \cref{functions:adherences}, we know that
  $J\in\Adh[\gord_f]{K}$ if and only if $I\in\Adh[\gord]{f(K)}$, which
  in turn is equivalent to $\cA_I$ being unbounded on $f(K)$. We can
  thus construct $\cA_J$ as a product of $\cA_I$ and the transducer
  for $f$ so that $\cA_J(w)=\cA_I(f(w))$ for every
  $w\in\Sigma^*$. Clearly, $\cA_J$ is unbounded on $K$ if and only if
  $\cA_I$ is unbounded on $f(K)$.
\end{proof}

\section{Proof of \cref{conjunction-ideals}}
\begin{proof}
Let $I\subseteq X$ be an ideal of $(X,\gord)$. Then $I$ is directed with respect
to $\gord_s$ for each $s\in S$.  Hence, $I_s=\Dclosure[\gord_s]{I}$ is an ideal for
each $s\in S$.
We claim that $I=\bigcap_{s\in S} I_s$. Clearly,
$I\subseteq\Dclosure[\gord_s]{I}=I_s$, hence $I\subseteq\bigcap_{s\in S}I_s$.  On
the other hand, if $x\in \bigcap_{s\in S} I_s$, then for each $s\in S$, there
is a $x_s\in I$ with $x\gord_s x_s$.  Since $I$ is directed, we find a $y\in I$
with $x_s\gord y$ for each $s\in S$. Hence, in particular $x\gord_s y$. This
implies $x\gord y$ and thus $x\in I$.
This proves $I=\bigcap_{s\in S} I_s$.
Finally, as a $\gord$-directed set, $I$ itself witnesses that $(I_s)_{s\in S}$
belongs to $\Adh[S]{I}$.

Conversely, suppose $I=\bigcap_{s\in S} I_s$ and that $(I_s)_{s\in S}$ belongs
to $\Adh[S]{I}$. The latter means that there is a $\gord$-directed set
$D\subseteq I$ such that for each $s\in S$, we have $I_s=\Dclosure[\gord_s]{D}$.
We claim that $I=\Dclosure[\gord]{D}$. If $x\in I$, then for each $s\in S$, there
is an $x_s\in D$ with $x\gord_s x_{s}$.  Since $S$ is finite and $D$ is
$\gord$-directed, we find a $y\in D$ with $x_s\gord y$ for all $s\in S$. Then for
each $s\in S$, we have $x\gord_s x_{s}\gord_s y$ and thus $x\gord y$. Hence,
$I\subseteq \Dclosure[\gord]{D}$.  On the other hand, if $x\gord y$ for $y\in D$,
then clearly $x\gord_s y$ for each $s\in S$ and thus $x\in \bigcap_{s\in S}
I_s=I$.
\end{proof}

\section{Proof of \cref{conjunction-adherence}}
\begin{proof}
Let $D\subseteq I$ be a $\gord$-directed set with $I_s=\Dclosure[\gord_s]{D}$
for every $s\in S$.  Suppose $I\in\Adh[\gord]{L}$.  Then there is a
$\gord$-directed set $D'\subseteq L$ with $I=\Dclosure[\gord]{D'}$. We claim
that $I_s=\Dclosure[\gord_s]{D'}$.  For $x\in I_s$, there is a $y\in D$ with
$x\gord_s y$. Since $y\in I$, there is a $z\in D'$ with $y\gord z$. In
particular, we have $x\gord_s z\in D'$. This proves ``$\subseteq$''.  On the
other hand, we know $D'\subseteq I\subseteq I_s$, which implies
$\Dclosure[\gord_s]{D'}\subseteq I_s$, since $I_s$ is $\gord_s$-downard closed.

Conversely, suppose that $(I_s)_{s\in S}$ belongs to $\Adh[S]{I}$ with a
directed set $D'\subseteq L\cap I$ such that $I_s=\Dclosure[\gord_s]{D'}$. We
claim that $I=\Dclosure[\gord]{D'}$. Of course, we have the inclusion
``$\supseteq$'' because $D'\subseteq I$, so assume $x\in I$. Since
$I_s=\Dclosure[\gord_s]{D'}$ and $I=\bigcap_{s\in S} I_s$, for each $s\in S$,
there is a $y_s\in D'$ with $x\gord_s y_s$. The $\gord$-directedness of $D'$
yields a $y\in D'$ with $y_s\gord y$ for every $s\in S$. Then in particular
$x\gord y$ and hence $x\in\Dclosure[\gord]{D'}$.
\end{proof}

\section{Proof of \cref{ewur:product}}
\begin{proof}
Let $\cA_i=(Q_i,\Sigma,C_i,E_i,q^i_{0},F_i)$ be a counter automaton that
characterizes adherence membership of $I_i$ with respect to $\gord_i$ for
$i=1,2$.  We construct a product automaton $\cA$ so that $\cA$ has states
$Q_1\times Q_2$, counters $C_1\cup C_2$, and satisfies $(q^1_0,q^2_0,\eword,0) \autsteps[\cA]
(q^1,q^2,w,\mu)$ if and only if $(q_0^i,\eword,0)\autsteps (q^i,w,\mu|_{C_i})$
for $i=1,2$. Moreover, $\cA$ has final states $F_1\times F_2$.  

We claim that $\cA$ is unbounded on $L$ if and only if $(I_1,I_2)$ belongs to
$\Adh[\gord_1,\gord_2]{L}$.  We will use the fact that when a counter automaton
$\cB$ is unbounded on $K\cup L$, then it is unbounded on $K$ or on $L$. Suppose
$\cA$ is unbounded on $L$. By construction, unboundedness of $\cA$ implies
unboundedness of $\cA_1$ and of $\cA_2$. Therefore, $\cA$ must be unbounded on
$L\cap I_1$: Otherwise, $\cA$, and thus $\cA_1$, would be unbounded on $L\setminus I_1$, which
is impossible by definition of $\cA_1$.  By the same argument, $\cA$ must be unbounded on $L\cap I_1\cap
I_2$.  Then, $\cA$ is also unbounded on some sequence $w_1,w_2,\ldots\in L\cap
I_1\cap I_2$ and since $\gord$ is a WQO, we may assume that this sequence is a
$\gord$-chain. Therefore, the $\gord$-directed set $D=\{w_i\mid i\ge 1\}$
satisfies $D\subseteq I_1\cap I_2$ and $I_i\subseteq
\Dclosure[\gord_i]{D}$ for $i=1,2$. This proves $(I_1,I_2)\in\Adh[\gord_1,\gord_2]{L}$.

Conversely, suppose $(I_1,I_2)\in\Adh[\gord_1,\gord_2]{L}$. Then there is a
$\gord$-directed set $D\subseteq L$ with $I_i=\Dclosure[\gord_i]{D}$. This
implies that $\cA_1$ and $\cA_2$ are unbounded on $D$. Hence, there are
sequences $u_1,u_2,\ldots\in D$ and $v_1,v_2,\ldots\in D$ such that
 $\cA_1$ is unbounded on
$u_1,u_2,\ldots$ and $\cA_2$ is unbounded on $v_1,v_2,\ldots$. Thus, we have
$I_1\subseteq\Dclosure[\gord_1]{\{u_i \mid i\ge 1\}}$ and
$I_2\subseteq\Dclosure[\gord_2]{\{v_i \mid i\ge 1\}}$. Since $D$ is $\gord$-directed,
we can successively find elements $w_1,w_2,\ldots\in D$ such that $u_i\gord
w_i$ and $v_i\gord w_i$ and $w_i\gord w_{i+1}$. Then we have 
$I_i\subseteq\Dclosure[\gord_i]{\{w_k\mid k\ge 1\}}$ for $i=1,2$ and 
since $D\subseteq I_1\cap I_2$,
we have $\Dclosure[\gord_i]{\{w_k\mid k\ge 1\}}=I_i$.

Hence, $\cA_1$ and $\cA_2$ are both unbounded on $w_1,w_2,\ldots$.  We can
therefore pick a subsequence $w'_1,w'_2,\ldots$ such that $\bar{\cA_1}(w'_k)\ge
k$ for $k\ge 1$. As an infinite subsequence of $w_1,w_2,\ldots$, this sequence
will still satisfy $\Dclosure[\gord_2]{\{w'_k\mid k\ge 1\}}=I_2$ and in
particular, $\cA_2$ is unbounded on $w'_1,w'_2,\ldots$. We can therefore find
another subsequence $w''_1,w''_2,\ldots$ such that $\bar{\cA_i}(w''_k)\ge k$
for every $k\ge 1$ and $i\in\{1,2\}$. Thus, $\cA$ is unbounded on
$w''_1,w''_2,\ldots$ and hence on $L$.
\end{proof}

\section{Proof of \cref{ewur:conjunction}}
\begin{proof}
Let $\gord$ be the conjunction of $\gord_1$ and $\gord_2$.  First, for
$w\in\Sigma^*$, we have $\Uclosure[\gord]{w}=\Uclosure[\gord_1]{w}\cap
\Uclosure[\gord_2]{w}$, so that $\Uclosure[\gord]{w}$ inherits effective
regularity from $\Uclosure[\gord_1]{w}$ and $\Uclosure[\gord_1]{w}$.

According to \cref{conjunction-ideals}, we can represent an ideal $I$ of
$\gord$ by a pair $(I_1,I_2)$ such that $I_i$ is an ideal for $\gord_i$,
$I=I_1\cap I_2$, and $(I_1,I_2)\in\Adh[\gord_1,\gord_2]{I}$. Hence, in order to
show that the set of ideals of $\gord$ is a recursively enumerable set of
regular languages, we need to show that it is decidable whether
$(I_1,I_2)\in\Adh[\gord_1,\gord_2]{I}$. To this end, we use \cref{ewur:product}
to construct a counter automaton $\cA$ that is unbounded on $L$ if and only if
$(I_1,I_2)\in\Adh[\gord_1,\gord_2]{L}$. Since $I=I_1\cap I_2$ is effectively
regular, we can decide whether $\cA$ is unbounded on $I$ using
\cref{unboundedness:decidable}.
\end{proof}

\section{Proof of \cref{ideals:labeling}}
Note that every unambiguous automaton $\cA$ defines an order $\autord{\cA}$ on
$\langof{\cA}$ in the same way labeling automata define an order on $\Sigma^*$.
We will now also use $\autord{\cA}$ to denote this order. We say that an
unambiguous automaton $\cB$ is a \emph{subautomaton} of $\cA$ if $\cB$ is
obtained from $\cA$ by deleting some edges.  The following can be shown,
roughly speaking, by decomposing $\cB$ into strongly connected components and
dividing $\langof{\cB}$ according to which path through the resulting graph a
word takes.
\begin{lem}\label{autorder:funion:single}
For a subautomaton $\cB$ of an unambiguous automaton $\cA$, 
$\langof{\cB}$ is a finite union of sets of the form
\[ \Dclosure[\autord{\cA}]{u_0v_1^*u_1\cdots v_n^*u_n}, \]
 where $u_0v_1u_1\cdots v_nu_n$ is a loop pattern for $\cA$.
\end{lem}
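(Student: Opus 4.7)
The proof proceeds by decomposing $\cB$ via its strongly connected components and covering $\langof{\cB}$ by the downward closures of loop patterns built around simple paths in $\cB$. Let $\pi$ range over the finitely many simple paths (no repeated states) in $\cB$ from an initial state to a final state, and write $\pi$ as a state sequence $s_0, s_1, \ldots, s_\ell$ with edges $e_i$ reading letter $a_i$. For each state $s_i$ on $\pi$, because its SCC in $\cB$ is strongly connected, I may choose a closed walk $v_i$ at $s_i$ that traverses every edge of that SCC (taking $v_i = \eword$ when the SCC has no edges). The resulting word $v_0\, a_1\, v_1\, a_2 \cdots a_\ell\, v_\ell$ is a loop pattern for $\cA$, since its unique $\cA$-run (which coincides with the run in $\cB$) loops at each $v_i$. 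My claim is
\[
\langof{\cB} \;=\; \bigcup_\pi \Dclosure[\autord{\cA}]{v_0^*\, a_1\, v_1^*\, a_2 \cdots a_\ell\, v_\ell^*}.
\]

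The direction ``$\supseteq$'' is immediate: any word in $v_0^*\, a_1 \cdots v_\ell^*$ has its $\cA$-run entirely inside $\cB$ by construction, and the $\autord{\cA}$-downward closure only removes loops from runs, preserving membership in $\langof{\cB}$. For ``$\subseteq$'', fix $w \in \langof{\cB}$ with run $\rho$. Iteratively cutting out detours at repeated states reduces $\rho$ to a simple path $\pi$ with state sequence $s_0, \ldots, s_\ell$ from initial to final. The chosen edges $e_i$ of $\pi$ appear in $\rho$ in order, which lets me write $w = w_0\, a_1\, w_1\, a_2 \cdots a_\ell\, w_\ell$, where $w_i$ is the portion of $w$ read between $e_i$ and $e_{i+1}$ in $\rho$. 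Each $w_i$ is a closed walk of $\rho$ at $s_i$; since a walk leaving an SCC cannot return, $w_i$ stays within the SCC of $s_i$, which is exactly where the loop $v_i$ lives.

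The technical core is the subclaim: every closed walk $w_i$ at $s_i$ within its SCC satisfies $w_i \autord{\cA} v_i^{k_i}$ for some $k_i \in \N$. To prove it, write $w_i$'s $\cA$-run as an edge sequence $f_1, f_2, \ldots, f_m$. Since $v_i$ traverses every edge of the SCC, each copy of $v_i$ in $v_i^m$ contains at least one occurrence of each $f_j$; I greedily pick, in the $j$-th copy of $v_i$, one occurrence of $f_j$. This yields positions $q_1 < q_2 < \cdots < q_m$ in the run of $v_i^m$ whose edges match $f_1, \ldots, f_m$. The gap between consecutive selected edges starts at the destination of $f_j$ and ends at the source of $f_{j+1}$, and these coincide because $w_i$ is a consistent walk; hence each gap is a loop of $\cA$. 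This is precisely the loop-factorisation required for $w_i \autord{\cA} v_i^m$. Pasting these embeddings across $i = 0, \ldots, \ell$ and interspersing the unchanged letters $a_i$ yields $w \autord{\cA} v_0^{k_0}\, a_1\, v_1^{k_1} \cdots a_\ell\, v_\ell^{k_\ell}$, so $w$ lies in the downward closure of the $\pi$-pattern.

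\textbf{Main obstacle.} The embedding subclaim is the heart of the argument and relies crucially on strong connectivity of each SCC: only then does a single ``universal'' loop $v_i$ exist that absorbs arbitrary closed walks at $s_i$ via loop insertion, making the greedy matching go through. The behaviour at SCC boundaries is routine, since trivial SCCs contribute $v_i = \eword$ and the corresponding closed walks $w_i$ at such states are automatically empty.
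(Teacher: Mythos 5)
Your proof is correct and takes essentially the same route as the paper: decompose $\cB$ via its SCCs, attach to each state on a simple accepting path a universal loop $v_i$ that traverses every edge of its SCC, and show that any closed walk at that state embeds into a power of $v_i$ by the same greedy edge-matching argument (pick one occurrence of $f_j$ in the $j$-th copy of $v_i$; the gaps are loops because consecutive $f_j$ share an endpoint). The paper's presentation differs only in bookkeeping—it indexes by paths through the condensation DAG together with simple paths inside each component, rather than directly by simple paths in $\cB$—but the decomposition and the key embedding subclaim coincide.
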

\begin{proof}
We decompose $\cB$ into its directed acyclic graph $G$ of strongly connected
components and notice that this graph has only finitely many paths.  Moreover,
for each strongly connected component $C$ and and states $p$ and $q$ in $C$,
there are only finitely many simple paths from $p$ to $q$.  Every run through
$C$ from $p$ to $q$ can be reduced to one of these simple paths by deleting
loops.  Therefore, we can divide the set $\langof{\cB}$ according to which paths
in $G$ they a word follows and to which simple paths in each component it reduces.
This yields a decomposition of 
$\langof{\cB}$ as a finite union of sets of the form $u_0L_1u_1\cdots L_nu_n$
such that there are states $q_0,\ldots,q_n$ so that 
\begin{itemize}
\item $q_0$ is initial and $q_n$ is final,
\item for $i\in[0,n]$, either $(q_i,u_i,q_{i+1})$ is an edge in $\cB$, or $u_i=\eword$ and $q_{i+1}=q_i$,
\item for $i\in[1,n]$, $L_i$ is the set of words read on a cycle from $q_i$ to $q_i$.
\end{itemize}
For each $i\in [1,n]$, consider the strongly connected component of $\cB$ that
contains $q_i$ and let $E_i$ be the set of edges of $\cB$ in this component.

There exists a word $v_i\in L_i$ whose run from $q_i$ to $q_i$ (note that there
is at most one such run because $\cA$ is a labeling automaton) uses every edge
from $E_i$ at least once: For each $e\in E_i$, take a run from $q_i$ to $q_i$
that uses $e$.  Then take $v_i$ to be the word read on the concatenation of all
these runs.

We claim that $u_0L_1u_1\cdots L_nu_n=\Dclosure[\autord{\cA}] u_0v_1^*u_1\cdots
v_n^*u_n$.  Since $u_0L_1u_1\cdots L_nu_n$ is clearly downward closed with
respect to $\autord{\cA}$ and contains $u_0v_1^*u_1\cdots v_n^*u_n$, the
inclusion ``$\supseteq$'' holds.  Conversely, suppose $w_i\in L_i$ for $i\in
[1,n]$. Consider a particular $i\in [1,n]$ and let $r=e_1\cdots e_k\in E^*$ be
the run of $\cB$ when reading $w_i$ from $q_i$ to $q_i$.  Each $e_j$ occurs in
the run $s\in E^*$ of $v_i$, so that the run $s^k$ of $v_i^k$ contains
$e_1\cdots e_k$ as a subsequence and we can write $s^k=t_0e_1t_1\cdots e_kt_k$
for some $t_0,\ldots,t_k\in E^*$.  Since $e_i$ ends in the state where
$e_{i+1}$ starts and $r$ and $s^k$ are both cycles from $q_i$ to $q_i$, every
run $t_i$ is a cycle. This implies that $u_0w_1u_1\cdots w_nu_n\autord{\cA}
u_0v_1^{|w_1|}u_1\cdots v_n^{|w_n|}u_n$.  This proves the inclusion
``$\subseteq$''.
\end{proof}

We shall prove that the ideals of $(\Sigma^*,\autord{\cA})$ are precisely those sets of the form
$\Dclosure[\autord{\cA}]{u_0v_1^*u_1\cdots v_n^*u_n}$. The first step in
proving that is to show that every downward closed language is a finite union
of such sets.  Here, we will use the fact that ideals of the subword ordering
are precisely the languages $\{a_0,\eword\}\Gamma_1^*\{a_1,\eword\}\cdots
\Gamma_n^*\{a_n,\eword\}$, where $a_0,\ldots,a_n\in\Sigma$ and
$\Gamma_1,\ldots,\Gamma_n\subseteq\Sigma$~\cite{Jullien1969}.
\begin{prop}\label{autord:funion}
Let $\cA$ be a labeling automaton and $L\subseteq\Sigma^*$.  The set
$\Dclosure[\autord{\cA}]{L}$ is a finite union of sets of the form
\[ \Dclosure[\autord{\cA}]{u_0v_1^*u_1\cdots v_n^*u_n}, \]
 where $u_0v_1u_1\cdots
v_nu_n$ is a loop pattern for $\cA$.
\end{prop}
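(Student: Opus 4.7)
My plan is to reduce $\autord{\cA}$ on $\Sigma^*$ to a subword problem on the edge alphabet $E^*$, using the characterization $u\autord{\cA} v \iff \sigma_{\cA}(u)=\sigma_{\cA}(v)\wedge \cA(u)\subword \cA(v)$, and then invoke \cref{autorder:funion:single} as a black box to turn the remaining ``cycle sets'' in $E^*$ back into loop patterns in $\Sigma^*$.

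First I would partition $L$ by the state-pair assigned by $\sigma_\cA$: write $L=\bigcup_{(p,q)\in Q\times Q} L_{p,q}$ with $L_{p,q}=L\cap \sigma_\cA^{-1}(p,q)$, a finite union. For each $(p,q)$, one has $u\in \Dclosure[\autord{\cA}]{L_{p,q}}$ iff $\sigma_\cA(u)=(p,q)$ and $\cA(u)\in \Dclosure[\subword]{\cA(L_{p,q})}\subseteq E^*$. Applying the classical characterization of subword ideals over $E^*$, I would decompose $\Dclosure[\subword]{\cA(L_{p,q})}$ as a finite union of sets of the form $J=\{e_0,\eword\}F_1^*\{e_1,\eword\}\cdots F_m^*\{e_m,\eword\}$, with $e_k\in E$ and $F_k\subseteq E$.

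Next, for each such $J$, I would further decompose the valid runs from $p$ to $q$ lying in $J$ according to finitely many ``skeletons'': a skeleton is a choice $(s_0,\ldots,s_m)$ with $s_k\in\{e_k,\eword\}$ such that $s_0 s_1\cdots s_m$ (with $\eword$'s skipped) is a legitimate partial edge-path from $p$ to $q$ in $\cA$, which pins down intermediate states $r_0=p,r_1,\ldots,r_m=q$. The runs in $J$ with this skeleton are exactly $s_0 t_1 s_1\cdots t_m s_m$, where each $t_k$ is a cycle at $r_{k-1}$ using only edges in $F_k$; the corresponding set of $\Sigma^*$-words is $\langof{\cB_k}$, where $\cB_k$ is the subautomaton of $\cA$ with edge set $F_k$ and initial/final state $\{r_{k-1}\}$. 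By \cref{autorder:funion:single}, each $\langof{\cB_k}$ is a finite union of downward closures $\Dclosure[\autord{\cA}]{P_{k,j}}$ of loop patterns $P_{k,j}$ for $\cA$ whose runs cycle at $r_{k-1}$.

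Finally, letting $u_{s,k}\in\Sigma\cup\{\eword\}$ denote the letter read along $s_k$, multiplicativity of $\autord{\cA}$ (with respect to matching states) gives
\[ u_{s,0}\,\Dclosure[\autord{\cA}]{P_{1,j_1}}\,u_{s,1}\cdots \Dclosure[\autord{\cA}]{P_{m,j_m}}\,u_{s,m}\subseteq \Dclosure[\autord{\cA}]{u_{s,0}P_{1,j_1}u_{s,1}\cdots P_{m,j_m}u_{s,m}}, \]
and the concatenated expression on the right is itself a loop pattern for $\cA$ with a valid run from $p$ to $q$. Taking the finite union over all $(p,q)$, all ideals $J$, all consistent skeletons $(s_k)$, and all index tuples $(j_k)$ then yields $\Dclosure[\autord{\cA}]{L}$: the forward inclusion comes from multiplicativity together with the decompositions above, and the reverse inclusion holds because each instantiation of the concatenated pattern corresponds to a word in the skeleton's set, which $\autord{\cA}$-embeds into some element of $L_{p,q}$ and hence lies in $\Dclosure[\autord{\cA}]{L}$. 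The main obstacle I anticipate is the bookkeeping of the nested decomposition (state pairs, then subword ideals in $E^*$, then skeletons, then per-cycle patterns via \cref{autorder:funion:single}) and carefully verifying that the final concatenated expression remains a single bona fide loop pattern for the ambient automaton $\cA$.
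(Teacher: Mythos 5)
Your strategy matches the paper's up through the decomposition of $\Dclosure[\subword]{\cA(L_{p,q})}$ into subword-ideals of $E^*$, but the ``skeleton'' step has a genuine gap. You claim that fixing a choice $(s_0,\ldots,s_m)$ with $s_k\in\{e_k,\eword\}$ ``pins down intermediate states $r_0=p,\ldots,r_m=q$'' and forces each $t_k$ to be a \emph{cycle} at $r_{k-1}$. That is not true: the skeleton alone constrains the run only at the positions where $s_k\ne\eword$, and the block $t_k\in F_k^*$ is free to travel from one state to a \emph{different} state. Concretely, take $J=\{\eword\}F_1^*\{\eword\}$ (so $m=1$, $e_0=e_1=\eword$) and $F_1=\{f\}$ for a single edge $f$ from $p$ to $q$ with $p\ne q$. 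Then $f\in J\cap\seqof[p,q]{\cA}$, but the only ``skeleton'' is the empty one, which is not an edge-path from $p$ to $q$; your construction would therefore contribute nothing for this $J$, missing the run $f$. More generally, whenever some $s_j=\eword$, the states at the boundary of the adjacent $t$-blocks are unconstrained, so those blocks need not be cycles.

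The fix, which is what the paper actually does, is to enumerate in addition to the edges $s_k$ a full sequence of boundary states, so that each block becomes the set of runs from a chosen state $q_i$ to a \emph{possibly different} chosen state $q_{i+1}$, restricted to edges in the corresponding $E_i$. This is why \cref{autorder:funion:single} is stated for arbitrary subautomata $\cB_i$ of the unambiguous automaton $\cA_i$ obtained from $\cA$ by fixing $q_i$ as the unique initial state and $q_{i+1}$ as the unique final state; the cyclic situation you consider is a special case. Your use of multiplicativity and the final reassembly into a single loop pattern are fine once the blocks are handled at this generality (the loop patterns for $\cA_i$ are loop patterns for $\cA$), but as written the proposal does not cover all of $\Dclosure[\autord{\cA}]{L}$.
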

\begin{proof}
Let $\cA=(Q,\Sigma,E,I,F)$.
For each $p,q\in Q$, we define $K_{p,q}=\{w\in L\mid \sigma_{\cA}(w)=(p,q)\}$. Then
we have 
\[ \Dclosure[\autord{\cA}]{L}=\bigcup_{p,q\in Q} \Dclosure[\autord{\cA}]{K_{p,q}}. \]
 Therefore, it suffices to consider the case
that there are fixed $p,q\in Q$ such that for every $u,v\in L$, we have $\sigma_{\cA}(u)=(p,q)$.  Note that then $u\autord{\cA} v$ if and only if
$\cA(u)\subword\cA(v)$ for $u,v\in L$.  Let $\seqof[p,q]{\cA}$ denote the set
of all runs of $\cA$ that start in $p$ and end in $q$.  Let $\pi\colon E^*\to\Sigma^*$ be the projection onto labels of edges. Observe that
$\Dclosure[\autord{\cA}]{L}=\pi((\Dclosure{\cA(L)})\cap \seqof[p,q]{\cA})$.
(Here, $\Dclosure{\cA(L)}$ denotes the downward closure with respect to the subword ordering.)

The language $\Dclosure{\cA(L)}$ is a finite union of sets of the form
$e_0E_1^*e_1\cdots E_n^*e_n$, where $E_i\subseteq E$ and $e_i\in
E\cup\{\eword\}$.  Hence, we would like to prove the \lcnamecref{autord:funion}
for sets of the form $\pi(e_0E_1^*e_1\cdots E_n^*e_n\cap \seqof[p,q]{\cA})$.
However, these are not necessarily downward closed. Therefore, we prove that
\[ \Dclosure[\autord{\cA}]{\pi(e_0E_1^*e_1\cdots E_n^*e_n\cap \seqof[p,q]{\cA})} \]
can be written as a finite union of sets
$\Dclosure[\autord{\cA}]{u_0v_1^*u_1\cdots v_n^*u_n}$.

The set $e_0E_1^*e_1\cdots E_n^*e_n\cap \seqof[p,q]{\cA}$ is a finite union of sets
of the form $e_0S_1e_1\cdots S_n e_n$ such that there are states $q_0,\ldots,q_{n+1}$
so that
\begin{itemize}
\item for $i\in[0,n]$, either $e_i=\eword$ and $q_{i+1}=q_i$, or $e_i$ is
an edge from $q_i$ to $q_{i+1}$,
\item for $i\in[1,n]$, $S_i\subseteq E_i^*$ is the set of runs of $\cA$
from $q_i$ to $q_{i+1}$ that only use edges in $E_i$.
\end{itemize}
Therefore, it suffices to show that $\Dclosure[\autord{\cA}]{\pi(e_0S_1e_1\cdots S_n e_n)}$ can be written as a finite union as desired.
\newcommand{\funionprod}[3]{u_{#1,0}v_{#1,1}^{#3}u_{#1,1}\cdots v_{#1,#2}^{#3}u_{#1,#2}}
Let $\cA_i$ be the unambiguous automaton obtained from $\cA$ by making
$q_i$ the only initial state and $q_{i+1}$ the only final state. Moreover, let
$\cB_i$ be obtained from $\cA_i$ be removing all edges outside of $E_i$.  Then,
we have have $\pi(S_i)=\langof{\cB_i}$. According to
\cref{autorder:funion:single}, $S_i=\langof{\cB_i}$ is a finite union of sets of
the form $\Dclosure[\autord{\cA_i}]{u_0v_1^*u_1\cdots v_k^*u_k}$, where
$u_0v_1u_1\cdots v_ku_k$ is a loop pattern for $\cA_i$. Therefore, 
our set $\Dclosure[\autord{\cA}]{\pi(e_0S_1e_1\cdots S_n e_n)}$ is a finite union of sets of the form
\begin{align} \Dclosure[\autord{\cA}]{\left(\pi(e_0)(\Dclosure[\cA_1]{I_1})\pi(e_1)\cdots (\Dclosure[\autord{\cA_n}]{I_n})\pi(e_n)\right)}, \label{summand} \end{align}
where $I_i=\funionprod{i}{k_i}{*}$ for $i\in[1,n]$.
The definition of $\autord{\cA}$ implies immediately that \cref{summand} equals
\[ \Dclosure[\autord{\cA}]{   \big(\pi(e_0)\big(\funionprod{1}{k_1}{*}\big)\pi(e_1)%
\cdots \big(\funionprod{n}{k_n}{*}\big) \pi(e_n)   \big)}. \]
Moreover, 
\[ \pi(e_0)\funionprod{1}{k_1}{}\pi(e_1) %
\cdots \funionprod{n}{k_n}{} \pi(e_n) \]
 is clearly a loop pattern for $\cA$ (where the $v_{i,j}$ play the
role of the $v_i$).
\end{proof}

We are now ready to prove \cref{ideals:labeling}.
\begin{proof}[Proof of \cref{ideals:labeling}]
Let us show that the language \[ I=\Dclosure[\autord{\cA}]{u_0v_1^*u_1\cdots v_n^*u_n} \]
is in fact an $\autord{\cA}$-ideal.  It is clearly $\autord{\cA}$-downward
closed.  Consider the word $w_k=u_0v_1^ku_1\cdots v_n^ku_n$ for each $k\in\N$.
Then we have $w_0\autord{\cA}w_1\autord{\cA}\cdots$, so that the set
$D=\{w_k\mid k\in\N\}$ is $\autord{\cA}$-directed. Moreover,
$I=\Dclosure[\autord{\cA}]{D}$, which proves that $I$ is the
$\autord{\cA}$-downward closure of a $\autord{\cA}$-directed set and hence an
$\autord{\cA}$-ideal.

It remains to be shown that every ideal is of the above form. Let $I$ be an
ideal of $\autord{\cA}$.  In \cref{autord:funion} we have seen that every
downward closed is a finite union of sets of the above form.  In particular, we
can write $I=I_1\cup\cdots\cup I_k$, where each $I_k$ is of the above form.
However, since $I$ is an ideal and the $I_i$ are downward closed, this implies
that for some $i\in[1,n]$, we have $I\subseteq I_i$ and thus $I=I_i$.
\end{proof}

\section{Proofs for \cref{sepmod}}

\begin{lem}\label{reg:ideal-charact}
Suppose $v\in(\Sigma^d)^*$. Then $\Dclosure[\modord{d}]{v^*}=\{w\in
(\Sigma^d)^* \mid \kappa_d(w)\subseteq \kappa_d(v)\}$.
\end{lem}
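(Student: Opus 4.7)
The plan is to unpack $\modord{d}$ through the characterization $u \modord{d} u'$ iff $\sigma_{\cM_d}(u) = \sigma_{\cM_d}(u')$ and $\cM_d(u) \subword \cM_d(u')$. For the $d$-cycle $\cM_d$, the first condition amounts to $|u| \equiv |u'| \pmod d$, while the second says that there is an order-preserving injection from positions of $u$ to positions of $u'$ preserving both the letter and the residue of the position modulo $d$, since the edge of $\cM_d$ read at position $p$ is exactly the pair (letter $a_p$, source state $(p-1) \bmod d$). Every $v^k$ has length divisible by $d$ because $v \in (\Sigma^d)^*$, so membership in $\Dclosure[\modord{d}]{v^*}$ already forces $w \in (\Sigma^d)^*$, matching the constraint on the right-hand side.

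For ``$\subseteq$'', suppose $w \modord{d} v^k$ and that a letter $a$ occurs in $w$ at position $p$ with $p \equiv i \pmod d$. The embedding above sends $p$ to some position $p'$ of $v^k$ with $v^k[p'] = a$ and $p' \equiv i \pmod d$. Since $|v|$ is divisible by $d$, reducing $p'$ modulo $|v|$ gives a position in $v$ with the same letter and the same residue, whence $a \in \kappa_d(v)(i)$; this proves $\kappa_d(w) \subseteq \kappa_d(v)$.

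For ``$\supseteq$'', let $w = a_1\cdots a_n \in (\Sigma^d)^*$ satisfy $\kappa_d(w) \subseteq \kappa_d(v)$. For each $p \in [1,n]$, the residue $i \in [1,d]$ with $p \equiv i \pmod d$ satisfies $a_p \in \kappa_d(w)(i) \subseteq \kappa_d(v)(i)$, so I can pick $q_p \in [1,|v|]$ with $v[q_p] = a_p$ and $q_p \equiv p \pmod d$. The natural embedding maps position $p$ of $w$ to position $(p-1)|v|+q_p$ of $v^n$, i.e.\ spends one fresh copy of $v$ per position of $w$. This map is strictly increasing because $q_p - q_{p+1} < |v|$, stays inside $[1,n|v|]$, preserves letters by construction, and preserves residues modulo $d$ because $|v|$ is a multiple of $d$. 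Since $|v^n|$ is also $\equiv 0 \equiv |w| \pmod d$, we conclude $w \modord{d} v^n$, hence $w \in \Dclosure[\modord{d}]{v^*}$. The only real subtlety is this bookkeeping step: allocating a fresh copy of $v$ per position of $w$ is what makes the chosen targets $(p-1)|v|+q_p$ form an increasing sequence regardless of where the $q_p$ lie inside $v$.
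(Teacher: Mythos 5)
Your proof is correct and takes essentially the same route as the paper's: for the $\supseteq$ direction, the paper picks a factorization $v = x_i a_i y_i$ with $|x_i| \equiv i-1 \pmod d$ and exhibits $v^n = \overline{x_1}a_1\overline{y_1x_2}\cdots a_n\overline{y_n}$, which is precisely the embedding $p \mapsto (p-1)|v| + q_p$ you describe (with $q_p = |x_p|+1$), spending one copy of $v$ per letter of $w$. The only cosmetic difference is that you phrase it as a $d$-embedding while the paper writes out the insertion decomposition.
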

\begin{proof}
Let $u\in\Dclosure[\modord{d}]{v^*}$, say $w\modord{d} v^k$. Then clearly
$w\in(\Sigma^d)^*$. Moreover, if $a\in\Sigma$ occurs at a position $p$ in $w$
with $p\equiv i\pmod{d}$, then $a$ occurs at some position $p+d\N$ in $v$.
Hence, $\kappa_d(w)\subseteq\kappa_d(v)$.

Suppose $w\in(\Sigma^d)^*$ and $\kappa_d(w)\subseteq\kappa_d(v)$. Write
$w=a_1\cdots a_n$, $a_1,\ldots,a_n\in\Sigma$.  Since
$a_i\in\kappa_d(w)(i)\subseteq \kappa_d(v)(i)$, each $a_i$ occurs at some
position $p$ in $v$ with $p\equiv i\pmod{d}$. Hence, we can write $v=x_ia_iy_i$
with $|x_i|\equiv i-1 \pmod{d}$ and therefore $|y_i|\equiv |v|-|x_i|-1\equiv
d-i \pmod{d}$. In particular, $|y_ix_{i+1}|\equiv (d-i)+i=d$.  Moreover,
$|x_1|\equiv 0\mod{d}$ and $y_n\equiv d-n\equiv 0\pmod{d}$. Therefore,
\[ w=a_1\cdots a_n\modord{d} \overline{x_1}a_1\overline{y_1x_2}a_2\overline{y_2x_3}\cdots \overline{y_{n-1}x_n}a_n\overline{y_n}=v^n \]
where $\overline{u}$ expresses that $u\in(\Sigma^d)^*$.  Thus
$w\in\Dclosure[\modord{d}]{v^*}$.
\end{proof}

\begin{lem}\label{reg:ideal-inclusion}
Suppose $v,w\in(\Sigma^d)^*$.
Then $\Dclosure[\modord{d}]{v^*}\subseteq\Dclosure[\modord{d}]{w^*}$ if and only if $\kappa_d(v)\subseteq\kappa_d(w)$.
\end{lem}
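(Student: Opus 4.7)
The plan is to deduce this lemma directly from the characterization already established in \cref{reg:ideal-charact}, which gives an explicit membership criterion for $\Dclosure[\modord{d}]{v^*}$ in terms of $\kappa_d$. Once we have that criterion, each direction is essentially a one-line observation.

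For the ``$\Leftarrow$'' direction, I would take an arbitrary $u\in\Dclosure[\modord{d}]{v^*}$ and apply \cref{reg:ideal-charact} to get $u\in(\Sigma^d)^*$ together with $\kappa_d(u)\subseteq\kappa_d(v)$. Chaining with the hypothesis $\kappa_d(v)\subseteq\kappa_d(w)$ yields $\kappa_d(u)\subseteq\kappa_d(w)$, and applying \cref{reg:ideal-charact} in the other direction places $u$ into $\Dclosure[\modord{d}]{w^*}$. This uses only transitivity of set inclusion, so no subtleties arise.

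For the ``$\Rightarrow$'' direction, the key observation is that $v$ itself belongs to $\Dclosure[\modord{d}]{v^*}$ (taking $k=1$, since $v\modord{d} v$ trivially); note that this relies on $v\in(\Sigma^d)^*$, which is part of our hypothesis. By the assumed inclusion, $v\in\Dclosure[\modord{d}]{w^*}$, and then \cref{reg:ideal-charact} applied to the pair $(v,w)$ gives exactly $\kappa_d(v)\subseteq\kappa_d(w)$.

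There is no real obstacle here: the proof is essentially a translation through \cref{reg:ideal-charact}, and both directions become bookkeeping once we notice that $v$ is itself a witness element in its own ideal. The only thing to verify carefully is that the hypothesis $v,w\in(\Sigma^d)^*$ is enough to guarantee $v\in\Dclosure[\modord{d}]{v^*}$ (so we can feed it into the ``only if'' characterization), which is immediate from reflexivity of $\modord{d}$.
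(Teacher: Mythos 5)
Your proof is correct and follows essentially the same route as the paper: both directions go through \cref{reg:ideal-charact}, with the ``only if'' direction using $v\in\Dclosure[\modord{d}]{v^*}$ as the witness. The only cosmetic difference is that for the ``if'' direction the paper shows $v^n\in\Dclosure[\modord{d}]{w^*}$ and then passes to the downward closure, whereas you take an arbitrary $u\in\Dclosure[\modord{d}]{v^*}$ and chain inclusions of $\kappa_d$-images directly; these are equivalent bookkeeping choices.
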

\begin{proof}
If $\Dclosure[\modord{d}]{v^*}\subseteq \Dclosure[\modord{d}]{w^*}$, then in particular
$v\in \Dclosure[\modord{d}]{w^*}$ and thus $\kappa_d(v)\subseteq \kappa_d(w)$ by
\cref{reg:ideal-charact}. 

Suppose $\kappa_d(v)\subseteq\kappa_d(w)$. Since $v\in(\Sigma^d)^*$, we have
$\kappa_d(v^n)=\kappa_d(v)$ for any $n\in\N$ and hence $v^n\in
\Dclosure[\modord{d}]{w^*}$ by \cref{reg:ideal-charact}. This implies
$\Dclosure[\modord{d}]{v^*}\subseteq \Dclosure[\modord{d}]{w^*}$.
\end{proof}

\begin{lem}\label{reg:insert-periodic}
If $\kappa_d(xyz)\subseteq\kappa_d(v)$ and $\pi_d(v)$ divides $|y|$, then
$\kappa_d(xyyz)\subseteq\kappa_d(v)$.
\end{lem}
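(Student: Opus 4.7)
The plan is to reduce the statement to a case analysis on positions in $xyyz$, comparing each position with the position of the same letter viewed inside $xyz$. First I would unpack the definition of $t := \pi_d(v)$: by definition $t$ divides $d$ and $\kappa_d(v)(i+t) = \kappa_d(v)(i)$ whenever $i \in [1, d-t]$. Iterating this equality, and using that $t \mid d$, one obtains that $\kappa_d(v)$ is periodic with period $t$ as a function on $\{1,\dots,d\}$ viewed modulo $d$; equivalently, $\kappa_d(v)(i) = \kappa_d(v)(j)$ whenever $i, j \in [1,d]$ satisfy $i \equiv j \pmod{t}$.

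Next I would fix an arbitrary position $p \in [1, |xyyz|]$, let $a$ be the letter at position $p$ in $xyyz$, and show $a \in \kappa_d(v)(p \bmod d)$. There are two cases. If $p \le |x| + |y|$, then the $p$-th letters of $xyyz$ and $xyz$ coincide, so by hypothesis $a \in \kappa_d(xyz)(p \bmod d) \subseteq \kappa_d(v)(p \bmod d)$. Otherwise $p > |x| + |y|$ lies either in the second copy of $y$ or in $z$; in either subcase the $p$-th letter of $xyyz$ equals the $(p-|y|)$-th letter of $xyz$, so the hypothesis yields $a \in \kappa_d(v)((p-|y|) \bmod d)$.

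The remaining (minor) step is to see that $\kappa_d(v)((p-|y|) \bmod d) = \kappa_d(v)(p \bmod d)$. Since $t \mid |y|$ we have $p \equiv p - |y| \pmod{t}$, and since $t \mid d$ the reduction modulo $d$ preserves residue classes modulo $t$, so $(p \bmod d) \equiv (p - |y|) \bmod d \pmod{t}$; the periodicity of $\kappa_d(v)$ established in the first step then gives the equality. I do not expect any real obstacle here: the lemma is essentially bookkeeping of how positions shift when the factor $y$ is duplicated, combined with the periodicity of $\kappa_d(v)$ implied by $\pi_d(v) \mid |y|$.
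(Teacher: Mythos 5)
Your proposal is correct and takes essentially the same route as the paper's proof: the same case split (position $p$ in $xyyz$ at most $|xy|$ versus beyond $|xy|$, with the shift by $|y|$ in the second case) followed by the same periodicity argument using $\pi_d(v)\mid |y|$ and $\pi_d(v)\mid d$. The only cosmetic difference is that the paper fixes a residue class $i\in[1,d]$ first and then picks a witnessing position $p\equiv i\pmod d$, whereas you start directly from the position $p$; this is immaterial.
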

\begin{proof}
Let $i\in[1,d]$. We will show that $\kappa_d(xyyz)(i)\subseteq\kappa_d(v)(i)$. 
Hence, let $a\in\kappa_d(xyyz)(i)$. Then there is a position $p\in[1,|xyyz|]$
with $p\equiv i\pmod{d}$ such that the $p$-th position of $xyyz$ reads $a$.

If $p\in[1,|xy|]$, we are done, so assume $p\in[|xy|+1,|xyyz|]$. Then, $a$ also
occurs at position $q=p-|y|$ in $xyz$. This means, if $j\equiv q\pmod{d}$, then
$a\in\kappa_d(xyz)(j)\subseteq\kappa_d(v)(j)$. Observe that $i\equiv p\pmod{d}$
implies $i\equiv p\pmod{\pi_d(v)}$ and thus $i\equiv p=q+|y|\equiv q\equiv
j\pmod{\pi_d(v)}$. Therefore, we have $a\in\kappa_d(v)(j)=\kappa_d(i)$.
\end{proof}

\begin{lem}\label{reg:pump-inside-ideal-ext}
Suppose $\pi_d(v)$ divides $|y|$ and $|y|$ divides $d$.  If $xyz\in
\Dclosure[\modord{d}]{v^{[r]}}$, then for every $\ell\in\N$, $xy^{1+\ell\cdot
d/|y|}z\in\Dclosure[\modord{d}]{v^{[r]}}$.
\end{lem}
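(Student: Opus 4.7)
The strategy is to establish a characterization of membership in $\Dclosure[\modord{d}]{v^{[r]}}$ generalizing \cref{reg:ideal-charact}, and then deduce the lemma by iterating \cref{reg:insert-periodic}. Writing $w$ for the length-$r$ prefix of $v$, the characterization I would prove is: $u \in \Dclosure[\modord{d}]{v^{*}w}$ if and only if $|u| \equiv r \pmod{d}$ and $\kappa_d(u) \subseteq \kappa_d(v)$. The residue $r$ enters only via the length condition, because $|v|$ is divisible by $d$ and $w$ is a prefix of $v$, so the $\kappa_d$-profile of $v^{*}w$ is exactly that of $v$.

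For the forward direction of the characterization, observe that any $\modord{d}$-embedding $\phi$ of $u$ into $v^{k}w$ must satisfy $\phi(j) \equiv j \pmod{d}$, since all deleted gaps have length divisible by $d$. The letter at position $\phi(j)$ in $v^{k}w$ then lies in $\kappa_d(v)(j \bmod d)$, because $|v| \equiv 0 \pmod d$ and $w$ is a prefix of $v$; the length congruence $|u| \equiv r \pmod d$ is immediate. For the reverse direction, given $u = a_1\cdots a_m$ with $m \equiv r \pmod d$ and $\kappa_d(u) \subseteq \kappa_d(v)$, map position $j$ to $(j{-}1)|v| + p_j$ in $v^{m}w$, where $p_j \in [1,|v|]$ is any position with $v_{p_j} = a_j$ and $p_j \equiv j \pmod d$ (existing by the hypothesis). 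The resulting images are strictly increasing, every inter-image gap has length $\equiv 0 \pmod d$, and the final gap $|v| + r - p_m$ is likewise divisible by $d$ because $p_m \equiv m \equiv r \pmod d$.

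Given the characterization, the lemma follows. Set $m = \ell \cdot d /|y|$, a nonnegative integer since $|y|$ divides $d$. From $xyz \in \Dclosure[\modord{d}]{v^{[r]}}$ we read off $|xyz| \equiv r \pmod d$ and $\kappa_d(xyz) \subseteq \kappa_d(v)$. The pumped word then has length $|xyz| + m|y| = |xyz| + \ell d \equiv r \pmod d$. Applying \cref{reg:insert-periodic} $m$ times in succession (each step is valid because $\pi_d(v)$ divides $|y|$ and the $\kappa_d$-containment is preserved), we conclude $\kappa_d(xy^{1+m}z) \subseteq \kappa_d(v)$. The characterization then gives $xy^{1+m}z \in \Dclosure[\modord{d}]{v^{[r]}}$, as desired.

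The only nontrivial technical step is the reverse direction of the characterization, which requires explicitly producing the $\modord{d}$-embedding. I expect no real surprises here: allocating one fresh copy of $v$ per position of $u$ forces strictly increasing images with the correct residues modulo $d$, and the trailing $w$ accounts exactly for the residue offset $r$ in the length of $u$.
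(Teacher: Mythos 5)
Your proof is correct and follows the same route as the paper's: it iterates \cref{reg:insert-periodic} to preserve the $\kappa_d$-containment and then closes via a length-plus-$\kappa_d$ characterization of $\Dclosure[\modord{d}]{v^{[r]}}$, which is exactly \cref{reg:ideal-charact-ext} (the paper forward-references that lemma; you simply re-derive it inline). Your version is in fact slightly cleaner than the paper's, whose write-up contains the typo ``$d$ divides $|xyz|$'' (it should be $|xyz|\equiv r\pmod d$) and concludes with $v^*$ where $v^{[r]}$ is meant.
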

\begin{proof}
Let $w=xy^{1+\ell\cdot d/|y|}z$. Since $d$ divides $|xyz|$, it also divides
$|w|=|xyz|+(\ell\cdot d/|y|)\cdot |y|$. According to \cref{reg:ideal-charact-ext},
we have $\kappa_d(xyz)\subseteq\kappa_d(v)$. An $(\ell\cdot d/|y|)$-fold
application of \cref{reg:insert-periodic} tells us that
$\kappa_d(xy^{1+\ell\cdot d/|y|}z)\subseteq\kappa_d(v)$. Now, \cref{reg:ideal-charact-ext}
states that $xy^{1+\ell\cdot d/|y|}z\in\Dclosure[\modord{d}]{v^*}$.
\end{proof}

\begin{proof}[Proof of \cref{reg:smallperiod:cyclic}]
Write $v=v_1\cdots v_n$, $v_1,\ldots,v_n\in\Sigma$.  Since
$\Dclosure[\modord{d}]{v^*}$ belongs to $\Adh[\modord{d}]{\langof{\cA_i}}$ for
$i=1,2$, we have
$v\in\Dclosure[\modord{d}]{(\langof{\cA_i}\cap\Dclosure[\modord{d}]{v^*})}$ for
$i=1,2$.  This means there are words $v^{(i)}=u_0^{(i)}v_1u_1^{(i)}\cdots
v_nu_n^{(i)}\in\langof{\cA_i}\cap \Dclosure[\modord{d}]{v^*}$  such that
$u_j^{(i)}\in(\Sigma^d)^*$ for $j\in[1,n]$ and $i=1,2$. Note that since
$v^{(i)}\in\Dclosure[\modord{d}]{v^*}$ and $v\modord{d}v^{(i)}$, we have
$\Dclosure[\modord{d}]{(v^{(i)})^*}=\Dclosure[\modord{d}]{v^*}$ and thus
$\kappa_d(v^{(i)})=\kappa_d(v)$ according to \cref{reg:ideal-inclusion}.

In the run of $\cA_i$ for $u_0^{(i)}v_1u_1^{(i)}\cdots v_nu_n^{(i)}$, let
$q_j^{(i)}$ be the state occupied after reading $u_j^{(i)}$, for $j\in[0,n]$
and $i=1,2$.  Since $m^2!$ divides $d$, which in turn divides $n$, we have
$n+1>m^2!\ge m^2$.  Therefore, there are $j,k\in[0,n]$, $j<k$, with
$(q^{(1)}_j,q^{(2)}_j)=(q^{(1)}_k,q^{(2)}_k)$.  Moreover, they can be chosen so
that $t:=k-j<m^2$.  Since $m^2!$ divides $d$, we know that $t<m^2$ divides $d$
and may define $r=d/t$.  Let $x_i=u_0^{(i)}v_1u_1^{(i)}\cdots v_ju_j^{(i)}$,
$y_i=v_{j+1}u_{j+1}^{(i)}\cdots v_ku_k^{(i)}$, $z_i=v_{k+1}u_{k+1}^{(i)}\cdots
v_nu_n^{(i)}$. Then, by the choice of $j,k$, we have $(x_iy_i^*z_i)^*\subseteq\langof{\cA_i}$.
In particular, the word 
\[ w_i=\prod_{\ell=0}^{r-1} x_iy_iy_i^{\ell}z_ix_iy_iy_i^{r-\ell}z_i \] 
belongs to $\langof{\cA_i}$. Moreover, since $|y_i|=t+\sum_{\ell=j+1}^k
|u_\ell^{(i)}|\equiv t\bmod{d}$, we can conclude
\[ |w_i|=r\cdot (2\cdot |x_iy_iz_i|+r\cdot |y_i|)\equiv r\cdot (2\cdot |v^{(i)}|+d)\equiv 0 \bmod{d},\]
which implies $w_i\in(\Sigma^d)^*$.
We claim that 
\[\kappa_d(w_i)=\bigcup_{\ell=0}^{r-1}\kappa_d(\rho^{\ell t}(v^{(i)})). \]

We begin with the inclusion ``$\supseteq$''.  Note that for each $\ell\in
[0,r-1]$ and $i\in\{1,2\}$, 
\begin{itemize}
\item the word $x_i$ occurs in $w_i$ at a position $p$ with $p\equiv |x_iy_iz_i|+\ell t\pmod{d}$
and hence $p\equiv \ell t\pmod{d}$,  
\item the word $y_i$ occurs in $w_i$ at a position $p$ with $p\equiv |x_i|+\ell t\pmod{d}$,  
\item the word $z_i$ occurs in $w_i$ at a position $p$ with $p\equiv |x_iy_i|+\ell t\pmod{d}$.
\end{itemize}
Hence, for each position $p$ in $v^{(i)}$ and each $\ell\in[0,r-1]$, there is a position
$p'\equiv p+\ell t\pmod{d}$ with 
$\kappa_d(v^{(i)})(p)\subseteq\kappa_d(w_i)(p')$. This prove the inclusion ``$\supseteq$''.

On the other hand, every factor $x_i$, $y_i$, and $z_i$ that occurs in the definition
of $w_i$ at a position $p\in [1,|w_i|]$ also occurs in $v^{(i)}$ at a position
$p'\in[1,n]$ with $p'\equiv p-\ell t\pmod{d}$ for some $\ell\in
[0,r-1]$. Therefore, we also have the inclusion ``$\subseteq$''.

The identity $\kappa_d(w_i)=\bigcup_{\ell=0}^{r-1}\kappa_d(\rho^{\ell\cdot
t}(v^{(i)}))$ clearly implies that $\pi_d(w_i)\le t$ and also
$\Dclosure[\modord{d}]{(v^{(i)})^*}\subseteq\Dclosure[\modord{d}]{w_i^*}$,
which in turn yields
$\Dclosure[\modord{d}]{v^*}\subseteq\Dclosure[\modord{d}]{w_i^*}$. Moreover,
since $(x_iy_i^*z_i)^*\subseteq\langof{\cA_i}$, we have
$w_i^*\subseteq\langof{\cA_i}$ and in particular
$\Dclosure[\modord{d}]{w_i^*}\subseteq \Dclosure[\modord{d}]{\langof{\cA_i}}$.
This clearly implies  that $\Dclosure[\modord{d}]{w_i^*}$ belongs to
$\Adh[\modord{d}]{\langof{\cA_i}}$ for $i=1,2$.  Hence, if we can show
$\Dclosure[\modord{d}]{w_1^*}=\Dclosure[\modord{d}]{w_2^*}$, the proof is
complete. We use $\rho$ also as a rotation map on $\Powerset{\Sigma}^{[1,d]}$:
For $\mu\in\Powerset{\Sigma}^{[1,d]}$ and $i\in[1,d]$, let $\rho(\mu)(i)=\mu(i')$,
where $i'\in[1,d]$ is chosen so that $i'\equiv i-1\bmod{d}$.
Observe that since $\kappa_d(v^{(i)})=\kappa_d(v)$ for $i\in\{1,2\}$,
we have
\[ \kappa_d(w_i)=\bigcup_{\ell=0}^{r-1}\kappa_d(\rho^{\ell t}(v^{(i)}))=\bigcup_{\ell=0}^{r-1}\rho^{\ell t}(\kappa_d(v^{(i)}))=\bigcup_{\ell=0}^{r-1}\rho^{\ell t}(\kappa_d(v)), \]
and thus $\kappa_d(w_1)=\kappa_d(w_2)$, which, according to \cref{reg:ideal-inclusion},
implies $\Dclosure[\modord{d}]{w_1^*}=\Dclosure[\modord{d}]{w_2^*}$.
\end{proof}

\subsection{Proof of \cref{reg:association}}
Suppose $x,y\in\Sigma^*$, $x=x_1\cdots x_r$, $y=y_1\cdots y_s$,
$x_1,\ldots,x_r,y_1,\ldots,y_r\in\Sigma$.  A strictly monotone map
$\alpha\colon \{1,\ldots,r\}\to\{1,\ldots,s\}$ is a \emph{$d$-embedding of $x$
in $y$} if $r\equiv s\pmod{d}$, $x_i=y_{\alpha(i)}$ for $i\in[1,r]$, and for each $i\in[1,r]$, 
we have $\alpha(i)\equiv i\pmod{d}$. Clearly, we have $x\modord{d}
y$ if and only if there is a $d$-embedding of $x$ in $y$.  Now let
$u_0v_1u_1\cdots v_nu_n$ be a loop pattern for $\cM_d$ and
$x=u_0v_1^{x_1}u_1\cdots v_n^{x_n}u_n$ and $y=u_0v_1^{y_1}u_1\cdots
v_n^{y_n}u_n$. Then a $d$-embedding of $x$ in $y$ is called \emph{$k$-normal} if
for each $i\in[1,n]$, $\alpha$ maps at least $k$-many factors $v_i$ in $x$ to $v_i^{y_i}$. Clearly,
if $k\le x_i\le y_i$ for all $i\in[1,n]$, then there exists a normal $d$-embedding
of $x$ in $y$. However, not every $d$-embedding has to be $k$-normal.
\begin{lem}\label{reg:normalembeddings}
Let $u_0v_1u_1\cdots v_nu_n$ be an irreducible loop pattern for $\cM_d$. For each
$k\in\N$, there is a constant $\ell\in\N$ such that if $\alpha$ is a
$d$-embedding of $u_0v_1^{x_1}u_1\cdots v_n^{x_n}u_n$ in $u_0v_1^{y_1}u_1\cdots
v_n^{y_n}u_n$ and $x_i\ge \ell$ for $i\in[1,n]$, then $\alpha$ is $k$-normal.
\end{lem}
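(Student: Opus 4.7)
The plan is to argue by contradiction, exploiting irreducibility to turn a failure of $k$-normality into a $d$-embedding that witnesses reducibility. Suppose the lemma fails for some $k$, so that for each $\ell\in\N$ we have exponents $x^{(\ell)}_i\ge \ell$ and $y^{(\ell)}_i$, and a $d$-embedding $\alpha_\ell$ of $u_0v_1^{x^{(\ell)}_1}u_1\cdots v_n^{x^{(\ell)}_n}u_n$ into $u_0v_1^{y^{(\ell)}_1}u_1\cdots v_n^{y^{(\ell)}_n}u_n$ that is not $k$-normal. By pigeonhole on the offending index, after passing to a subsequence we fix $i^*\in[1,n]$ such that every $\alpha_\ell$ maps strictly fewer than $k$ copies of $v_{i^*}$ from the left entirely into the right-hand block $v_{i^*}^{y^{(\ell)}_{i^*}}$.

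The key geometric observation is that, because $\alpha_\ell$ is monotone and injective, the set $S_\ell$ of left positions whose image lies in $v_{i^*}^{y^{(\ell)}_{i^*}}$ is a \emph{contiguous} interval of cardinality at most $y^{(\ell)}_{i^*}\cdot|v_{i^*}|$, and at most $(k+1)\cdot|v_{i^*}|$ of its positions come from the left's $v_{i^*}^{x^{(\ell)}_{i^*}}$-block (fewer than $k$ full copies plus up to two partial copies at the endpoints of $S_\ell$). In particular at least $x^{(\ell)}_{i^*}-k-1$ full copies of $v_{i^*}$ in the left word lie entirely outside $S_\ell$.

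Now invoke irreducibility at $i^*$: there is a fixed hard word $w^*=u_0v_1^{x^*_1}u_1\cdots v_n^{x^*_n}u_n$ that admits no $d$-embedding into any word of the form $u_0v_1^{y_1}u_1\cdots u_{i^*-1}u_{i^*}v_{i^*+1}^{y_{i^*+1}}\cdots v_n^{y_n}u_n$. The aim is, for some $\ell$ large, to construct a $d$-embedding $\sigma$ of $w^*$ into the left word $u_0v_1^{x^{(\ell)}_1}u_1\cdots v_n^{x^{(\ell)}_n}u_n$ whose image entirely avoids $S_\ell$. Then $\alpha_\ell\circ\sigma$ is a $d$-embedding of $w^*$ into the right word whose image avoids the block $v_{i^*}^{y^{(\ell)}_{i^*}}$; because this block has length divisible by $d$, deleting it preserves the residue of every later position and yields a $d$-embedding of $w^*$ into $u_0v_1^{y^{(\ell)}_1}u_1\cdots u_{i^*-1}u_{i^*}v_{i^*+1}^{y^{(\ell)}_{i^*+1}}\cdots v_n^{y^{(\ell)}_n}u_n$, contradicting the hardness of $w^*$.

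Producing $\sigma$ is the main obstacle. The natural candidate sends each $u_j$ of $w^*$ to $u_j$ of the left and picks the $x^*_j$ copies of each $v_j$ in $w^*$ from the copies of $v_j$ in the left lying outside $S_\ell$; because at most $|S_\ell|/|v_j|+2$ copies of $v_j$ can meet $S_\ell$, for $\ell$ large enough these obstructions are absorbed by $x^{(\ell)}_j\ge\ell$, at least when $S_\ell$ is contained in a single $v_j^{x^{(\ell)}_j}$-block of the left. The plan is to pass to a further subsequence that fixes the combinatorial ``shape'' of $S_\ell$ relative to the block decomposition, reducing the analysis to finitely many cases. In the harder sub-shapes, where $S_\ell$ covers an entire $v_j^{x^{(\ell)}_j}$-block with $j\neq i^*$ or straddles a $u_j$-block, the $d$-embedding condition forces the letters of the covered block to match the periodic content of $v_{i^*}$ at the residues dictated by $\alpha_\ell$, a strong rigidity that either lets us reroute $\sigma$ via letter coincidences or bounds $y^{(\ell)}_{i^*}$ and reduces those sub-shapes to finitely many instances that can be handled directly. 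Carrying out this case analysis constitutes the bulk of the proof.
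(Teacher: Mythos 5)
Your overall structure (contradiction, pigeonhole on an offending index $i^*$, invoke irreducibility) matches the paper's, but you reach an open-ended construction problem---building the embedding $\sigma$ of a fixed hard word $w^*$ into the source so that its image avoids $S_\ell$---which you do not solve and only sketch via a case analysis you yourself describe as ``the bulk of the proof.'' This is a genuine gap: the interval $S_\ell$ has length up to $y_{i^*}^{(\ell)}\cdot|v_{i^*}|$, which is not under control, so it may cover entire $v_j^{x_j^{(\ell)}}$-blocks for $j\ne i^*$, leaving nowhere to place the $x_j^*$ copies of $v_j$ that $w^*$ demands. Your proposed escapes (``reroute $\sigma$ via letter coincidences'' or ``bound $y_{i^*}^{(\ell)}$'') are not substantiated, and it is far from clear that the shape analysis would close. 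Also note that a plan built around a single fixed $w^*$ is fragile, since the needed $\ell$ would then depend on $x^*$ in a way you never pin down.

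The paper sidesteps all of this. After the pigeonhole on $i^*=j$, it splits once more: at most $k-1$ copies of $v_j$ land in $v_j^{y_j}$ and at most two straddle, so for each $\ell$ at least $\ell$ full copies of $v_j$ land entirely to the left of $v_j^{y_j}$ or entirely to the right; by a second pigeonhole, one side works for all $\ell$. In the ``left'' case, one does not fix a hard word. For arbitrary exponents $z_1,\dots,z_n$, pick $\ell\ge\max_i z_i$ and a $d$-embedding $\alpha$ sending at least $z_j$ copies of $v_j$ into the prefix $u_0v_1^{y_1}\cdots u_{j-1}$. Monotonicity of $\alpha$ makes the set of source positions mapped into that prefix itself a prefix, so the restriction of $\alpha$ is a $d$-embedding of $u_0v_1^{x_1}\cdots u_{j-1}v_j^{z_j}$ into $u_0v_1^{y_1}\cdots u_{j-1}$; composing with the inclusion of $u_0v_1^{z_1}\cdots u_{j-1}v_j^{z_j}$ gives $u_0v_1^{z_1}\cdots u_{j-1}v_j^{z_j}\modord{d}u_0v_1^{y_1}\cdots u_{j-1}$, and multiplicativity of $\modord{d}$ lets one append $u_jv_{j+1}^{z_{j+1}}\cdots v_n^{z_n}u_n$ on both sides. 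Since $z$ was arbitrary, the ideal generated by the pattern with $v_j$ deleted equals the full ideal, contradicting irreducibility. No analysis of $S_\ell$ or of how the target block sits relative to the source blocks is needed; only the prefix of the source left of the $j$-th loop matters, and that is controlled by monotonicity alone.
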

\begin{proof}
Let us call a $d$-embedding \emph{$(k,i)$-normal} if it maps at least $k$-many factors
$v_i$ in $x$ into the factor $v_i^{y_i}$ in $y$.
To simplify notation, we will
always write $x$ and $y$ for the words $x=u_0v_1^{x_1}u_1\cdots v_n^{x_n}u_n$
and $y=u_0v_1^{y_1}u_1\cdots v_n^{y_n}u_n$.

Suppose the contrary. 
Then there is a $k\in\N$ such that for every $\ell\in\N$, there
are $x_1,\ldots,x_n\in\N$ and $y_1,\ldots,y_n\in\N$ with $x_i\ge\ell$ for $i\in[1,n]$
such that there is a $d$-embedding of $x$ in $y$ that is not $(k,j)$-normal for some $j\in[1,n]$.
Among the $j$ for which this occurs, one has to occur infinitely often. Hence, there is a
$k\in\N$ and a $j\in[1,n]$ such that for every $\ell\in\N$, there
are $x_1,\ldots,x_n\in\N$ and $y_1,\ldots,y_n\in\N$ with $x_i\ge\ell$ for $i\in[1,n]$
such that there is a $d$-embedding of $x$ in $y$ that is not $(k,j)$-normal.

If a $d$-embedding is not $(k,j)$-normal, then all but at most $(k-1)+2$ factors
$v_j$ must be mapped either to the factor $u_0v_1^{y_1}u_1\cdots v_{j-1}^{y_{j-1}}u_{j-1}$
or to the factor $u_{j+1}v_{j+2}^{y_{j+2}}u_{j+2}\cdots v_n^{y_n}u_n$: At most $k-1$
factors are mapped to $v_j^{y_j}$ and at most two further factors are partially
mapped to $v_j^{y_j}$.  Therefore, we have at least one of the following cases:
\begin{enumerate}
\item for each $\ell\in\N$, there are $x_1,\ldots,x_n$ and $y_1,\ldots,y_n$ with $x_i\ge
k$ for $i\in[1,n]$ such that there is a $d$-embedding of $x$ in $y$ that maps
at least $\ell$ factors $v_j$ to $u_0v_1^{y_1}u_1\cdots v_{j-1}^{y_{j-1}}u_{j-1}$.
\item for each $\ell\in\N$, there are $x_1,\ldots,x_n$ and $y_1,\ldots,y_n$ with $x_i\ge
k$ for $i\in[1,n]$ such that there is a $d$-embedding of $x$ in $y$ that maps
at least $\ell$ factors $v_j$ to $u_{j+1}v_{j+2}^{y_{j+2}}u_{j+2}\cdots v_n^{y_n}u_n$.
\end{enumerate}
Let us consider the first case (the second can be treated the same way). 
We claim that this implies 
\begin{align} \Dclosure[\modord{d}]{u_0v_1^*u_1\cdots v_n^*u_n}=\Dclosure[\modord{d}]{u_0v_1^*u_1\cdots v_{j-1}^{*}u_{j-1}u_j\cdots v_n^*u_n}. \label{eq:nonminimal}\end{align}
The inclusion ``$\supseteq$'' clearly holds. For the other direction, consider
$u_0v_1^{z_1}u_1\cdots v_n^{z_n}u_n$. Then there are
$x_1,\ldots,x_n,y_1,\ldots,y_n\in\N$ such that $x_i\ge z_i$ and there exists a
$d$-embedding of $x$ into $y$ that maps at least $z_j$ factors $v_j$ into
$u_0v_1^{y_1}u_1\cdots v_{j-1}^{y_{j-1}}u_{j-1}$. 
This means we have \[ u_0v_1^{z_1}u_1\cdots v_{j-1}^{z_{j-1}}u_{j-1}v_j^{z_j}\modord{d} u_0v_1^{y_1}u_1\cdots v_{j-1}^{y_{j-1}}u_{j-1} \] and hence
\[ u_0v_1^{z_1}u_1\cdots v_n^{z_n}u_n \modord{d} u_0v_1^{y_1}u_1\cdots v_{j-1}^{y_{j-1}}u_{j-1} u_j v_{j+1}^{z_{j+1}}\cdots v_n^{z_n}u_n. \]
since clearly $u_j v_{j+1}^{z_{j+1}}\cdots v_n^{z_n}u_n\modord{d} u_j
v_{j+1}^{z_{j+1}}\cdots v_n^{z_n}u_n$ and $\modord{d}$ is multiplicative.
This implies the inclusion ``$\subseteq$'' of \cref{eq:nonminimal}.
Finally, note that  \cref{eq:nonminimal} contradicts the
assumed irreducibility.
\end{proof}

\begin{proof}[Proof of \cref{reg:association}]
Clearly, if a loop pattern is associated with a language, then its induced
ideal belongs to the adherence of the language.  Conversely, suppose the ideal
$I=\Dclosure[\modord{d}]{u_0v_1^*u_1\cdots v_n^*u_n}$ belongs to
$\Adh[\modord{d}]{L}$. Let $k\in\N$ and $x_1,\ldots,x_n\ge k$ and let
$\ell\in\N$ be the constant provided by \cref{reg:normalembeddings}. Without
loss of generality, we may assume that $\ell\ge k$.

Since $I$ belongs to $\Adh[\modord{d}]{L}$, there is a word $w\in L$ such that
$u_0v_1^\ell u_1\cdots v_n^\ell u_n\modord{d} w\modord{d} u_0v_1^{y_1}u_1\cdots
v_n^{y_n}u_n$ for some $y_1,\ldots,y_n\in\N$. This means in particular that
there is a $d$-embedding $\alpha$ of $u_0v_1^\ell u_1\cdots v_n^\ell u_n$ into
$w$ and a $d$-embedding $\beta$ of $w$ into $u_0v_1^{y_1}u_1\cdots
v_n^{y_n}u_n$. By composing these two $d$-embeddings, we obtain a $d$-embedding
$\gamma$ of $u_0v_1^\ell u_1\cdots v_n^\ell u_n$ into the word $u_0v_1^{y_1}u_1\cdots
v_n^{y_n}u_n$. By the choice of $\ell$, $\gamma$ has to be $k$-normal. This means
that $\gamma$ maps at least $k$ copies of $v_i$ to $v_i^{y_i}$ for each
$i\in[1,n]$.  We can therefore decompose $w=\bar{u}_0\bar{v}_1\bar{u}_1\cdots
\bar{v}_n\bar{u}_n$ so that these $k$ copies of $v_i$ that $\gamma$ maps to
$v_i^{y_i}$ are mapped by $\alpha$ to $\bar{v}_i$ and $|\bar{v}_i|$ is
divisible by $d$.

Since $\beta$ maps $\bar{v}_i$ to $v_i^{y_i}$, we have
$\bar{v}_i\in\Dclosure[\modord{d}]{v_i^*}$. This also implies that $\beta$ maps
$\bar{u}_0$ to $u_0v_1^{y_1}$, and $\beta$ maps $\bar{u}_i$ to
$v_i^{y_i}u_iv_{i+1}^{y_{i+1}}$, and $\beta$ maps $\bar{u}_n$ to
$v_n^{y_n}u_n$. Moreover, $\alpha$ maps $u_i$ to $\bar{u}_i$ for each
$i\in[0,n]$.
In other words, we have $v_i^k\modord{d} \bar{v}_i\in\Dclosure[\modord{d}]{v_i^*}$
for every $i\in[1,n]$ and
$u_i\modord{d}\bar{u}_i\in\Dclosure[\modord{d}]{v_{i}^*u_iv_{i+1}^*}$ for
$i\in[1,n-1]$ and $u_0\modord{d}\bar{u}_0\in\Dclosure[\modord{d}]{u_0v_1^*}$
and $u_n\modord{d}\bar{u}_n\in\Dclosure[\modord{d}]{v_n^*u_n}$. Thus, $I$ is
associated to $L$.
\end{proof}

\subsection{Proof of \cref{reg:smallperiod}}
\begin{proof}
Suppose $I$ belongs to $\Adh[\modord{d}]{\langof{\cA_i}}$ for $i=1,2$.  Let
$u_0v_1u_1\cdots v_nu_n$ be an irreducible loop pattern for $\cM_d$ such that
$I=\Dclosure[\modord{d}]{u_0v_1^*u_1\cdots v_n^*u_n}$. According to \cref{reg:association},
the loop pattern $u_0v_1u_1\cdots v_nu_n$ is associated to $\langof{\cA_i}$ for $i=1,2$.

In particular, there is a word $\bar{u}_{i,0}\bar{v}_{i,1}\bar{u}_{i,1}\cdots
\bar{v}_{i,n}\bar{u}_{i,n}\in\langof{\cA_i}$ such that
$v_j^m\modord{d}\bar{v}_{i,j}\in\Dclosure[\modord{d}]{v_j^*}$ for $j\in[1,n]$ and
$i=1,2$ and
$u_j\modord{d}\bar{u}_{i,j}\in\Dclosure[\modord{d}]{v_{j}^*u_jv_{j+1}^*}$ for
$j\in[1,n-1]$ and $u_0\modord{d}\bar{u}_{i,0}\in\Dclosure[\modord{d}]{u_0v_1^*}$
and $u_n\modord{d}\bar{u}_{i,n}\in\Dclosure[\modord{d}]{v_n^*u_n}$.

We can therefore write $\bar{v}_{i,j}=t_{i,j,1}\cdots t_{i,j,m}$ with
$v_j\modord{d}t_{i,j,\ell}\in\Dclosure[\modord{d}]{v_j^*}$. Consider the run of
$\cA_i$ on the word \[ \bar{u}_{i,0}\bar{v}_{i,1}\bar{u}_{i,1}\cdots
\bar{v}_{i,n}\bar{u}_{i,n}. \] Since $\cA_i$ has $\le m$ states, for each
$j\in[1,n]$, this run must occupy the same before and after reading some infix
$t_{i,j,\ell}\cdots t_{i,j,k}$.  Let $q_{i,j}$ be this state and let
$\bar{v}_{i,j}=x_{i,j}y_{i,j}z_{i,j}$ be the decomposition so that
$y_{i,j}=t_{i,j,\ell}\cdots t_{i,j,k}$. Then we have
$v_j\modord{d}y_{i,j}\in\Dclosure[\modord{d}]{v_j^*}$ and also
$x_{i,j},z_{i,j}\in\Dclosure[\modord{d}]{v_j^*}$. The former implies that
$\Dclosure[\modord{d}]{y_{i,j}^*}=\Dclosure[\modord{d}]{v_j^*}$.

Let $\cA_{i,j}$ be the automaton obtained from $\cA_i$ by making $q_{i,j}$ the
only initial and final state. Then $\cA_{i,j}$ is cyclic and we have
$y_{i,j}^*\subseteq\langof{\cA_{i,j}}$. In particular, the ideal
$\Dclosure[\modord{d}]{v_j^*}=\Dclosure[\modord{d}]{y_{i,j}^*}$ belongs to
$\Adh[\modord{d}]{\langof{\cA_{i,j}}}$. Now \cref{reg:smallperiod:cyclic}
yields a $w_{j}\in(\Sigma^d)^*$ such that 
\begin{itemize}
\item $\Dclosure[\modord{d}]{v_j^*}\subseteq \Dclosure[\modord{d}]{w_j^*}$,
\item $\Dclosure[\modord{d}]{w_j^*}$ belongs to $\Adh[\modord{d}]{\langof{\cA_{i,j}}}$,
\item $\pi_d(w_j)\le m^2$.
\end{itemize}
We claim that $u_0w_1u_1\cdots w_nu_n$ is a loop pattern as desired in the
\lcnamecref{reg:smallperiod}. It remains to show that
$\Dclosure[\modord{d}]{u_0w_1^*u_1\cdots w_n^*u_n}$ belongs to
$\Adh[\modord{d}]{\langof{\cA_i}}$ for $i=1,2$.

Let $k\in\N$. Since $\Dclosure[\modord{d}]{w_j^*}$ belongs to $\Adh[\modord{d}]{\langof{\cA_{i,j}}}$ for $i\in\{1,2\}$ and $j\in[1,n]$, there is a word
$w'_{i,j}\in\langof{\cA_i}$ such that
$w_j^k\modord{d}w'_{i,j}\in\Dclosure[\modord{d}]{w_j^*}$.  Define
\[ t=\bar{u}_{i,0}x_{i,1}w'_{i,1}z_{i,1}\bar{u}_{i,1}\cdots x_{i,n}w'_{i,n}z_{i,n}\bar{u}_{i,n}.  \]
 Then we have
$u_0\bar{w}_1^ku_1\cdots \bar{w}_n^ku_n\modord{d} t\in\langof{\cA_i}$.
Moreover, since $x_{i,j},z_{i,j}\in\Dclosure[\modord{d}]{v_j^*}\subseteq
\Dclosure[\modord{d}]{\bar{w}_j^*}$ and by the choice of the $\bar{u}_{i,j}$,
the word $t$ is contained in
$\Dclosure[\modord{d}]{u_0\bar{w}_1^*u_1\cdots \bar{w}_n^*u_n}$. This proves that
$\Dclosure[\modord{d}]{u_0\bar{w}_1^*u_1\cdots \bar{w}_n^*u_n}$ belongs to
$\Adh[\modord{d}]{\langof{\cA_i}}$ for $i=1,2$ and hence completes the
\lcnamecref{reg:smallperiod}.
\end{proof}

\subsection{Proof of \cref{reg:association-ext}}

\begin{lem}\label{reg:ideal-charact-ext}
Suppose $v\in(\Sigma^d)^*$. Then every $r\in[0,d-1]$:
\[ \Dclosure[\modord{d}] v^{[r]}=\{u\in\Sigma^* \mid |u|\equiv r\bmod{d},~\kappa_d(u)\subseteq\kappa_d(v)\}. \]
\end{lem}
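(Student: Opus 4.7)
The plan is to adapt the proof of \cref{reg:ideal-charact} to the case of extended loop patterns, exploiting the fact that $v\in(\Sigma^d)^*$ means $|v|$ is a multiple of $d$, so that the letter-at-residue structure of $v^Kw$ mirrors that of $v$ for any $K$.

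For the inclusion ``$\subseteq$'', I take $u\in\Dclosure[\modord{d}]{v^{[r]}}$, so $u\modord{d}v^Kw$ for some $K\ge 0$, where $w$ is the length-$r$ prefix of $v$. The relation $\modord{d}$ forces $\sigma_{\cM_d}(u)=\sigma_{\cM_d}(v^Kw)$, and since $\cM_d$ is the single cycle of length $d$, this gives $|u|\equiv|v^Kw|=K|v|+r\equiv r\bmod d$. For the $\kappa_d$ inclusion, let $\alpha$ be a $d$-embedding of $u$ into $v^Kw$. Every position $\alpha(i)$ in $v^Kw$ lies either in some $v$-block or in the $w$-suffix, and in both cases the letter there is $v[p]$ for some $p\in[1,|v|]$ with $p\equiv\alpha(i)\bmod d$ (because $|v|\equiv 0\bmod d$ and $w$ is a prefix of $v$). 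Since $\alpha(i)\equiv i\bmod d$, we conclude $u[i]\in\kappa_d(v)(j)$ with $j=((i-1)\bmod d)+1$.

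For the inclusion ``$\supseteq$'', I take $u=a_1\cdots a_n$ with $n\equiv r\bmod d$ and $\kappa_d(u)\subseteq\kappa_d(v)$. For each $i\in[1,n]$, the condition $a_i\in\kappa_d(v)(((i-1)\bmod d)+1)$ yields a position $p_i\in[1,|v|]$ with $v[p_i]=a_i$ and $p_i\equiv i\bmod d$. Let $m=|v|/d$ and define $\alpha(i)=(i-1)|v|+p_i=(i-1)md+p_i$; this maps position $i$ of $u$ to the $p_i$-th letter of the $i$-th copy of $v$ inside $v^nw$. Three checks: monotonicity holds since $\alpha(i+1)-\alpha(i)=|v|+p_{i+1}-p_i\ge 1$; the residue is $\alpha(i)\equiv p_i\equiv i\bmod d$; and the range fits because $\alpha(n)\le(n-1)|v|+|v|=n|v|\le|v^nw|$. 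Hence $\alpha$ is a $d$-embedding of $u$ into $v^nw\in v^{[r]}$, so $u\in\Dclosure[\modord{d}]{v^{[r]}}$.

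The main subtlety is in the choice of block size in ``$\supseteq$'': naively reusing the $d$-block construction of the proof of \cref{reg:ideal-charact} (splitting $v=x_ia_iy_i$ with $|x_i|\equiv i-1\bmod d$) fails monotonicity when $|v|>d$, because the positions $p_i$ may lie deep inside $v$. Using blocks of size $|v|$ instead (i.e., spacing consecutive embedded letters by a full copy of $v$) restores monotonicity automatically and makes the argument go through uniformly for every $r\in[0,d-1]$.
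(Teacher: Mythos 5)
Your proof is correct, and in substance it is the same argument the paper gives; you have simply made the position arithmetic explicit. The paper writes $v=x_ia_iy_i$ for each $i$ and embeds $u=a_1\cdots a_n$ into $\overline{x_1}\,a_1\,\overline{y_1x_2}\,a_2\cdots\overline{y_{n-1}x_n}\,a_n\,\overline{y_nw}=v^nw$; reading off the position of $a_i$ in this concatenation gives exactly $(i-1)|v|+|x_i|+1=(i-1)|v|+p_i$, i.e.\ your map $\alpha$.

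The concluding remark of your proposal is therefore based on a misreading of \cref{reg:ideal-charact}. That proof does not attempt to pack several letters of $u$ into a single copy of $v$; the factor separating $a_i$ from $a_{i+1}$ is $y_ix_{i+1}$, which already sews the tail of one full copy of $v$ onto the head of the next, so each $a_i$ lands in its own copy of $v$ and monotonicity holds for the very reason you spell out. In short: there is no monotonicity pitfall in the paper's construction, and the ``block size $|v|$'' you propose is not a repair of \cref{reg:ideal-charact} but a transparent restatement of it.
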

\begin{proof}
Let $w$ be the length-$r$ prefix of $v$.
Let $u\in\Dclosure[\modord{d}]{v^{[r]}}$, say $u\modord{d} v^kw$. Then clearly
$|u|\equiv r\bmod{d}$. Moreover, if $a\in\Sigma$ occurs at a position $p$ in $u$
with $p\equiv i\pmod{d}$, then $a$ occurs at some position $p+d\N$ in $v$.
Hence, $\kappa_d(u)\subseteq\kappa_d(v)$.

Suppose $u\in\Sigma^*$ with $|u|\equiv r\bmod{d}$ and $\kappa_d(u)\subseteq\kappa_d(v)$. Write
$u=a_1\cdots a_n$, $a_1,\ldots,a_n\in\Sigma$.  Since
$a_i\in\kappa_d(u)(i)\subseteq \kappa_d(v)(i)$, each $a_i$ occurs at some
position $p$ in $v$ with $p\equiv i\bmod{d}$. Hence, we can write $v=x_ia_iy_i$
with $|x_i|\equiv i-1 \bmod{d}$ and therefore $|y_i|\equiv |v|-|x_i|-1\equiv
d-i \bmod{d}$. In particular, $|y_ix_{i+1}|\equiv (d-i)+i=d\bmod{d}$.  Moreover,
$|x_1|\equiv 0\bmod{d}$ and $|y_nw|\equiv d-n+r\equiv 0\bmod{d}$. Therefore,
\[ u=a_1\cdots a_n\modord{d} \overline{x_1}a_1\overline{y_1x_2}a_2\overline{y_2x_3}\cdots \overline{y_{n-1}x_n}a_n\overline{y_nw}=v^nw \]
where $\overline{z}$ expresses that $z\in(\Sigma^d)^*$.  Thus
$u\in\Dclosure[\modord{d}]{v^{[r]}}$.
\end{proof}

Consider an extended loop pattern $u_0v_1^{[r_1]}u_1\cdots
v_n^{[r_n]}u_n$ and let $w_i$ be the length-$r$ prefix of $v_i$ for
$i\in[1,n]$. We say that this extended loop pattern is \emph{irreducible} if
\begin{enumerate}
\item the corresponding loop pattern $u_0(v_1)w_1u_1\cdots (v_n)w_nu_n$ is
  irreducible and
\item for each $i\in[0,n-1]$, $u_i$ is either empty or the last letter
  of $u_i$ is not contained in $\kappa_d(v_{i+1})(d)$ and
\item for each $i\in[1,n]$, $u_i$ is either empty or the first letter
  of $u_i$ is not contained in $\kappa_d(v_i)(r_i+1)$.
\end{enumerate}

\begin{lem}\label{reg:make-extended-irreducible}
  Let $x_0y^{[s_1]}_1\cdots y^{[s_\ell]}_\ell x_\ell$ be an extended
  loop pattern for $\cM_d$ for which $\pi_d(y_i)\le m$ for every
  $i\in[1,\ell]$.  Then there is an irreducible extended loop pattern
  $u_0v_1^{[r_1]}u_1\cdots v_n^{[r_n]}u_n$ for $\cM_d$ generating the
  same ideal where also $\pi_d(v_i)\le m$ for every $i\in[1,n]$.
\end{lem}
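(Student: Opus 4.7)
The plan is to reduce the given pattern to an irreducible one by iterating local rewritings, each of which strictly decreases a well-founded measure while preserving both the generated ideal and the bound $\pi_d(\cdot)\le m$ on every loop. Order pairs $(n,L)\in\N\times\N$ lexicographically, where $n$ is the current number of loops and $L=\sum_i |x_i|$ is the total length of the gap words. If the pattern is already irreducible, we are done; otherwise one of the three irreducibility conditions fails, and the matching rewriting strictly decreases $(n,L)$.

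If condition~(1) fails, some loop is redundant in the sense that deleting it does not shrink the generated ideal. Removing it drops $n$ by one---the two adjacent gaps merge, with the dropped offset prefix sandwiched between them---while leaving every remaining loop intact, so the period bound is trivially maintained.

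If condition~(2) fails at index $j$, write $x_j=x'_j a$ with $a\in\kappa_d(y_{j+1})(d)$; pick a position $q\equiv 0\pmod d$ with $y_{j+1}[q]=a$ and let $y'_{j+1}=\rho^{|y_{j+1}|-q+1}(y_{j+1})$, the rotation of $y_{j+1}$ starting with $a$. Replace the factor $x_j\,y_{j+1}^{[s_{j+1}]}$ by $x'_j\,(y'_{j+1})^{[s_{j+1}+1]}$, reducing the exponent mod~$d$. If instead condition~(3) fails at index $j$, write $x_j=a\,x'_j$ with $a\in\kappa_d(y_j)(s_j+1)$ and replace $y_j^{[s_j]}\,x_j$ by $y_j^{[s_j+1]}\,x'_j$, keeping the loop unchanged while incrementing its offset. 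In either case $L$ strictly decreases by one, $n$ stays the same, and the modified loop is either a rotation of the old one or equal to it. A rotation preserves $\kappa_d$ up to a cyclic shift of its index, hence preserves the smallest period dividing $d$, and in particular the bound $\pi_d\le m$.

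The main obstacle is verifying that the rewritings of conditions~(2) and~(3) preserve the ideal. Both come down to direct $d$-embedding arguments. For condition~(3), let $b=y_j[s_j+1]$; one must show $\Dclosure[\modord{d}]{y_j^* w_j a x'_j}=\Dclosure[\modord{d}]{y_j^* w_j b x'_j}$, where $w_j$ is the length-$s_j$ prefix of $y_j$. For the forward inclusion, embed $y_j^k w_j a x'_j$ into $y_j^{k+1} w_j b x'_j$ by matching $y_j^k w_j$ identically, sending the $a$ to its occurrence at the chosen position inside the extra copy of $y_j$ (which sits at residue $s_j+1\pmod d$), and shifting $x'_j$ to the right by $|y_j|$; the shift is a multiple of $d$, so all residues are preserved, and the length difference $|y_j|$ is likewise a multiple of $d$, so the mod-$d$ length condition is automatic. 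The reverse inclusion is symmetric, sending the $b$ to $y_j[s_j+1]$ in the extra copy. Condition~(2) is handled analogously, with the role of $a$ realised at the distinguished position $q$ in a freshly inserted loop copy. Iterating terminates since $(n,L)$ strictly decreases at every step, and the final pattern is irreducible, generates the original ideal, and satisfies $\pi_d(v_i)\le m$ for every surviving loop $v_i$.
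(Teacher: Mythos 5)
Your proof is correct and takes essentially the same route as the paper: reduce to an irreducible pattern by local rewrites that preserve the generated ideal and the period bound and strictly decrease a well-founded measure; the paper packs your lexicographic $(n,L)$ into the single quantity $\sum_i|u_i|+n\cdot d$ and argues by minimality rather than by explicit iteration, but these are equivalent.

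One point worth noting: for the condition-(2) rewrite the paper writes $\bar u_i\,\lambda(v_{i+1})^{[r_{i+1}+1]}$ with $\lambda$ the \emph{left} rotation. A left rotation shifts $\kappa_d$ by $-1$ rather than $+1$ and so does not produce the same ideal; the intended word is the right rotation $\rho(v_{i+1})$, since the absorbed last letter $a$ of $u_i$ lies in $\kappa_d(v_{i+1})(d)$ and after a right rotation by one this residue class moves to position $1$. Your rotation $\rho^{|y_{j+1}|-q+1}$ is a right rotation by an amount $\equiv 1\pmod d$, which agrees with $\rho(v_{i+1})$ up to a further rotation by a multiple of $d$; on a word in $(\Sigma^d)^*$ that extra rotation leaves $\kappa_d$, and hence both the generated ideal and $\pi_d$, unchanged. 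So your condition-(2) rewrite is sound, and your embedding argument for conditions (2) and (3) fills in the ``it is easy to see'' step that the paper leaves implicit.
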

\begin{proof}
  We define the \emph{length} of an extended loop pattern
  $u_0v_1^{[r_1]}u_1\cdots v_n^{[r_n]}u_n$ to be $|u_0|+\cdots
  |u_n|+n\cdot d$. In other words, each loop $v_i$ contributes $d$ to
  the length. 

  Let $I$ be the ideal $\Dclosure[\modord{d}]x_0y_1^{[s_1]}x_1\cdots
  y_\ell^{[s_\ell]}x_\ell$. Furthermore, let $u_0v_1^{[r_1]}u_1\cdots
  v_n^{[r_n]}u_n$ be an extended loop pattern of minimal length $N$ among
  all extended loop patterns that generate $I$ and for which
  $\pi_d(v_i)\le m$ for every $i\in[1,n]$. Let $w_i$ be the
  length-$r_i$ prefix of $v_i$ for $i\in[1,n]$. 

  By minimality, the loop pattern $u_0(v_1)w_1u_1\cdots (v_n)w_nu_n$ has to be irreducible:
  Otherwise, there would be a loop $v_i$ such that 
  \[ I=\Dclosure[\modord{d}] u_0v_1^*w_1u_1 \cdots v_{i-1}^*w_{i-1}u_{i-1}w_iu_i\cdots v_n^*w_nu_n \]
  and hence the extended loop pattern 
  \[ u_0v_1^{[r_i]}u_1 \cdots v_{i-1}^{[r_{i-1}]}u_{i-1}w_iu_i\cdots v_n^{[r_n]}u_n \]
  would generate $I$ and have length $N-d+r_i<N$.

  Now consider some non-empty $u_i$ and suppose its first letter is
  contained in $\kappa_d(v_i)(r_i+1)$. In other words,
  $u_i=a\bar{u}_i$ with $a\in\kappa_d(v_i)(r_i+1)$. Then we could
  replace $v_i^{[r_i]}u_i$ by $v_i^{[r_i+1]}\bar{u}_i$. The resulting
  extended loop pattern clearly generates the same ideal. Moreover,
  the requirement for periods would still be met. Finally, this
  extended loop pattern would have length $N-1$, in contradiction to
  minimality.
  
  Now consider some non-empty $u_i$ and suppose its last letter is
  contained in $\kappa_d(v_{i+1})(d)$. In other words,
  $u_i=\bar{u}_ia$ with $a\in\kappa_d(v_{i+1})(d)$. Then we could
  replace the term $u_iv_{i+1}^{[r_{i+1}]}$ by
  $\bar{u}_i\lambda(v_{i+1})^{[r_{i+1}+1]}$. It is easy to see that
  this would result in an extended loop pattern that generates the
  same ideal. Moreover, we would have
  $\pi_d(\lambda(v_{i+1}))=\pi_d(v_{i+1})\le m$. Finally, this
  extended loop pattern would have length $N-1$, contradicting
  minimality.
\end{proof}

\begin{proof}[Proof of \cref{reg:association-ext}]
  Clearly, if the ideal generated by $p$ is associated to $L$, then it
  belongs to $\Adh[\modord{d}]{L}$.

  Conversely, let $p=u_0v_1^{[r_1]}u_1\cdots v_n^{[r_n]}u_n$ be an
  extended loop pattern for $\cM_d$ and suppose its generated ideal
  $I$ belongs to $\Adh[\modord{d}]{L}$. Let $w_i$ be the length-$r_i$
  prefix of $v_i$ for $i\in[1,n]$.  Since the loop pattern
  $u_0(v_1)w_1u_1\cdots (v_n)w_nu_n$ (the loop parts are in brackets)
  is irreducible, it is associated to $L$ according to
  \cref{reg:association}. 

  Thus, for given $k\in\N$, we find a word
  \begin{equation} w=\tilde{u}_0\tilde{v}_1\tilde{u}_1\cdots\tilde{v}_n\tilde{u}_n\in L \label{association-ext-decomp}\end{equation}
 such that
  $v_i^{k+1}\modord{d}\tilde{v}_i\in\Dclosure[\modord{d}]v_i^*$ for every
  $i\in[1,n]$ and $w_iu_i\modord{d}\tilde{u}_i\in\Dclosure[\modord{d}]
  v_i^*w_iu_iv_{i+1}^*$ for $i\in[1,n-1]$ and
  $u_0\modord{d}\tilde{u}_0\in\Dclosure[\modord{d}] u_0v_1^*$ and
  $w_nu_n\modord{d}\tilde{u}_n\in\Dclosure[\modord{d}] v_n^*w_nu_n$.

  In the first step, we modify the decomposition
  \cref{association-ext-decomp} of $w$ by moving, for each
  $i\in[1,n]$, the last $d-r_i$ letters of $\tilde{v}_i$ to its right
  neighbor $\tilde{u}_i$. Let the resulting decomposition be
  \[ w=\hat{u}_0\hat{v}_1\hat{u}_1\cdots \hat{v}_n\hat{u}_n. \]

  Since $v_i^{k+1}\modord{d}\tilde{v}_i\in\Dclosure[\modord{d}] v_i^*$ 
  and $w_iu_i\modord{d}\tilde{u}_i\in\Dclosure[\modord{d}] v_i^*w_iu_iv_{i+1}^*$, we now have
  \begin{enumerate}
  \item 
    $v_i^kw_i\modord{d}\hat{v}_i\in\Dclosure[\modord{d}] v_i^{[r_i]}$ for each $i\in[1,n]$,
  \item  $u_i\modord{d}\hat{u}_i\in\Dclosure[\modord{d}] \lambda^{r_i}(v_i)^*u_iv_{i+1}^*$ for $i\in[1,n-1]$,
  \item $u_0\modord{d}\hat{u}_0\in\Dclosure[\modord{d}] u_0v_1^*$, and
  \item $u_n\modord{d}\hat{u}_n\in\Dclosure[\modord{d}] \lambda^{r_i}(v_n)^* u_n$.
  \end{enumerate}

  We claim that for each $i\in[0,n]$, there are words $x_i,y_i$ so that
  \begin{enumerate}
    \item for each $i\in [1,n-1]$ for which $u_i$ is non-empty, $\hat{u}_i=x_iu_iy_i$ with $x_i\in \Dclosure[\modord{d}]\lambda^{r_i}(v_i)^*$, $y_i\in \Dclosure[\modord{d}]v_{i+1}^*$,
    \item $\hat{u}_0=u_0y_0$ and $y_0\in\Dclosure[\modord{d}] v_1^*$,
    \item $\hat{u}_n=x_nu_n$ and $x_0\in\Dclosure[\modord{d}] \lambda^{r_i}(v_{n})^*$.
  \end{enumerate}
  Note that is establishes the \lcnamecref{reg:association-ext}: We can then
  again modify the decomposition as follows.  We move $y_0$ from
  $\hat{u}_0$ to $\hat{v}_1$ and we move $x_n$ from $\hat{u}_n$ to
  $\hat{v}_n$. Moreover, for each non-empty $u_i$, we move $x_i$ from
  $\hat{u}_i$ to $\hat{v}_i$ and we move $y_i$ from $\hat{u}_i$ to
  $\hat{v}_{i+1}$. Each $\hat{u}_i$ where $u_i$ is empty is left
  unchanged.  The resulting decomposition
  $w=\bar{u}_0\bar{v}_1\bar{u}_1\cdots \bar{v}_n\bar{u}_n$ is then as
  desired.
  
  First, note that if some $u_i$ is empty (whether $i\in[1,n-1]$ or
  $i\in\{0,n\}$), then we need not construct any $x_i$ and $y_i$. We
  show how to construct $x_i$ and $y_i$ for $i\in[1,n-1]$ where $u_i$
  is non-empty. The proof for $y_0$ and $x_n$ is then analogous.
  
  Recall that $u_i\modord{d}\hat{u}_i\in\Dclosure[\modord{d}]
  \lambda^{r_i}(v_i)^* u_i v_{i+1}^*$.  This means there is some
  $\ell$ so that $\hat{u}_i\modord{d}\lambda^{r_i}(v_i)^\ell u_i
  v_{i+1}^\ell$.  Consider the $d$-embedding $\alpha$ of $u_i$ into
  $\hat{u}_i$ and the $d$-embedding $\beta$ of $\hat{u}_i$ into
  $\lambda^{r_i}(v_i)^\ell u_iv_{i+1}^\ell$. The composition $\gamma$
  of $\alpha$ and $\beta$ is a $d$-embedding of $u_i$ into
  $\lambda^{r_i}(v_i)^\ell u_i v_{i+1}^\ell$.

  We now use the fact that our extended loop pattern is
  irreducible. The $d$-embedding $\gamma$ cannot send the left-most
  letter of $u_i$ to a position in $\lambda^{r_i}(v_i)^\ell
  u_iv_{i+1}^\ell$ left of $u_i$, because that would mean that this
  letter is contained in $\kappa_d(v_i)(r_i+1)$. Moreover, $\gamma$
  cannot send the right-most letter of $u_i$ to a position in
  $\lambda^{r_i}(v_i)^\ell u_iv_{i+1}^\ell$ to the right of $u_i$,
  because that would mean that this letter is contained in
  $\kappa_d(v_{i+1})(d)$. This implies that $\gamma$ sends $u_i$
  exactly to the factor $u_i$ of $\lambda^{r_i}(v_i)^\ell u_i
  v_{i+1}^\ell$.  Thus, $\hat{u}_i$ has a factor $u_i$ that is sent by
  $\beta$ to $u_i$ of $\lambda^{r_i}(v_i)^\ell u_i v_{i+1}^\ell$. Let
  $\hat{u}_i=x_iu_iy_i$ be the corresponding decomposition. Then
  $\beta$ has to map $x_i$ into $\lambda^{r_i}(v_i)^\ell$ and $y_i$
  into $v_{i+1}^\ell$. In particular, we have
  $x_i\in\Dclosure[\modord{d}] \lambda^{r_i}(v_i)^*$ and
  $y_i\in\Dclosure[\modord{d}] v_{i+1}^*$. This completes the proof of
  the claim and hence the \lcnamecref{reg:association-ext}.
\end{proof}

\subsection{Proof of \cref{pump-adherence}}

\begin{lem}\label{reg:pump-word-up}
  Let $\cA$ be an automaton with $\le m$ states and let $d$ be a
  multiple of $m^3!$. Moreover, let $\pi_d(v)\le m^2$ and let
  $u\in\Dclosure[\modord{d}] v^{[r]}$ be accepted by $\cA$ such that
  $|u|\ge m\cdot \pi_d(v)$.  Then there is a
  $u'\in\Dclosure[\modord{\ell\cdot d}] (v^\ell)^{[r']}$ in
  $\langof{\cA}$ such that $r'=r+(\ell-1)d$ and $|u'|=|u|+(\ell-1)d$.
\end{lem}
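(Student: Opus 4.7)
The strategy is to locate a factor of $u$ that is simultaneously an $\cA$-loop and whose length is a suitable multiple of $t := \pi_d(v)$, then iterate this factor to add exactly $(\ell-1)d$ letters. The resulting word lies in $\langof{\cA}$ by construction, and the main task is to verify that it remains inside the enlarged ideal.

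Since $|u| \ge m \cdot t$ and $t \le m^2$, the run of $\cA$ on $u$ passes through at least $m+1$ of the states recorded at positions $0, t, 2t, \ldots$, so pigeonhole yields $0 \le i < j \le m$ whose positions $it$ and $jt$ carry the same state. Writing $u = xyz$ with $|x| = it$ and $|y| = (j-i)t$, the run of $\cA$ loops on $y$ and $|y| \le m \cdot t \le m^3$. Because $m^3!$ divides $d$, every positive integer $\le m^3$ divides $d$; in particular $|y|$ divides $d$, so $(\ell-1)d/|y|$ is an integer, and I would set
\[ u' := x \, y^{\,1 + (\ell-1)d/|y|} \, z. \]
Then $|u'| = |u| + (\ell-1)d$ and $u' \in \langof{\cA}$ by the looping property.

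To verify $u' \in \Dclosure[\modord{\ell d}] (v^\ell)^{[r']}$, I would invoke the $\kappa$-characterization of ideals (\cref{reg:ideal-charact-ext} at modulus $\ell d$), reducing the task to (i) the length congruence $|u'| \equiv r' \pmod{\ell d}$, which follows from $|u'| = |u| + (\ell-1)d$ together with $|u| \equiv r \pmod{d}$, and (ii) the inclusion $\kappa_{\ell d}(u') \subseteq \kappa_{\ell d}(v^\ell)$. For (ii), observe that $\kappa_{\ell d}(v^\ell)$ inherits both the $t$-periodicity of $\kappa_d(v)$ and, a fortiori, $d$-periodicity. The shift of the suffix $z$ by $(\ell-1)d$ is a multiple of $d$ and hence preserves membership by $d$-periodicity; the shifts of the inserted copies of $y$ are multiples of $|y|$, itself a multiple of $t$, and hence preserve membership by $t$-periodicity. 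Combined with $\kappa_d(u) \subseteq \kappa_d(v)$ (inherited from $u \in \Dclosure[\modord d] v^{[r]}$), this yields the desired inclusion. An equivalent, more packaged route is to iterate \cref{reg:pump-inside-ideal-ext} with $v$ replaced by $v^\ell$, noting that the period $\pi_{\ell d}(v^\ell)$ divides $t$.

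The principal obstacle is coordinating two distinct periodicities: the small period $t = \pi_d(v)$ that determines which $\cA$-loops can legitimately be used, and the larger modulus $\ell d$ in which the target ideal is described. The reason things fit is precisely that the total added length $(\ell-1)d$ is a multiple of $d$ (so the suffix shift is absorbed by the $d$-periodicity of $\kappa_{\ell d}(v^\ell)$), while each individual insertion length $|y|$ is a multiple of $t$ (so each inserted copy is absorbed by the $t$-periodicity). The bound $|y| \le m^3$ together with $m^3! \mid d$ is exactly what is needed to make $|y|$ divide $d$ and thereby make the number of inserted copies integral.
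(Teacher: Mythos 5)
Your proposal matches the paper's proof essentially step for step: pigeonhole on the run through the first $m$ blocks of length $\pi_d(v)$ to extract an $\cA$-loop $g$ of length a multiple of $\pi_d(v)$ and at most $m^3$ (hence dividing $d$), pump $g$ by $(\ell-1)d/|g|$ extra copies, and verify membership in $\Dclosure[\modord{\ell d}](v^\ell)^{[r']}$ via the $\kappa$-characterization of \cref{reg:ideal-charact-ext}. The only cosmetic difference is that the paper first applies \cref{reg:pump-inside-ideal-ext} at modulus $d$ to get $\kappa_d(u')\subseteq\kappa_d(v)$ and then passes to modulus $\ell d$, whereas you argue the $\kappa_{\ell d}$-inclusion directly using the $\pi_d(v)$- and $d$-periodicity of $\kappa_{\ell d}(v^\ell)$ — the same underlying periodicity fact the paper relies on.
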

\begin{proof}
Since $|u|\ge m\cdot \pi_d(v)$,
$u$ begins with at least $|u|/\pi_d(v)\ge m$ factors of length $\pi_d(v)$.
Consider the run of $\cA$ on $u$. Since $\cA$ has at most $m$ states,
we can decompose $u=fgh$ such that $g$ is
a contiguous block of $k\le m$ factors of length $\pi_d(v)$ and $g$ is read on a cycle.
Since $|g|=k\cdot\pi_d(v)\le m^3$, $|g|$ divides $d$. Let $u'=fg^{1+(\ell-1)d/|g|}h$.
Then according to \cref{reg:pump-inside-ideal-ext}, we have $u'\in\Dclosure[\modord{d}] v^{[r]}$.
Therefore, $\kappa_d(u')\subseteq\kappa_d(v)$. This implies 
\[ \kappa_{\ell\cdot d}(u')\subseteq\kappa_{\ell\cdot d}(v)\subseteq\kappa_{\ell\cdot d}(v^\ell). \]
Moreover, note that $|u'|=|u|+(\ell-1)d\equiv r+(\ell-1)d=r'\pmod{\ell\cdot d}$
and thus $u'\in\Dclosure[\modord{d}] (v^\ell)^{[r']}$.
\end{proof}

\begin{lem}\label{reg:pad-power}
Let $\cA$ be an automaton with $\le m$ states and let $d$ be a multiple of $m^3!$.
Moreover, let
$v\in(\Sigma^d)^*$ with $\pi_d(v)\le m^2$.
If $u\in\langof{\cA}$ with $w\modord{d}u\in\Dclosure[\modord{d}] v^{[r]}$,
then there is a $u'\in\langof{\cA}$ with
$w\modord{\ell\cdot d}u'\in\Dclosure[\modord{\ell\cdot d}] (v^\ell)^{[r]}$
\end{lem}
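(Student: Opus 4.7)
The plan is to construct $u'$ by inserting pumped cycles inside the ``gaps'' of a $d$-embedding of $w$ into $u$, so that those gaps grow to become multiples of $\ell d$ without leaving $\cA$ or the ideal. Concretely, fix a $d$-embedding $\alpha$ witnessing $w\modord{d}u$, write $w=a_1\cdots a_n$, and decompose $u=x_0a_1x_1\cdots a_nx_n$ with $a_i=u[\alpha(i)]$. Since $\alpha(i{+}1)-\alpha(i)\equiv 1\pmod d$ and $|u|\equiv|w|\pmod d$, every $|x_i|$ is a multiple of $d$.

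I would then apply the pumping argument of \cref{reg:pump-word-up} locally to each non-empty $x_i$. Because $|x_i|\ge d\ge m^3\ge m\cdot\pi_d(v)$, the prefix of $x_i$ contains at least $m$ consecutive blocks of length $\pi_d(v)$, and a pigeonhole argument on the $\le m$ states of $\cA$ yields a contiguous union $y_i$ of $k_i\le m$ such blocks read on a cycle of $\cA$; then $|y_i|\le m^3$ divides $d$ and is also a multiple of $\pi_d(v)$. Choosing $s_i\in[0,\ell{-}1]$ so that $|x_i|+s_id\equiv 0\pmod{\ell d}$, I replace $y_i$ inside $x_i$ by $y_i^{1+s_id/|y_i|}$, obtaining a new gap $x_i'$ of length divisible by $\ell d$. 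Since the extra copies of $y_i$ traverse a cycle of $\cA$, the resulting word $u'=x_0'a_1x_1'\cdots a_nx_n'$ is still accepted by $\cA$, and the positions $\alpha'(i):=1+|x_0'a_1\cdots x_{i-1}'|$ satisfy $\alpha'(i)\equiv i\pmod{\ell d}$, so $\alpha'$ is an $\ell d$-embedding of $w$ into $u'$.

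The main obstacle, and the step that requires the most care, is to verify $u'\in\Dclosure[\modord{\ell d}](v^\ell)^{[r]}$. Repeated use of \cref{reg:pump-inside-ideal-ext} (applicable because $\pi_d(v)\mid|y_i|\mid d$) gives $u'\in\Dclosure[\modord d]v^{[r]}$, hence $|u'|\equiv r\pmod d$ and $\kappa_d(u')\subseteq\kappa_d(v)$ via \cref{reg:ideal-charact-ext}. Lifting these to modulus $\ell d$ uses two ingredients: first, each $|x_i'|$ is a multiple of $\ell d$, so $|u'|\equiv|w|\pmod{\ell d}$, which identifies with $r$ under the paper's convention $r_i\equiv r\pmod d$ for the extended loop pattern notation; second, the periodicity $\pi_d(v)\le m^2$ together with the divisibility $|y_i|\mid \ell d$ ensures that both the inserted $y_i$-blocks and the untouched portions of $u$ contribute, at each $\ell d$-residue, only letters already appearing at the corresponding residue of $v^\ell$---this is precisely the $\kappa_{\ell d}(u')\subseteq\kappa_{\ell d}(v^\ell)$ inclusion invoked in \cref{reg:pump-word-up}. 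Taken together, these give $u'\in\Dclosure[\modord{\ell d}](v^\ell)^{[r]}$ and finish the proof.
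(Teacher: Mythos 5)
Your proposal follows the same overall strategy as the paper's proof: decompose $u = u_0 w_1 u_1 \cdots w_n u_n$ along a $d$-embedding of $w$, pump inside each gap so that its length becomes a multiple of $\ell d$ while staying on a run of $\cA$, and then check membership of the result $u'$ in $\Dclosure[\modord{\ell\cdot d}](v^\ell)^{[r]}$ together with $w\modord{\ell\cdot d}u'$. The difference is purely in the pumping scheme. The paper splits each non-empty $u_i$ into $|u_i|/d$ blocks of length exactly $d$ and applies \cref{reg:pump-word-up} to each block, so each block is stretched to length $\ell d$ and the rotated per-gap memberships $u_i'\in\Dclosure[\modord{\ell\cdot d}]\lambda^i(v^\ell)^*$ are tracked explicitly. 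You instead locate a single cycle $y_i$ of length $k_i\pi_d(v)\le m^3$ inside each gap via pigeonhole and pump it by $s_i d/|y_i|$ extra copies with $s_i\in[0,\ell-1]$ chosen so $|x_i'|$ becomes a multiple of $\ell d$, and you then argue about $u'$ globally via $\kappa_d$ and $\kappa_{\ell d}$. Both schemes satisfy the hypotheses of \cref{reg:pump-inside-ideal-ext} (since $\pi_d(v)\mid|y_i|$ and $|y_i|\le m^3$ divides $d$), both keep the runs cyclic, and both yield gaps whose lengths are multiples of $\ell d$ so that $\alpha'(i)\equiv i\pmod{\ell d}$ gives an $\ell d$-embedding of $w$. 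Your version is more economical (each gap grows by at most $(\ell-1)d$ rather than by a factor of $\ell$), while the paper's is more modular (it reuses \cref{reg:pump-word-up} verbatim and keeps the $\lambda^i$-rotated ideals in view). In both treatments the real crux is exactly the step you single out: passing from $\kappa_d(u')\subseteq\kappa_d(v)$ and $|u'|\equiv r\pmod d$ to $u'\in\Dclosure[\modord{\ell\cdot d}](v^\ell)^{[r]}$, which is the same $\kappa_{\ell\cdot d}$-inclusion that \cref{reg:pump-word-up} asserts. You invoke it at the same level of detail as the paper does, so you introduce no new gap; just be aware that this lifting step is where the divisibility hypotheses on $d$ and the bound $\pi_d(v)\le m^2$ are doing essentially all of the work, and a fully spelled-out proof would need to justify the passage from $\kappa_d$ to $\kappa_{\ell\cdot d}$ rather than treat it as immediate.
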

\begin{proof}
  Since $w\modord{d}u$, we can write $u=u_0w_1u_1\cdots w_nu_n$, where
  $w=w_1\cdots w_n$ and $w_1,\ldots,w_n\in\Sigma$, and
  $u_i\in(\Sigma^d)^*$. Since $u\in\Dclosure[\modord{d}] v^{[r]}$, we
  have $\kappa_d(u)\subseteq\kappa_d(v)$ and hence
  $u_i\in\Dclosure[\modord{d}] \lambda^i(v)^*$ for $i\in[0,n]$.

  For each $i\in[0,n]$, we construct $u'_i$ as follows.
  Consider the run of $\cA$ on $u$ and suppose it reads $u_i$ from state $p_i$ to state $q_i$.
  \begin{itemize}
  \item If $u_i$ is empty, then $u'_i=u_i$. Note that then of course
    $u'_i\in\Dclosure[\modord{\ell\cdot d}]\lambda^i(v^\ell)^*$.
  \item If $u_i$ is non-empty, then we split $u_i$ in $|u_i|/d$
    factors of length $d$ and apply to each factor
    \cref{reg:pump-word-up}. This yields a word a word $u'_i$ such that
    $u'_i\in\Dclosure[\modord{\ell\cdot d}] (\lambda^i(v)^\ell)^*$ and
    so that $u'_i$ can be read from state $p_i$ to $q_i$. Moreover, we have $|u'_i|$ is a multiple of $\ell\cdot d$. Since
    $\lambda^i(v)^\ell=\lambda^i(v^\ell)$, we have
    $u'_i\in\Dclosure[\modord{\ell\cdot d}] \lambda^i(v^\ell)^*$. 
  \end{itemize}
  Therefore, the word $u'=u'_0w_1u'_1\cdots w_nu'_n$ is accepted by
  $\cA$, belongs to $\Dclosure[\modord{\ell\cdot d}] (v^\ell)^{[r]}$ and
satisfies $w\modord{\ell\cdot d} u'$.
\end{proof}

\begin{lem}\label{reg:pad-border}
  Let $\cA$ be an automaton with $\le m$ states and and let $d$ be a
  multiple of $2m^3!$.  Moreover, let $v_i\in(\Sigma^d)^*$ with
  $\pi_d(v_i)\le m^2$ for $i=1,2$.  If $u\in\langof{\cA}$ with
  $u\in\Dclosure[\modord{d}] v_1^*v_2^*$, then there is a
  $u'\in\langof{\cA}$ with $u'\in\Dclosure[\modord{\ell\cdot d}]
  (v_1^\ell)^*(v_2^\ell)^*$
\end{lem}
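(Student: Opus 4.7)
My plan is to adapt the proof of \cref{reg:pad-power} to the two-loop setting by splitting $u$ at the boundary of an embedding witnessing $u\in\Dclosure[\modord{d}]v_1^*v_2^*$ and then applying \cref{reg:pad-power} to each half separately.

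First I fix an embedding of $u$ into some $v_1^{j_1}v_2^{j_2}$ and write $u=u_1u_2$, where $u_1$ is the prefix of $u$ whose letters the embedding sends into the $v_1^{j_1}$ block. Set $r_1=|u_1|\bmod d$ and $r_2=|u_2|\bmod d$; since $|u|$ is divisible by $d$, we have $r_1+r_2\equiv 0\pmod d$. A short residue computation, using that $|v_1^{j_1}|$ is a multiple of $d$ and the embedding preserves residues mod $d$, yields $u_1\in\Dclosure[\modord{d}]v_1^{[r_1]}$ and $u_2\in\Dclosure[\modord{d}]\lambda^{r_1}(v_2)^{[r_2]}$; the left-rotation $\lambda^{r_1}$ appears because the target's $v_2$-block begins at a position that is $1\bmod d$, while $u_2$ begins at position $r_1+1\bmod d$ inside $u$. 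Crucially $\pi_d(\lambda^{r_1}(v_2))=\pi_d(v_2)\le m^2$, so the rotated loop inherits the period hypothesis required to apply \cref{reg:pad-power}. If either $u_1$ or $u_2$ is empty the problem reduces to a single loop and \cref{reg:pad-power} applies directly, so I assume both halves are non-empty.

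Next let $p_1$ be the state of $\cA$ reached after reading $u_1$ along the accepting run of $u$, and let $\cA_1$ (resp.\ $\cA_2$) be $\cA$ with initials and finals $(I,\{p_1\})$ (resp.\ $(\{p_1\},F)$). I apply \cref{reg:pad-power} to $(u_1,\cA_1)$ with loop $v_1$, residue $r_1$, and witness $w_1=u_1[1..r_1]$ (yielding the decomposition $u_1=w_1\cdot u_1[r_1+1..|u_1|]$ in which the tail has length divisible by $d$), obtaining $u'_1\in\langof{\cA_1}\cap\Dclosure[\modord{\ell d}](v_1^\ell)^{[r_1]}$. Symmetrically I apply it to $(u_2,\cA_2)$ with loop $\lambda^{r_1}(v_2)$ and residue $r_2$, obtaining $u'_2\in\langof{\cA_2}\cap\Dclosure[\modord{\ell d}](\lambda^{r_1}(v_2)^\ell)^{[r_2]}$. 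I then set $u'=u'_1u'_2$, which lies in $\langof{\cA}$ by composition of the two runs.

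The main obstacle is verifying $u'\in\Dclosure[\modord{\ell d}](v_1^\ell)^*(v_2^\ell)^*$. The key tool is the identity $\lambda^{r}(v^k)=\lambda^{r\bmod|v|}(v)^k$, which rewrites $\lambda^{r_1}(v_2)^{\ell J_2}=\lambda^{r_1}(v_2^{\ell J_2})=v_2^{\ell J_2}[r_1+1..\ell J_2|v_2|]\cdot v_2^{\ell J_2}[1..r_1]$; consequently the embedding of $u'_2$ appears as a contiguous window of $v_2^{\ell(J_2+1)}$ beginning $r_1$ letters into its first $v_2^\ell$-block. Symmetrically, $u'_1$ embeds into $v_1^{\ell J_1}\cdot v_1^\ell[1..r_1]$, which is a prefix of $v_1^{\ell(J_1+1)}$. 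Stitching the two embeddings into a single embedding of $u'$ into $v_1^{\ell(J_1+1)}v_2^{\ell(J_2+1)}$ is then a matter of routing the first $|u'_1|$ positions of $u'$ into the $v_1$-block, skipping the unused $\ell d-r_1$ positions at the end of that block, and routing the remaining $|u'_2|$ positions into the shifted window of the $v_2$-block. Residue bookkeeping works out because $|u'_1|\equiv r_1\pmod{\ell d}$ (this is exactly why \cref{reg:pad-power} was invoked with a witness of length $r_1$), so the overall residue of position $j$ of $u'_2$ inside $u'$ is $(r_1+j)\bmod\ell d$, matching the target position at residue $r_1+j$ inside the $v_2^\ell$-block. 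The factor of $2$ in the hypothesis that $d$ is a multiple of $2m^3!$ (as opposed to merely $m^3!$, which already sufficed for \cref{reg:pad-power}) provides the extra slack needed to accommodate this boundary adjustment cleanly.
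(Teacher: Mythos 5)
Your decomposition $u = u_1u_2$ at the embedding boundary and the residue bookkeeping $u_1\in\Dclosure[\modord{d}]v_1^{[r_1]}$, $u_2\in\Dclosure[\modord{d}]\lambda^{r_1}(v_2)^{[r_2]}$ are correct. The gap lies in the stitching step: pumping both halves independently via \cref{reg:pad-power} destroys the length congruence that membership in $\Dclosure[\modord{\ell d}](v_1^\ell)^*(v_2^\ell)^*$ requires. After your two applications you have $|u'_1|\equiv r_1\pmod{\ell d}$ and $|u'_2|\equiv r_2\pmod{\ell d}$, hence $|u'|\equiv r_1+r_2\pmod{\ell d}$. You observe $r_1+r_2\equiv 0\pmod d$, but with $r_1,r_2\in[0,d-1]$ this only pins down $r_1+r_2\in\{0,d\}$. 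Whenever the embedding boundary falls strictly inside a length-$d$ block (which does happen: take $d=2$, $v_1=aa$, $v_2=bb$, $u=ab\modord{2}aabb$, so $u_1=a$, $u_2=b$, $r_1=r_2=1$), you get $r_1+r_2=d$ and so $|u'|\equiv d\not\equiv 0\pmod{\ell d}$ for every $\ell>1$. But every element of $(v_1^\ell)^*(v_2^\ell)^*$ has length divisible by $\ell d$, and $\modord{\ell d}$-downward closure preserves length modulo $\ell d$, so $u'\notin\Dclosure[\modord{\ell d}](v_1^\ell)^*(v_2^\ell)^*$. There is no $\modord{\ell d}$-embedding to be found, no matter how you route positions; the total lengths already disagree. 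Your claim that the extra factor of $2$ in $2m^3!$ "accommodates this boundary adjustment" is not substantiated --- your argument never uses that slack.

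The paper's proof resolves exactly this difficulty by isolating a single length-$d$ border block $st$ (with $|s|=r$ on the $v_1$ side and $|t|=d-r$ on the $v_2$ side), pumping every pure length-$d$ block up to length $\ell d$, but pumping only \emph{one} of $s,t$ --- whichever has length $\ge d/2$ --- while holding the other fixed. The pumped piece then has length $|s|+(\ell-1)d$ (resp.\ $|t|+(\ell-1)d$), so the combined border block has length exactly $d+(\ell-1)d=\ell d$ and the residues close up. This is precisely where $d\ge 2m^3!$ is used: it guarantees $d/2\ge m^3!\ge m\cdot\pi_d(v_i)$, so \cref{reg:pump-word-up} applies to the longer half. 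Your version pumps both sides of the border, contributing $\ell d-r_1$ plus $\ell d-r_2$ of wasted slack instead of the single $\ell d-d$ the residue arithmetic can absorb. To repair your argument you would need to treat the border block as a unit and pump only one of its two halves, which is essentially a rediscovery of the paper's construction.
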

\begin{proof}
Let $K=\Dclosure[\modord{d}] v_1^*v_2^*$. Observe that $K$ consists precisely of
the words of the form $u=x_1\cdots x_p st y_1\cdots  y_q$, where for some $r\in[0,d-1]$,
\begin{itemize}
\item $x_i\in\Dclosure[\modord{d}] v_1^*$ and $x_i\in \Sigma^d$ for $i\in[1,p]$,
\item $s\in \Dclosure[\modord{d}] v_1^{[r]}$ and $|s|=r$,
\item $t\in\Dclosure[\modord{d}] \lambda^r(v_2)^{[d-r]}$, and $|t|=d-r$, and
\item $y_i\in\Dclosure[\modord{d}] v_2^*$ and $y_i\in\Sigma^d$ for $i\in[1,q]$.
\end{itemize}
On the one hand, all such words belong to $\Dclosure[\modord{d}]
v_1^*v_2^*$: The parts $s$ and $t$ arise when dropping length-$d$
blocks on the border between $v_1^*$ and $v_2^*$. On the other hand,
by induction on the number of deleted length-$d$ blocks, it follows
that any word in $\Dclosure[\modord{d}] v_1^*v_2^*$ is of that shape.

Since $|s|+|t|=d$, we have either $|s|\ge d/2$ or $|t|\ge d/2$. We
treat the case that $|s|\ge d/2$, the other case is analogous. 

We apply \cref{reg:pump-word-up} to each factor $x_1,\ldots,x_p,s,y_1,\ldots,y_q$.
Note that this is possible because each of these words has length either exactly $d$ or $\ge d/2$ and we have $\ge d/2\ge m^3!\ge m^3\ge m\cdot\pi_d(v_i)$ for $i=1,2$. 
This yields words $x'_1,\ldots,x'_p,s',y'_1,\ldots,y'_q$ such that
\begin{itemize}
\item  $x'_i\in\Dclosure[\modord{\ell\cdot d}] (v_1^\ell)^{[0]}$ for $i\in[1,p]$,
\item $s'\in \Dclosure[\modord{\ell\cdot d}] (v_1^\ell)^{[r']}$, where 
$r'=r+(\ell-1)d$,
\item  $y'_i\in\Dclosure[\modord{\ell\cdot d}] (v_2^\ell)^{[0]}$ for $i\in[1,q]$,
\item $\cA$ accepts $u'=x'_1\cdots x'_ps'ty'_1\cdots y'_q$.
\end{itemize}
Recall that $t\in \Dclosure[\modord{d}] \lambda^r(v_2)^{[d-r]}$. This means
$\kappa_d(t)\subseteq \kappa_d(\lambda^r(v_2))$ and hence
\[\kappa_d(t)\subseteq\kappa_d(\lambda^r(v_2)^\ell)=\kappa_d(\lambda^r(v_2^\ell)) \]
(recall that $\lambda^r(w)^\ell=\lambda^r(w^\ell)$ for every word $w$).
Therefore, we also have
\begin{equation} \kappa_{\ell\cdot d}(t)\subseteq\kappa_{\ell\cdot d}(\lambda^r(v_2^\ell)). \label{bordercutting}\end{equation}
Note that since $\pi_{\ell\cdot d}(v_2^\ell)=\pi_d(v_2)$ divides $d$, we can rotate the word $v_2^\ell$
by a multiple of $d$ without changing its image under $\kappa_{\ell\cdot d}(\cdot)$. Hence 
\[ \kappa_{\ell\cdot d}(\lambda^r(v_2^\ell))=\kappa_{\ell\cdot  d}(\lambda^{r+(\ell-1)d}(v_2^\ell)) \]

Together with \cref{bordercutting}, we may conclude that $t$ belongs
to $\Dclosure[\modord{\ell\cdot d}]
(\lambda^{r+(\ell-1)d}(v_2^\ell))^{[d-r]}$ according to \cref{reg:ideal-charact-ext}.
Therefore, the above characterization of $K$, adapted to
$\Dclosure[\modord{\ell\cdot d}] (v_1^\ell)^*(v_2^\ell)^*$, is
satisfied for the word $u'$ and hence
$u'\in\Dclosure[\modord{\ell\cdot d}] (v_1^\ell)^*(v_2^\ell)^*$.
\end{proof}

\begin{lem}\label{reg:pump-single-ideal}
  Let $\cA$ be a finite automaton with $\le m$ states and let $d$ be a
  multiple of $2m^3!$.  If $u_0v_1^{[r_1]}u_1\cdots v_n^{[r_n]}u_n$ is
  an irreducible extended loop pattern with $\pi_d(v_i)\le m^2$ such
  that its ideal belongs to $\Adh[\modord{d}]{\langof{\cA}}$, then for
  each $\ell\in\N$, the ideal 
  \begin{equation} \Dclosure[\modord{\ell\cdot d}]  u_0(v_1^\ell)^{[r_1]}u_1\cdots (v_n^\ell)^{[r_n]}u_n \label{pumped-ideal}\end{equation}
  belongs to $\Adh[\modord{\ell\cdot d}]{\langof{\cA}}$.
\end{lem}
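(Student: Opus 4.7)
The plan is to use the ``only if'' direction of \cref{reg:association-ext} for the original pattern and then build, for each $k\in\N$, a concrete witness word in $\langof{\cA}$ that lands in the pumped ideal. Since the pattern $u_0v_1^{[r_1]}u_1\cdots v_n^{[r_n]}u_n$ is irreducible and its ideal belongs to $\Adh[\modord{d}]{\langof{\cA}}$, it is associated to $\langof{\cA}$. Applying the association with $\ell k$ in place of $k$ yields a word $w=\bar{u}_0\bar{v}_1\bar{u}_1\cdots\bar{v}_n\bar{u}_n\in\langof{\cA}$ such that $v_i^{\ell k}w_i\modord{d}\bar{v}_i\in\Dclosure[\modord{d}] v_i^{[r_i]}$ for every $i\in[1,n]$, $\bar{u}_0=u_0$, $\bar{u}_n=u_n$, and for each intermediate $i$ either $u_i\ne\eword$ and $\bar{u}_i=u_i$, or $u_i=\eword$ and $\bar{u}_i\in\Dclosure[\modord{d}]\lambda^{r_i}(v_i)^*v_{i+1}^*$. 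Here $w_i$ denotes the length-$r_i$ prefix of $v_i$, which also coincides with the length-$r_i$ prefix of $v_i^\ell$.

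Next, I upgrade each factor to its pumped counterpart while preserving the global run of $\cA$. For each $i\in[1,n]$, let $\cA_i$ be the sub-automaton of $\cA$ whose only initial (resp. final) state is the state $\cA$ occupies just before (resp. just after) the factor $\bar{v}_i$ in its run on $w$; define $\cB_i$ analogously for $\bar{u}_i$ when $u_i=\eword$. Both sub-automata have at most $m$ states. Applying \cref{reg:pad-power} to $\cA_i$ with the word $v_i^{\ell k}w_i$ produces $\bar{v}'_i\in\langof{\cA_i}$ with $(v_i^\ell)^kw_i=v_i^{\ell k}w_i\modord{\ell\cdot d}\bar{v}'_i\in\Dclosure[\modord{\ell\cdot d}] (v_i^\ell)^{[r_i]}$. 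Applying \cref{reg:pad-border} to $\cB_i$ with $v_1:=\lambda^{r_i}(v_i)$ and $v_2:=v_{i+1}$ produces $\bar{u}'_i\in\langof{\cB_i}$ with $\bar{u}'_i\in\Dclosure[\modord{\ell\cdot d}]\lambda^{r_i}(v_i^\ell)^*(v_{i+1}^\ell)^*$; the hypothesis $\pi_d(\lambda^{r_i}(v_i))\le m^2$ holds because rotation preserves $\pi_d$. For every remaining $\bar{u}_i$ (including $\bar{u}_0$ and $\bar{u}_n$), set $\bar{u}'_i:=\bar{u}_i=u_i$. Let $w':=\bar{u}'_0\bar{v}'_1\bar{u}'_1\cdots\bar{v}'_n\bar{u}'_n$; since each sub-automaton preserves the boundary states of the original run, $w'\in\langof{\cA}$.

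It remains to check that $w'$ witnesses the pumped ideal at level $k$. Multiplicativity of $\modord{\ell\cdot d}$, combined with the fact that each pumped factor $\bar{u}'_i$ for $u_i=\eword$ has length a multiple of $\ell\cdot d$ (hence $\eword\modord{\ell\cdot d}\bar{u}'_i$), yields $u_0(v_1^\ell)^kw_1u_1\cdots(v_n^\ell)^kw_nu_n\modord{\ell\cdot d}w'$. To show $w'\in\Dclosure[\modord{\ell\cdot d}]u_0(v_1^\ell)^{[r_1]}u_1\cdots(v_n^\ell)^{[r_n]}u_n$, combine the individual upper bounds using the rotation identity $w_i\,\lambda^{r_i}(v_i^\ell)^b=(v_i^\ell)^bw_i$ to absorb each $\bar{u}'_i$ (for $u_i=\eword$) into its neighboring powers: from $\bar{v}'_i\modord{\ell\cdot d}(v_i^\ell)^{a_i}w_i$, $\bar{u}'_i\modord{\ell\cdot d}\lambda^{r_i}(v_i^\ell)^{b_i}(v_{i+1}^\ell)^{c_i}$, and $\bar{v}'_{i+1}\modord{\ell\cdot d}(v_{i+1}^\ell)^{a_{i+1}}w_{i+1}$ one gets $\bar{v}'_i\bar{u}'_i\bar{v}'_{i+1}\modord{\ell\cdot d}(v_i^\ell)^{a_i+b_i}w_i(v_{i+1}^\ell)^{c_i+a_{i+1}}w_{i+1}$. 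Stitching all pieces together and using $u_i\modord{\ell\cdot d}u_i$ at non-empty positions delivers the desired embedding of $w'$ into an element of the pumped ideal.

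The main obstacle is the alignment at empty $u_i$: without the irreducibility guarantee from \cref{reg:association-ext} ensuring $\bar{u}_i\in\Dclosure[\modord{d}]\lambda^{r_i}(v_i)^*v_{i+1}^*$, one would have no leverage to invoke \cref{reg:pad-border} and the borderline factors between consecutive loops would be uncontrolled. Once this alignment is in place, the rotation identity and multiplicativity of $\modord{\ell\cdot d}$ reduce the rest to bookkeeping.
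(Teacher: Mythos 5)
Your proof is correct and follows essentially the same route as the paper's: invoke \cref{reg:association-ext} to extract a witness $\bar{w}=\bar{u}_0\bar{v}_1\cdots\bar{v}_n\bar{u}_n$ at level $\ell k$, use \cref{reg:pad-power} on each $\bar{v}_i$ and \cref{reg:pad-border} on each $\bar{u}_i$ with $u_i=\eword$ against sub-automata induced by the boundary states of $\cA$'s run, and assemble a witness $w'\in\langof{\cA}$ for the pumped ideal. The only cosmetic difference is that the paper's last step just observes that $w'$ certifies the pumped pattern is associated, whereas you unwind that observation explicitly (via multiplicativity of $\modord{\ell d}$, the fact that each pumped border factor has length divisible by $\ell d$, and the rotation identity $w_i\,\lambda^{r_i}(v_i^\ell)^{b}=(v_i^\ell)^{b}w_i$); this is exactly the content hidden in the paper's remark that association implies adherence membership.
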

\begin{proof}
  Since $u_0v_1^{[r_1]}u_1\cdots v_n^{[r_n]}u_n$ is irreducible and
  its ideal belongs to $\Adh[\modord{d}]{\langof{\cA}}$, we know from
  \cref{reg:association-ext} that the extended loop pattern is
  associated to $\langof{\cA}$.

  Let $I$ be the ideal in \cref{pumped-ideal}. Let $w_i$ be the length-$r_i$ prefix of $v_i$ for every $i\in[1,n]$.

  In order to show that $I$ belongs to $\Adh[\modord{\ell\cdot
    d}]{\langof{\cA}}$, we have to exhibit for each $k\in\N$ a word
  $w\in\langof{\cA}$ so that $u_0(v_1^\ell)^kw_1u_1\cdots
  (v_n^\ell)^kw_nu_n\modord{\ell\cdot d} w$ and $w\in I$.

  Let $k\in\N$. Because of association, there is a word
  $\bar{w}=\bar{u}_0\bar{v}_1\bar{u}_1\cdots
  \bar{v}_n\bar{u}_n\in\langof{\cA}$ such that for every $i\in[1,n]$,
  we have $v_i^{k\cdot \ell}w_i\modord{d} \bar{v}_i$ and
  $\bar{v}_i\in\Dclosure[\modord{d}] v_i^{[r_i]}$.  Moreover,
  $\bar{u}_0=u_0$, $\bar{u}_n=u_n$, and for each $i\in[1,n-1]$:
  \begin{itemize}
  \item If $u_i$ is not empty, then $\bar{u}_i=u_i$.
  \item If $u_i$ is empty, then $\bar{u}_i\in\Dclosure[\modord{d}] \lambda^{r_i}(v_i)^*v_{i+1}^*$.
  \end{itemize}
  Consider the run of $\cA$ on $\bar{w}$.  Using \cref{reg:pad-power},
  we can choose $\bar{v}'_i$ such that $v_i^{k\cdot
    \ell}w_i\modord{\ell\cdot d} \bar{v}'_i$ and
  $\bar{v}'_i\in\Dclosure[\modord{\ell\cdot\ell}] (v_i^\ell)^{[r_i]}$
  and so that it has a run parallel to $\bar{v}_i$ in $\cA$.
  Now consider $\bar{u}_i$ for $i\in[0,n]$.
  \begin{itemize}
    \item If $\bar{u}_i=u_i$, then choose $\bar{u}'_i=\bar{u}_i=u_i$.
    \item If $\bar{u}_i\ne u_i$, then $u_i$ is empty and
      $\bar{u}_i\in\Dclosure[\modord{d}]
      \lambda^{r_i}(v_i)^*v_{i+1}^*$. Then we use
      \cref{reg:pad-border} to choose $\bar{u}'_i$ such that
      $\bar{u}'_i$ has a run parallel to $\bar{u}_i$ in $\cA$ and
      $\bar{u}'_i\in\Dclosure[\modord{\ell\cdot d}]
      (\lambda^{r_i}(v_i^\ell))^*(v_{i+1}^\ell)^*$.
    \end{itemize}
    Now the resulting word
    $w'=\bar{u}'_0\bar{v}'_1\bar{u}'_1\cdots\bar{v}'_n\bar{u}'_n$ is
    accepted by the automaton $\cA$. This shows that the extended loop pattern
    $u_0(v_1^\ell)^{[r_1]}u_1\cdots (v_n^\ell)^{[r_n]}u_n$ is
    associated to $\langof{\cA}$ and hence the ideal $I$ belongs to
    $\Adh[\modord{\ell\cdot d}]{\langof{\cA}}$.
\end{proof}

\begin{proof}[Proof of \cref{pump-adherence}]
  Suppose there is an ideal in
  the adherence $\Adh[\modord{d}]{\langof{\cA_i}}$ for $i=1,2$.
  By \cref{reg:smallperiod}, there is a loop pattern $u_0v_1u_1\cdots
  v_nu_n$ for $\cM_d$ such that the ideal $I=\Dclosure[\modord{d}]
  u_0v_1^*u_1\cdots v_n^*u_n$ belongs to
  $\Adh[\modord{d}]{\langof{\cA_i}}$ for $i=1,2$ and $\pi_d(v_i)\le
  m^2$ for every $i\in[1,n]$. Using \cref{reg:make-extended-irreducible},
  we can construct an irreducible extended loop pattern
  \[ \bar{u}_0\bar{v}_1^{[r_1]}\bar{u}_1\cdots
  \bar{v}_n^{[r_n]}\bar{u}_n \] that induces $I$ and satisfies
  $\pi_d(\bar{v}_i)\le m^2$ for $i\in[1,n]$.  Now
  \cref{reg:pump-single-ideal} tells us that the ideal
  \[ \Dclosure[\modord{\ell\cdot d}]
  \bar{u}_0(\bar{v}_1^\ell)^{[r_1]}\bar{u}_1\cdots
  (\bar{v}_n^\ell)^{[r_n]}\bar{u}_n \] belongs to
  $\Adh[\modord{\ell\cdot d}]{\langof{\cA_i}}$ for $i=1,2$.
\end{proof}

\subsection{Proof of \cref{undecidability:reduction}}
\begin{proof}[Proof of \cref{undecidability:reduction}]
Recall that the \emph{Post Correspondence Problem} asks, given two
morphisms $\alpha,\beta\colon\Sigma^*\to\{1,2\}^*$, whether there
is a word $w\in\Sigma^+$ such that $\alpha(w)=\beta(w)$.  The standard
undecidability proof~\cite{Sipser2012introduction} constructs, given a Turing machine $M$,
morphisms $\alpha,\beta$ such that for $w\in\Sigma^*$,  any common
prefix of $\alpha(w)$ and $\beta(w)$ encodes a prefix of a computation
history of $M$. 
For our decidable set $D$, there exists a fixed terminating Turing
machine, so we can proceed as follows. Given a word $u\in D$, we can apply
this construction to compute in polynomial time morphisms
$\alpha,\beta\colon\Sigma^*\to\{1,2\}^*$ such that
\begin{enumerate}
\item[(i)] $u\in D$ iff there is a $w\in\Sigma^+$ with $\alpha(w)=\beta(w)$ and
\item[(ii)]\label{undecidability:length-bound} there exists $k\in\N$ so that for every $w\in\Sigma^*$, the words $\alpha(w)$ and $\beta(w)$
have no common prefix longer than $k$.
\end{enumerate}
We claim that $u\in D$ if and only if $L_{\alpha,\beta}$ and $E$ are
separable by $\BSS[\MOD]$.  Clearly, if $u\in D$, then the languages
$L_{\alpha,\beta}$ and $E$ intersect and cannot be separable.
Suppose $u\notin D$. Then (ii) implies that
$L_{\alpha,\beta}$ is included in
\vspace{-0.25cm}
\begin{multline*}
S_k=\{ a^rcb^s \mid r\not\equiv s\bmod 2^{k+1} \} \\
\cup \{a^rcb^s \mid \min(r,s)<2^{k+1}-1, r\ne s\}
\end{multline*}
because $x,y\in\{1,2\}^*$, $|x|,|y|>k$, have a common prefix of length
$>k$ iff $\nu(x)\equiv\nu(y)\bmod{2^{k+1}}$. Moreover, for
$x\in\{1,2\}^*$, we have $|x|\le k$ iff $\nu(x)<2^{k+1}-1$. Since $S_k$
is clearly definable in $\BSS[\MOD]$ and disjoint from $E$, this shows
that $L_{\alpha,\beta}$ and $E$ are separable by $\BSS[\MOD]$.
\end{proof}

\end{document}